\definecolor{mediumgreen}{RGB}{0,150,0}
\definecolor{sky}{RGB}{50,150,255}
\newcommand{\myparagraph}[1]{\noindent{\textbf{#1.}}}
\mathchardef\mhyphen="2D
\newcommand{\cL}{\mathcal{L}}
\newcommand{\cP}{\mathcal{P}}
\newcommand{\Message}[1]{\ensuremath{\langle}#1\ensuremath{\rangle}}
\algnewcommand{\Returns}[1]{\textbf{returns} {#1}}
\algnewcommand{\WaitFor}{\textbf{wait for} }
\algnewcommand{\WaitUntil}{\textbf{wait until} }
\algnewcommand{\Send}[2]{\textbf{send} $\langle$#1$\rangle$ \textbf{to} #2}
\algnewcommand{\IfThenElse}[3]{% \IfThenElse{<if>}{<then>}{<else>}
    \algorithmicif\ #1\ \algorithmicthen\ #2\ \algorithmicelse\ #3}
\algnewcommand{\IfThen}[2]{% \IfThen{<if>}{<then>}
    \algorithmicif\ #1\ \algorithmicthen\ #2}
\algnewcommand{\TriggerEvent}[1]{\textbf{trigger event }#1}
\algnewcommand{\Parameters}[1]{\State \textbf{Instance parameters: }#1}
\newcommand{\PREPARE}{\textbf{PROPOSE}}
\newcommand{\UseBoldForGoodMathEntries}{1}
\newcommand{\UseBoldForGoodTextEntries}{1}
\newcommand{\cI}{\mathcal I}
\newcommand{\cO}{\mathcal O}
\newcommand{\cQ}{\mathcal Q}
\newcommand{\cT}{\mathcal T}
\newcommand{\Start}{\textit{start}}
\newcommand{\End}{\textit{end}}
\newcommand{\Duration}{\textit{duration}}
\newcommand{\OpCall}{\textit{aReq}}
\newcommand{\OpResp}{\textit{aRep}}
\newcommand{\Op}{\textit{op}}
\newcommand{\Data}{\textit{data}}
\newcommand{\NTR}{\textbf{NTR}}
\newcommand{\LCC}{\textbf{LCC}}
\newcommand{\IRA}{\textbf{IRA}}
\newcommand{\CR}{\textbf{CR}}
\newcommand{\Round}{\textbf{Round}}
\newcommand{\remove}[1]{}
\definecolor{mediumgreen}{RGB}{0,150,0}
\definecolor{darkgreen}{RGB}{0,100,0}
\definecolor{sky}{RGB}{50,150,255}
\newcommand{\GoodEntry}[1]{\textcolor{mediumgreen}{\ifthenelse{\UseBoldForGoodTextEntries=1}{\textbf{#1}}{#1}}}
\newcommand{\GoodMathEntry}[1]{\textcolor{mediumgreen}{\ensuremath{\ifthenelse{\UseBoldForGoodMathEntries=1}{\bm{#1}}{#1}}}}
\newcommand{\AlmostGoodEntry}[1]{\textcolor{orange}{\ifthenelse{\UseBoldForGoodTextEntries=1}{\textbf{#1}}{#1}}}
\newcommand{\AlmostGoodMathEntry}[1]{\textcolor{orange}{\ensuremath{\ifthenelse{\UseBoldForGoodMathEntries=1}{\bm{#1}}{#1}}}}
\def\REVIEW{1}  %% Enables review commands such as \xxrev
\title{Asynchronous Latency and Fast Atomic Snapshot}
\author{João Paulo Bezerra}{LTCI, Télécom Paris, Institut Polytechnique de Paris, France}{joaopba01@gmail.com}{}{}
\author{Luciano Freitas}{Nomadic Labs, Paris, France}{lucianofdz@gmail.com}{}{}
\author{Petr Kuznetsov}{LTCI, Télécom Paris, Institut Polytechnique de Paris, France}{petr.kuznetsov@telecom-paris.fr}{}{}
\author{Matthieu Rambaud}{LTCI, Télécom Paris, Institut Polytechnique de Paris, France}{matthieu.rambaud@telecom-paris.fr}{}{}
\authorrunning{J. P. Bezerra, L. Freitas, P. Kuznetsov, and M. Rambaud}
\keywords{Asynchronous systems, time complexity, atomic snapshot, crash faults}
\begin{document}

\maketitle

\begin{abstract}
This paper introduces a novel, \emph{fast} atomic-snapshot protocol for asynchronous message-passing systems. 
In the process of defining what ``fast'' means exactly, we spot a few interesting issues that arise when conventional time metrics are applied to long-lived asynchronous algorithms.   
We reveal some gaps in latency claims made in earlier work on snapshot algorithms, which hamper their comparative time-complexity analysis.   
We then come up with a new unifying time-complexity metric that captures the latency of an operation in an asynchronous, long-lived implementation. 
This allows us to formally grasp latency improvements of our atomic-snapshot algorithm with respect to the state-of-the-art protocols: optimal latency in fault-free runs without contention, short constant latency in fault-free runs with contention, the worst-case latency proportional to the number of active concurrent failures, and constant, \emph{amortized} latency. 
\end{abstract}

\section{Introduction}
The \emph{distributed snapshot} abstraction~\cite{chandyL85as,mattern93as} allows us to determine a consistent view of the global system state.
Originally proposed in the asynchronous fault-free message-passing context, it was later cast to shared-memory models~\cite{atomic-snapshot} as a vector of shared variables, exporting an \emph{update} operation that writes to one of them and a \emph{snapshot} operation that returns the current vector state.
%
%Seemingly more powerful than conventional read-write registers, 
Atomic snapshot can be implemented from conventional read-write registers in a \emph{wait-free} manner, i.e., tolerating unpredictable delays or failures of any number of processes.    
By applying the reduction from shared memory to message-passing~\cite{ABDpaper}, one can get an asynchronous distributed atomic-snapshot implementation that tolerates up to a minority of faulty processes.  
The \emph{atomic-snapshot object} (ASO) is, in a strong sense, equivalent to \emph{lattice agreement} (LA)~\cite{la95,faleiro2012podc}%
\footnote{Lattice agreement can be seen as a weak version of consensus, where decided values form totally ordered \emph{joins} of proposed values in a \emph{join semi-lattice}.}: one can implement the other with no time overhead.
A long line of results improve time and space complexities of ASO and LA algorithms in shared-memory~\cite{AspnesC13,AspnesACE15,AttiyaEF11} and message-passing~\cite{faleiro2012podc,imbs2018set,ASOconstant,delporte2018implementing,garg2020amortized} models.   

In this paper, we focus on the \emph{latency} of operations in message-passing ASO implementations. 
We propose an LA (and, thus, ASO) algorithm that is \emph{faster} than (or matches) state-of-the-art solutions in all execution scenarios: with or without failures and with or without contention.  
The comparative analysis of our algorithm with respect to the existing work appeared to be challenging: as we show below, earlier work considered diverging metrics and execution scenarios, and sometimes used over-simplified reasoning.
We observed that conventional metrics~\cite{canetti1993fast,ABDpaper,abraham-metric} are not always suitable for \emph{long-lived} asynchronous algorithms. 
Besides, prior latency analyses of ASO and LA algorithms~\cite{faleiro2012podc,imbs2018set,ASOconstant,delporte2018implementing,garg2020amortized} used different ways to measure time, which complicated the comparison.  
We therefore propose a unifying time-complexity analysis of prior asynchronous ASO and LA algorithms with respect to a new metric, which we take as a contribution on its own.

\begin{table*}[htp]
     \begin{center}
 %\resizebox{\columnwidth}{!}{
  \begin{tabular}{||c|c|c|c|c||}
  \hline
   & \makecell[c]{Fault-free \\ w/o contention} & \makecell[c]{Fault-free \\ w/ contention} & Worst-case & \makecell[c]{Amortized \\ constant}\\
  \hline
 
  \hline
 Faleiro et al.~\cite{faleiro2012podc} & $2$ & $16$ & $O(k)$ & yes\\ \hline
 Imbs et al.~\cite{imbs2018set} & $2$ & $O(n)$ & $O(n)$ & no\\ \hline
 Garg et al.~\cite{garg2020amortized} & $\ge 6$ & $\ge 8$ & $O(k)$ & yes \\ \hline
 Garg et al.~\cite{garg2020amortized} + Zheng et al.~\cite{zheng2019opodis} & $O(\log n)$ & $O(\log n)$ & $O(\log n)$ & no\\ \hline
 Delporte et al.~\cite{delporte2018implementing} & $2$ & $O(n)$ & $O(n)$ & no \\ \hline
 This paper & $2$ & $8$ & $O(k)$ & yes\\ \hline
 \end{tabular}
 %}
 \end{center}
     \caption{Comparative time complexity of atomic-snapshot algorithms in asynchronous message-passing models. The table shows results for Single-Writer Multi-Reader ($\SW\MR$) implementations.}
     \label{tab:compASO}
\end{table*}

%For the analysis, we first needed to fix a time-complexity metric. 
%
Lamport~\cite{lamport2006lower} proposed to measure time in asynchronous systems as the length of the longest chain of \emph{causally related} messages, the metric used to to determine the best-case latency of consensus~\cite{lamport2006lower} and Crusader Agreement~\cite{abraham2023round}.  
However, as we show in this paper, the metric may produce counter-intuitive results for protocols involving all-to-all communication. For instance, in the failure-free case, the $n$-process reliable-broadcast~\cite{cachin2011introduction} exhibits a causal chain of $n$ hops, even though, intuitively one expects it to terminate in one.     

Building upon the classical approach by Canetti and Rabin~\cite{canetti1993fast}, Abraham et al.~\cite{abraham-metric} recently proposed an elegant metric to grasp the good-case latency of broadcast protocols. 
We observe, however, that the metric does not really apply to executions of \emph{long-lived} abstractions, which may contain \emph{holes} -- periods of inactivity when no protocol messages are in transit.
Moreover, we get diverging results when applying~\cite{abraham-metric} and~\cite{canetti1993fast} to \emph{operation latency}, i.e., the time between invocation and response events of a given operation.

We therefore extend the round-based approach to long-lived abstractions (such as ASO and LA) and establish a framework to measure the time between arbitrary events, subsequently showing that the results align with those from earlier classical metrics \cite{ABDpaper,canetti1993fast}.

To summarize, our main contribution is a novel LA (and, thus, ASO) protocol that is generally \emph{faster} than prior solutions, i.e., it exhibits shorter latency of its operations in various scenarios.
In our complexity analysis, we compared our protocol  to the original long-lived LA algorithm by Faleiro et al.~\cite{faleiro2012podc}\footnote{We consider the ASO protocol built atop the lattice agreement protocol proposed in~\cite{faleiro2012podc}.}, the first direct message-passing ASO implementation by Delporte et al.~\cite{delporte2018implementing}, the ASO algorithm based on the \emph{set-constraint broadcast} by Imbs et al.~\cite{imbs2018set}, and the ASO algorithms by Garg et al. based on generic construction of ASO from \emph{one-shot} LA with constant latency in fault-free runs~\cite{garg2020amortized} or $\log{n}$ worst-case latency by Zheng et al.~\cite{zheng2019opodis} (where $n$ is the number of processes).

As shown in Table~\ref{tab:compASO},   
in a fault-free run, the latency of an operation of our protocol is the optimal two rounds if there is no contention and eight rounds in the presence of contention %(including the duration a value can remain buffered before being proposed), 
(four rounds if we ignore the “buffering” period when a value is submitted but not yet proposed),
regardless of the number of contending operations.
Moreover, the worst-case latency of our algorithm is proportional to the number of \emph{active failures} $k$, i.e., the number of faulty processes whose messages are received within the operation's interval, therefore the \emph{amortized} latency (averaged over a large number of operations in a long-lived execution) converges to the fault-free constant.  

Our protocol can be seen as a novel combination of techniques employed separately in prior work.
These include the use of \emph{generalized} (long-lived) lattice agreement as a basis for ASO~\cite{kuznetsov2019opodis}, the \emph{helping} mechanism where all the learned lattice values are shared~\cite{kuznetsov2019opodis}, relaying of messages to all replicas instead of quorum-based rounds~\cite{imbs2018set,garg2020amortized,dag21,narwal22}, and buffering proposed values until previous proposals get committed~\cite{faleiro2012podc}.
Similar to earlier proposals~\cite{faleiro2012podc}, our algorithm involves $O(n^2)$ (all-to-all) communication, which is compensated by its constant (amortized) latency.   
An interesting open question is whether one can reduce the communication cost in good runs, while maintaining constant amortized latency. 

The paper is organized as follows. In Section~\ref{sec:systemModel}, we present our model assumptions, and in Section~\ref{sec:definitions}, we state the problem of atomic snapshot and relate it to generalized lattice agreement. 
In Section~\ref{sec:ourProtocol}, we present our protocol and analyze its correctness. 
In Section~\ref{sec:measuringASO}, we discuss several gaps in the complexity analyses of earlier work. 
In Section~\ref{sec:timeMeasure}, we present a comparative analysis of time metrics. 
Certain proofs and a detailed discussion of time complexity of earlier protocols are delegated to the appendix.

\section{System Model}
\label{sec:systemModel}
\myparagraph{Processes and Channels}
We consider a system of $n$ \emph{processes} (or \emph{nodes}).
Processes communicate by exchanging messages $m = (s,r,\Data)$ with a sender $s$, a receiver $r$, and a message content $\Data$.

A process is an automaton modeled as a tuple $(\cI,\cO,\cQ,q_0,\pi)$, where $\cI$ is a set of inputs (messages and application calls) it can receive, $\cO$ is a set of outputs (messages and application responses), $\cQ$ is a (potentially infinite) set of possible internal states, $q_0 \in \cQ$ is an initial state and $\pi: 2^{\cI}\times\cQ \rightarrow 2^{\cO}\times\cQ$ is a transition function mapping a set of inputs and a state to a set of outputs and a new state.
Each process $i$ is assigned an algorithm $A_i$ which defines $(\cI,\cO,\cQ,q_0,\pi)$, a \emph{distributed algorithm} is an array $[A_1,...,A_n]$.

\myparagraph{Events and Configurations}
Application calls and responses are tuples $(i,\OpCall)$ and $(i,\OpResp)$ with a process identifier, a request, and a reply respectively.

An \emph{event} $e$ is a tuple $(R,P,S)$ where $R$ is a set of received messages and/or application calls, $P$ is the set of nodes producing the event and $S$ is a set of messages sent and/or application responses.
We denote $\receive(e)$ as the set of messages received in the event, conversely, $\send(e)$ is the set of messages sent.
A \emph{message hop} is a pair $(e, e')$ in which $e'$ receives at least one message that was sent in $e$.

Messages in transit are stored in the \emph{message buffer}.\footnote{We assume that every message in the message buffer is unique.}
A \emph{configuration} $C$ is an $(n+1)$-array $[M,s_1,...,s_n]$ with the buffer's state $M=C[0]$ and the local state $s_i=C[i]$ of each node $i$ ($i=1,\ldots,n$).
Let $C_0$ denote the \emph{initial configuration} in which every $s_i$ is an initial state and the buffer $M$ is empty.

\myparagraph{Executions}
An \emph{execution} (or \emph{run}) is an alternating sequence $C_0e_1C_1e_2...$ of configurations and events, where for each $j>0$ and $i=1,\ldots,n$:

\begin{enumerate}
    \item $\receive(e_j) \subseteq C_{j-1}[0]$;
    \item $e_j.S$ consists of messages and application outputs that the nodes in $e_j.P$ produce, given their algorithms, their states in $C_{j-1}$ and their inputs in $e_j.R$; the nodes in $e_j.P$ carry their states from $C_{j-1}$ to $C_j$, accordingly;     
    \item for the nodes $i\notin e_j.P$, $C_{j-1}[i]=C_j[i]$.  
\end{enumerate}

Each triple $C_{j-1}e_jC_j$ is called a \emph{step}.
In this paper, we consider algorithms defined by deterministic automata, and we assume a default initial state. 
Thus, we sometimes skip configurations and simply write $e_1 e_2 \ldots$. 

In an \emph{infinite} execution, a process is \emph{correct} if it takes part in infinitely many steps, and 
\emph{faulty} otherwise.
We only consider infinite executions in which $f < n/2$, where $f$ is the number of faulty processes and $n$ is the total number of processes.
Moreover, in an infinite execution, messages exchanged among correct processes are eventually received, i.e., if there is an event $e$ from a correct process sending a message $m$ to another correct process, then there is $e'$ succeeding $e$ such that $m \in \receive(e')$.

We also assume that the communication channels neither alter nor create messages.  
Finally, we assume that the channels are \emph{FIFO}: messages from a given source to a given destination arrive in the order they were sent.
A FIFO channel can be implemented by attaching sequence numbers to messages, without extra communication or time overhead.

\section{Lattice Agreement and Atomic Snapshot}
\label{sec:definitions}
 
\subsection{Lattice Agreement}
A \emph{join semi-lattice} is defined as a tuple $(\cL, \sqsubseteq)$, where  $\sqsubseteq$ is a partial order on a set $\cL$, such that 
for any pair of values $u$ and $v$ in $\cL$, there exists a unique \emph{least upper bound} $u \sqcup v \in \cL$ ($\sqcup$ is called the \emph{join} operator).
Also, $u$ and $v$ in $\cL$ are said to be \emph{comparable} if $u \sqsubseteq v \vee v \sqsubseteq u$.

The (generalized) \emph{Lattice Agreement} abstraction {\LA}~\cite{faleiro2012podc} defined over $(\cL, \sqsubseteq)$ can be accessed by every node with operation $\Propose(v)$, $v\in\cL$ (we say that the node \emph{proposes $v$}) which triggers the reply event $\Learn(w)$ (we say that the node \emph{learns $w$}).
Each node may invoke $\Propose$ any number of times but does so \emph{sequentially}, that is, it initiates a new operation only after the previous one has returned.\footnote{Following~\cite{kuznetsov2019opodis}, without loss of generality, we slightly modified the conventional LA interface~\cite{la95,faleiro2012podc} by introducing the explicit $\Propose$ operation that combine proposing and learning the values, the properties of the abstraction are adjusted accordingly.}
The abstraction must satisfy:

\begin{definition}[Lattice Agreement (LA)]
\label{def:LA}
\leavevmode
\begin{itemize}
    \item{\bf \GValidity.} Any value learned by a node is the join of some set of proposed values that includes its last proposal.
    \item{\bf \GStability.} The values learned by any node increase monotonically, with respect to $\sqsubseteq$.
    \item{\bf \GConsistency.} All values learned are comparable, with respect to $\sqsubseteq$.
    \item{\bf \GLiveness.} If a correct node proposes $v$, it eventually learns a value $w$.
\end{itemize}
\end{definition}

\subsection{Atomic Snapshot Object (ASO)}

An atomic snapshot object (ASO) stores a vector of values $R = [r_1,...,r_m]$ and exports two operations: $\update(i,v)$ and $\snapshot()$.
The $\update(i,v)$ operation writes the value $v$ in $R[i]$ and returns $\OK$, and $\snapshot()$ returns the entire vector $R$.
An ASO implementation guarantees that every operation invoked by a correct process eventually completes.
It also ensures that each of its operations appears to take effect in a single instance of time within its interval, i.e., it is \emph{linearizable}~\cite{HW90-lin}.

\myparagraph{Linearizable executions}
The \emph{history} of an execution $E$ is the subsequence of $E$ consisting of invocations and responses of ASO operations (update and snapshot). 
A history is \emph{sequential} if each of its invocations is followed by a matching response.
An execution is linearizable if, to each of its operation (update or snapshot, except, possibly, for incomplete ones), we can assign an indivisible point within its interval (called a \emph{linearization point}), so that the operations put in the order of its linearizaton points constitute a \emph{legal} sequential history of ASO (called a \emph{linearization}), i.e., every snapshot operation returns a vector where every position contains the last value written to it (using an update operation), or the initial value if there are no such prior updates.
Equivalently, a linearizable execution $E$ with history $H$ should have a linearization $S$, a legal sequential history that (1)~no node can locally distinguish a completion of $H$ and $S$ and (2)~$S$ respect the \emph{real-time order} of $H$, i.e., if operation $op$ completes before operation $op'$ in $H$, then $op'$ cannot precede $op$ in $S$.

We say that an ASO is \emph{single writer} $\SW$ (resp. \emph{multi writer} $\MW$) if for each of its registers $R[i]$, only a single process can call $\update(i,v)$ (resp. every process can call $\update(i,v)$).
In this paper, we focus mostly on $\SW\MR$ atomic snapshot objects. In Table~\ref{tab:compASO} we give results only for $\SW\MR$. A $\MW\MR$ $\ASO$ can be devised from $\SW\MR$ by adding an additional ``read'' phase when updating values (see Section~\ref{subsec:LA-ASO} for more details).

Next, we show that ASO can be implemented on top of LA with no additional overhead.

\subsection{From LA to ASO}
\label{subsec:LA-ASO}

To implement a $\SW\MR$ $\ASO$ on top of $\LA$, we consider a partially ordered set $\cL^*$ of $(m+n)$-vectors (recall that $m$ is the size of the ASO vector and $n$ is the number of nodes), defined as follows. 

A vector position $\ell\in 1,\ldots, m$ is  defined as a tuple $(w,v)\in R_{\ell}$, where $v$ is an element of a value set $V$ equipped with a total order $\leq^V$, and $w \in \mathbb N$ is the number of write operations on position $\ell$.
A total order on $R_{\ell}$ is defined in the natural way: for any two tuples $(w_1,v_1) \leq^{R_{\ell}} (w_2,v_2) \equiv (w_1 < w_2) \vee (w_1=w_2 \wedge v_1 \leq^V v_2)$.
For each process $i=1,\ldots,n$, the vector position $m+i$  stores the number of snapshot operations executed by $i$.

The lattice $\cL^*$ of $(m+n)$-position vectors is then the composition $R_1\times\ldots \times R_{m} \times \mathbb N^n$.  
The partial order $\sqsubseteq^*$ 
on $\cL^*$ is then naturally defined as the compositions of $<^{R_{1}}\times\ldots\times <^{R_{m}}\times \leq^n$.
The composed join operator $\sqcup^*$ is the composition of $\max$ operators, one for each position in the $(m+n)$-position vectors.
The construction implies a join semi-lattice~\cite{kuznetsov2019opodis}.  

In Algorithm~\ref{alg:GLAtoAS}, we show how to implement an $\SW\MR$ atomic snapshot on top of $\LA$ defined over the semi-lattice $(\cL^*,\sqsubseteq^*,\sqcup^*)$.
For simplicity, we assume that $m = n$, i.e., the size of the array is the total number of nodes, and that each node $i$ has a dedicated register $i$ where it can write. 
Elements of $\cL^*$ are then $2n$-vectors. 

When a node $i$ calls $\update($i,v$)$, it increments its local \emph{writing} sequence number $w$ and proposes a $2n$-vector with $(w,v)$ in position $i$ and initial values in all other positions to the $\LA$ object. 
The vector learned from this proposal is ignored.
When the node $i$ calls $\snapshot()$, it increments its local \emph{reading} sequence number $r$ proposes a $2n$-vector with $r$ in position $n+i$ and initial values in all other positions to the $\LA$ object.
The values in the first $n$ positions of the returned vector is then returned as the snapshot outcome.   

\begin{algorithm}
\begin{smartalgorithmic}[1]

\DistributedObjects
    \State $\LA$ instance on $(\cL^*,\sqsubseteq^*,\sqcup^*)$
\EndDistributedObjects

\algspace[0.5]

\Upon{startup}
    \State $w \gets 0$
    \State $r \gets 0$
\EndUpon

\algspace[0.3]

\Operation{\update}{$i,v$}
    \State $w \gets w + 1$
    \State $V \gets$ $2n$-vector with $(w,v)$ in position $i$ and initial values in all other positions
    \State \LA.\Propose$(V)$
\EndOperation

\algspace[0.3]

\Operation{\snapshot}{}
    \State $r \gets r + 1$
    \State $V \gets$ $2n$-vector with $r$ in position $n+i$ and initial values in all other positions
    \State \Return $(\LA.\Propose(V))[1..n]$
\EndOperation

\end{smartalgorithmic}
\caption{$\LA \to \SW\MR$ $\ASO$ conversion.}
\label{alg:GLAtoAS}
\end{algorithm}

Algorithm~\ref{alg:GLAtoAS} can be extended to implement a $\MW\MR$ $\ASO$: to update a position $j$ in the array, a node first takes a snapshot to get the current state, gets up-to-date sequence number in position $j$ and proposes its value with a higher sequence number.
With this modification, the $\update$ operation takes two $\LA$ operations instead of one.
We refer the reader to~\cite{kuznetsov2019opodis} for further details.

\begin{theorem}
Algorithm~\ref{alg:GLAtoAS} implements {\ASO}. 
\end{theorem}
\begin{proof}
We show that every execution of Algorithm~\ref{alg:GLAtoAS} is linearizable.          

Consider an execution of Algorithm~\ref{alg:GLAtoAS}, let $H$ be its history. 
Every operation (snapshot or update) is associated with a unique sequence number and performs a $\Propose$ operation on the $\LA$ object.
If there is an $\LA.\Propose$ operation that returns $(w,v)$ in position $i$, by {\GValidity} of $\LA$, there is an operation $\update(i,v)$ executed by node $i$ with sequence number $w$ that started before the $\LA.\Propose$ completed and invoked a $\LA.$. 
In this case, we say that the $\update$ operation is \emph{successful}.
Notice that by {\GValidity} of $\LA$, the update must have invoked $\LA.\Propose$ with a vector containing $(w,v)$ in position $i$.   

Now we order complete snapshot operations and complete successful $\update$ operations in the order of the values returned by their $\LA.\Propose$ operations (by {\GConsistency} of $\LA$, these values are totally ordered. 
As each of these $\LA.\Propose$ returns a value containing its unique sequence number ({\GStability} of $\LA$) , this order respects the real-time order of $H$.  
A successful $\update$ operation performed by node $i$ with $(w,v)$ in position $i$ that has no complete $\LA.\Propose$ is placed right before the first $\snapshot$ whose $\LA.\Propose$ returns this value.   
By construction, the resulting sequential history is legal and locally indistinguishable from a completion of $H$.  

Finally, ${\GLiveness}$ implies that every operation invoked by a correct process eventually completes.
\end{proof}

\section {LA Protocol}
\label{sec:ourProtocol}

In Algorithm~\ref{alg:longLA}, we describe our protocol for solving $\LA$.
To guarantee amortized constant complexity, the protocol relies on two basic mechanisms, employed separately in earlier work~\cite{faleiro2012podc,kuznetsov2019opodis}. 
First, when a node receives a request (e.g., a value from the application), it first adds the request to a buffer ($\MPool$) and then relays it before starting a proposal.
This ensures that ``idle'' nodes also help in committing the request.
Second, the node relays every learned value so that nodes that are ``stuck'' can adopt values from other nodes.

\subsection{Overview}
The protocol is based on helping: every node tries to commit every proposed value it is aware of. 
As long as the node has \emph{active} proposals that are not yet committed, it buffers newly arriving proposals in the local variable $\MPool$. 
Intuitively, in the worst case, an $\LA.\Propose$ operation has to wait until \emph{one} of the concurrently invoked $\LA.\Propose$ operations complete. 
Once this happens, the currently buffered value is put in the local dictionary $\Pending$ and shared with the other nodes (lines~\ref{line:pending} and~\ref{line:propose1}) via a $\PREPARE$ message.
In turn, the other nodes relay the message to each other (line~\ref{line:prepM}).
The dictionary maps a value to the number of times it is "supported" by the nodes (using $\PREPARE$ messages).
Once a value $v$ in the dictionary assembles a quorum of $n-f$ of $\langle \PREPARE, v \rangle$ messages, i.e., $\Pending[v]\geq n-f$ (line~\ref{line:valValue}), the value is added to the $\Validated$ variable.
Once every value currently stored in $\Pending$ is in $\Validated$ (line~\ref{line:learnCond1}), the operation completes with $\Validated$ as the learned value.
As the final element of the helping mechanism, each process broadcasts every value it learns (lines~\ref{line:sendAcc1} and~\ref{line:endLearnCond2}), ensuring that processes that might otherwise remain ``stuck'' can complete their current proposal.

In summary, the algorithm relies on four main ideas: 1) buffering incoming requests when already proposing, 2) sharing every received proposal so all processes are quickly aware of active ones, 3) initiating a new proposal only after all currently seen proposals have been validated, and 4) broadcasting learned values to help other processes make progress.

\myparagraph{Message complexity} The protocol is comprised of three all-to-all communication phases: processes send and relay requests at lines~\ref{line:sendReq1} and~\ref{line:sendReq2}, proposals at lines~\ref{line:sendProp1} and~\ref{line:prepM}, and accepted values at lines~\ref{line:sendAcc1} and~\ref{line:endLearnCond2}.
The total number of messages is therefore $O(n^2)$.
However, a value in a $\PREPARE$ message can include up to $n$ distinct requests, and a value in a $\ACCEPT$ message may have arbitrary size.
Therefore, in Appendix~\ref{app:refinedAlg}, we present a refined protocol description in which processes exchange $O(n^2)$ messages \emph{per individual request}.
This efficiency is achieved by relaying only the differences between current and previously received proposals and the learned values in phases $2$ and $3$, thus eliminating redundant messages with the same requests.

\begin{algorithm}
{\footnotesize

\begin{smartalgorithmic}[1]

\Upon{\Startup}
    \State $\MPool, \Proposing, \Validated, \Learned \gets \perp$
    \State $\Pending \gets \emptyset$
\EndUpon
\algspace[0.5]

\Operation{\Propose}{$v$}
    \State \SendRequest($v$)
    \State wait until $v \sqsubseteq \Learned$ 
    \State \Return $\Learned$
\EndOperation
\algspace[0.5]

\Operation{\SendRequest}{$v$}
    \State $\MPool \gets \MPool \sqcup v$\label{line:mpool}
    \State \Send{$\REQUEST, v$}{every other node} \label{line:sendReq1}
\EndOperation
\algspace[0.5]

\UponReceiving{$\REQUEST, v$}{a node}
    \If{$v \not\sqsubseteq \MPool \sqcup \Proposing \sqcup \Learned$}
        \State $\MPool \gets \MPool \sqcup v$
        \State \Send{$\REQUEST, v$}{every other node} \label{line:sendReq2}
    \EndIf
\EndUponReceiving
\algspace[0.5]

\UponEvent{$(\MPool \neq \perp) \wedge (\Proposing  = \perp)$}\label{line:startProp}
    \State $\Proposing \gets \MPool$
    \State $\MPool \gets \perp$
    \State $\Pending[\Proposing] \gets 1$ \label{line:pending}
    \State \Send{$\PROPOSE, \Proposing$}{every other node}\label{line:propose1} \label{line:sendProp1}
\EndUponEvent
\algspace[0.5]

\UponReceiving{$\PROPOSE, v$}{a node}
    \If{$v \in \Pending.\keys()$}
        \State $\Pending[v]++$
    \Else
        \State $\Pending[v] \gets 1$
        \State \Send{$\PROPOSE, v$}{every node}\label{line:prepM}
    \EndIf
\EndUponReceiving
\algspace[0.5]

\Upon{exists $v$ s.t. $\Pending[v]=n-f$}\label{line:valValue}
    \State $\Validated \gets \Validated \sqcup v$
\EndUpon
\algspace[0.5]

\UponEvent{$\bigsqcup\Pending.\keys() \sqsubseteq \Validated$}\label{line:learnCond1}
    \If{$\Learned \sqsubset \Validated$}
        \State $\Learned \gets \Validated$
        \State $\Proposing \gets \perp$ \label{line:endProp1}
        \State \Send{$\ACCEPT,\Learned$}{every node} \label{line:sendAcc1}
    \EndIf
\EndUponEvent
\algspace[0.5]

\UponReceiving{$\ACCEPT, w$}{a node}
    \label{line:learnCond2}\If{$(\Proposing \sqcup \Learned \sqsubseteq w)$}
        \State $\Validated \gets \Validated \sqcup w$
        \State $\Learned \gets w$
        \State $\Proposing \gets \perp$ \label{line:endProp2}
        \State \Send{$\ACCEPT,\Learned$}{every node} \label{line:endLearnCond2}
    \EndIf
\EndUponReceiving

\end{smartalgorithmic}
}
\caption{Long-Lived LA: code for node $x$.}
\label{alg:longLA}
\end{algorithm}

\subsection{Correctness}

$\GValidity$ and $\GStability$ are immediate.
We now proceed with $\GConsistency$ and $\GLiveness$.  

\begin{lemma}
\label{lm:comparability}
If nodes $i$ and $j$ learn, resp., values $w_i$ and $w_j$, then $w_i$ and $w_j$ are comparable.

\begin{proof}
Suppose that $(w_i \not\sqsubseteq w_j) \wedge (w_j \not\sqsubseteq w_i)$. 
Then there must exist $v_i \sqsubseteq w_i$ and $v_j \sqsubseteq w_j$ such that $v_i \not\sqsubseteq w_j$ and $v_j \not\sqsubseteq w_i$.

Let $Q_i$ (resp. $Q_j$) be the quorum $i$ used to include $v_i$ $\Validated$ at line~\ref{line:valValue}.
Since $Q_i \cap Q_j \neq \emptyset$, there is a common node $x$ that sent $\langle \PREPARE, v_i \rangle$ to $i$ and $\langle \PREPARE, v_j \rangle$ to $j$, but since channels are FIFO, either $i$ received $v_j$ or $j$ received $v_i$ from $x$ before learning a value, therefore adding the value to $\Pending$.
Suppose it was $i$ that received $v_j$ before $v_i$, from the condition of line~\ref{line:learnCond1}, $i$ could not have learned $w_i$ if $v_j \not\sqsubseteq \Validated$.
\end{proof}
\end{lemma}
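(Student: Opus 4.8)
The plan is to prove the contrapositive: assume nodes $i$ and $j$ learn incomparable values $w_i$ and $w_j$ and derive a contradiction. The first step is to observe that incomparability of $w_i$ and $w_j$ forces the existence of ``witness'' values in the $\Validated$ sets: since $w_i = \Validated$ at the moment $i$ learns (and similarly for $j$), and these are joins of values that passed the quorum check at line~\ref{line:valValue}, I would extract $v_i$ with $v_i \sqsubseteq w_i$ but $v_i \not\sqsubseteq w_j$, and symmetrically $v_j \sqsubseteq w_j$ with $v_j \not\sqsubseteq w_i$. Here I need to be slightly careful that these $v_i, v_j$ can be taken to be exactly the values that were individually added to $\Validated$ via a quorum at line~\ref{line:valValue} (i.e.\ keys that hit count $n-f$ in $\Pending$), not arbitrary elements below $w_i$; the join structure guarantees that if $v_i \sqsubseteq w_i$ then $v_i$ is below the join of the quorum-validated components, so some quorum-validated component $v_i'$ already satisfies $v_i' \not\sqsubseteq w_j$, and I rename it $v_i$.

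The second step is the classical quorum-intersection argument. Let $Q_i$ be the set of $n-f$ nodes whose $\langle \PREPARE, v_i \rangle$ messages caused $i$ to set $\Pending[v_i] = n-f$, and let $Q_j$ be the analogous quorum for $v_j$ at node $j$. Since $|Q_i| = |Q_j| = n-f$ and $f < n/2$, we have $|Q_i| + |Q_j| > n$, so $Q_i \cap Q_j \neq \emptyset$; pick $x$ in the intersection. Node $x$ sent $\langle \PREPARE, v_i \rangle$ to $i$ (directly or via relay) and $\langle \PREPARE, v_j \rangle$ to $j$. The key leverage is the FIFO assumption combined with the relay rule at line~\ref{line:prepM}: when $x$ first sees a $\PREPARE$ for a value not in its $\Pending$ keys, it broadcasts that value to everyone, including $i$ and $j$. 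So $x$ broadcasts both $v_i$ and $v_j$, and FIFO on the $x$-to-$i$ and $x$-to-$j$ channels pins down the relative order in which $i$ receives $v_i$ vs.\ $v_j$ and in which $j$ receives $v_j$ vs.\ $v_i$. Without loss of generality $x$ broadcasts $v_i$ no later than $v_j$; then by FIFO, $i$ receives $v_j$ after it receives $v_i$, and $j$ receives $v_i$ after it receives $v_j$ --- wait, that's not immediately a contradiction, so the WLOG has to be set up so that one of $i,j$ receives the ``other'' value before it learns.

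This is where the main obstacle lies, and it is essentially the step the authors gesture at in the last sentence of their proof: I must argue that at least one of the two receivers gets the foreign value \emph{before} it completes its learn. The cleanest way is to say: $x$ sends $v_i$ to $i$ and $v_j$ to $j$; but also, by the relay rule, $x$ will (eventually) send whichever of $v_i, v_j$ it received second to \emph{everyone}, and in particular $x$ had both $v_i$ and $v_j$ in its $\Pending$ keys at some point. Order the two events at $x$: say $x$ first added $v_i$ to $\Pending$, then later added $v_j$. Then the $\langle \PREPARE, v_i \rangle$ and $\langle \PREPARE, v_j \rangle$ messages $x$ sends to $i$ are in that order on the FIFO channel, and similarly to $j$. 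For $i$: since $i$ counted $x$'s $\langle \PREPARE, v_i\rangle$ toward the quorum for $v_i$ \emph{before} $i$ learned $w_i$, and $x$ sent $v_i$ before $v_j$, FIFO tells us nothing about whether $i$ received $v_j$ before learning. So instead I focus on $j$: $j$ counted $x$'s $\langle \PREPARE, v_j \rangle$ before $j$ learned $w_j$; since $x$ sent $v_i$ \emph{before} $v_j$ on the FIFO channel to $j$, node $j$ received $v_i$ before it received $v_j$, hence before $j$ learned $w_j$. When $j$ receives $\langle \PREPARE, v_i \rangle$ it either already has $v_i \in \Pending.\keys()$ or it adds it; either way $v_i \in \Pending.\keys()$ at $j$ from that point on. But then the guard at line~\ref{line:learnCond1}, $\bigsqcup \Pending.\keys() \sqsubseteq \Validated$, forces $v_i \sqsubseteq \Validated = w_j$ at the moment $j$ learns, contradicting $v_i \not\sqsubseteq w_j$. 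If instead $x$ added $v_j$ before $v_i$, the symmetric argument at node $i$ gives $v_j \sqsubseteq w_i$, the same contradiction. I would also double-check the edge case where $j$ learns $w_j$ via an incoming $\langle \ACCEPT, w \rangle$ (line~\ref{line:learnCond2}) rather than via line~\ref{line:learnCond1}: then $w_j = w$ was itself a value some node learned through the line~\ref{line:learnCond1} path (possibly transitively), so the argument applies to that originating node, or one reduces to it by an induction on the ``first'' learned value in the chain; this bookkeeping is the only genuinely fiddly part.
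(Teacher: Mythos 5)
Your proof is correct and follows essentially the same route as the paper's: contrapositive, extraction of incomparable witness values, quorum intersection at a common node $x$, and the FIFO argument forcing one of $i,j$ to place the other's witness in $\Pending$ before learning. In fact you are more careful than the paper on two points it glosses over — pinning down \emph{which} of the two nodes receives the foreign value first by ordering the events at $x$ (the order in which $x$ added $v_i$ and $v_j$ to $\Pending$ fixes the send order on both FIFO channels), and noting that a learn via the $\ACCEPT$ path of line~\ref{line:learnCond2} bypasses the guard of line~\ref{line:learnCond1} and must be reduced by induction to an originating learn at line~\ref{line:learnCond1}.
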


\begin{lemma}
\label{lm:replicaLiv}
If a correct node $x$ sets $\Proposing = v$, $x$ eventually learns a value with $v$. %with it.

\begin{proof}
A node $x$ sends a $\PREPARE$ message to every other node whenever it adds a new value to $\Pending$ (line~\ref{line:prepM}).
If $x$ is correct, it will receive at least $n-f$ $\PREPARE$ messages for every value in $\Pending$, adding the value to $\Validated$.
Therefore, the condition in line~\ref{line:learnCond1} is never  satisfied from some point on only if $x$ keeps adding a new value to $\Pending$ before all the current ones are validated.

Since each node proposes only one value at a time (until it learns a value, lines~\ref{line:startProp},~\ref{line:endProp1},~\ref{line:endProp2}), for $x$ to indefinitely add new values to $\Pending$, there must be at least one other node that keeps learning values and proposing new ones.
Without loss of generality, let $y$ be one such node. 
Since faulty nodes eventually crash and stop taking steps, $y$ must be correct.
Every time $y$ learns a new value $w$ it sends $\langle \ACCEPT, w \rangle$ to $x$, and because channels are FIFO, $x$ receives the $\ACCEPT$ message before the new value proposed by $y$.
Eventually (because $x$ sent its proposal to $y$), one of the received values $w$ contains $x's$ $\Proposing$ and the condition on line~\ref{line:learnCond2} is satisfied, $x$ then learns $w$.
\end{proof}
\end{lemma}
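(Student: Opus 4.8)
The plan is to show the contrapositive-style contradiction carefully. Assume for contradiction that correct node $x$ sets $\Proposing \neq \perp$ at some step but never learns a value that includes this proposal. First I would establish that, because $x$ is correct, it receives $n-f$ copies of $\langle \PREPARE, v\rangle$ for every $v$ that it ever puts into $\Pending$: whenever a node first adds $v$ to $\Pending$ it relays $\langle \PREPARE, v\rangle$ to everyone (line~\ref{line:prepM}), and every correct node does likewise upon first receipt, so at least $n-f$ nodes emit such a message and $x$ receives them all. Hence every key of $\Pending$ is eventually added to $\Validated$ (line~\ref{line:valValue}). Consequently, the only way the enabling condition of line~\ref{line:learnCond1} (namely $\bigsqcup \Pending.\keys() \sqsubseteq \Validated$) fails to hold in a stable way is if $x$ keeps inserting \emph{new} keys into $\Pending$ forever --- a ``moving target'' argument. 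This is the step I expect to require the most care: I must argue that a $\PREPARE$ message carrying a new key is only generated as a downstream consequence of some node committing and re-proposing, so an unbounded stream of new keys forces an unbounded stream of learn/propose cycles somewhere in the system.

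Next I would formalize that ``somewhere''. Since each node proposes only one value at a time and clears $\Proposing$ only upon learning (lines~\ref{line:startProp}, \ref{line:endProp1}, \ref{line:endProp2}), an infinite sequence of distinct $\PREPARE$ keys arriving at $x$ must originate, directly or via relaying, from an infinite set of $\langle \PREPARE, \cdot\rangle$ first-emissions, each triggered by line~\ref{line:propose1} at some node that had just set $\Proposing$ after learning. Only finitely many of those emissions can come from faulty nodes (they eventually stop taking steps), so there is a correct node $y$ that learns new values and issues new proposals infinitely often. I would fix one such $y$.

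Then I would exploit the FIFO property together with the fact that $x$ sent its own $\REQUEST$ (hence its proposal eventually reaches everyone, in particular propagates into $y$'s $\MPool$ and then $\Proposing$) to conclude that some value $w$ that $y$ learns satisfies $x$'s $\Proposing \sqsubseteq w$. Concretely: once $x$'s proposed value $v_x$ has entered $y$'s proposing chain, every subsequent value $y$ learns is an upper bound of $v_x$ by $\GValidity$ and monotonicity; $y$ sends $\langle \ACCEPT, w\rangle$ to $x$ for each such $w$, and FIFO guarantees $x$ sees these $\ACCEPT$ messages before any later $\PREPARE$ from $y$. When $x$ processes such an $\ACCEPT$ with $\Proposing \sqcup \Learned \sqsubseteq w$, the guard on line~\ref{line:learnCond2} fires and $x$ sets $\Learned \gets w \sqsupseteq \Proposing$, contradicting the assumption that $x$ never learns a value containing its proposal. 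I would also note the degenerate case where line~\ref{line:learnCond1} itself eventually fires at $x$ (no more new keys): then $x$ learns directly, again a contradiction.

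The main obstacle, as flagged, is making the ``$x$ keeps adding new values to $\Pending$ forever $\Rightarrow$ some correct node learns/proposes infinitely often'' step airtight, since new keys can also arise from pure relaying chains; the key observation to push through is that every $\PREPARE$ message is, at its source, caused by a line~\ref{line:propose1} emission, and each such emission is preceded by a learn event at that source, so a relay chain cannot manufacture infinitely many genuinely new keys without an infinite supply of learn/propose events upstream. Pinning down that the relayed value equals the originally proposed value (so FIFO on the $y$-to-$x$ channel orders $\ACCEPT$ before the ``next'' $\PREPARE$) is the delicate bookkeeping I would spell out.
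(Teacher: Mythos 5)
Your proposal is correct and follows essentially the same route as the paper's proof: validation of every $\Pending$ key via $n-f$ $\PREPARE$ messages, the ``moving target'' reduction to some correct node $y$ that learns and proposes infinitely often, and the FIFO ordering of $y$'s $\ACCEPT$ before its next $\PREPARE$ so that line~\ref{line:learnCond2} eventually fires at $x$. Your version is somewhat more careful than the paper's on the step that new $\Pending$ keys must trace back to line~\ref{line:propose1} emissions preceded by learn events, but the underlying argument is the same.
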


\begin{lemma}
\label{lm:callProp}
    If a correct node calls $\Propose(v)$, it eventually sets $\Proposing = v'$, $v \sqsubseteq v'$.
\end{lemma}

\begin{proof}
    Let a correct node $x$ call $\Propose(v)$, $x$ then includes $v$ in $\MPool$ (line~\ref{line:mpool}).
    If $x$ is not currently proposing, that is, the current value of $\Proposing$ is $\perp$, then it meets the condition in line~\ref{line:startProp} and immediately sets $\Proposing = \MPool$.
    Otherwise, by Lemma~\ref{lm:replicaLiv}, it eventually learns a value and sets $\Proposing = \perp$ in lines~\ref{line:endProp1} and~\ref{line:endProp2}, thus meeting the condition in line~\ref{line:startProp} and setting $\Proposing = \MPool$. 
\end{proof}

Lemmas~\ref{lm:comparability},~\ref{lm:replicaLiv} and~\ref{lm:callProp} imply:

\begin{theorem}
    Algorithm~\ref{alg:longLA} implements Generalized Lattice Agreement.
\end{theorem}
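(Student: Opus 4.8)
The plan is to assemble the theorem from the four properties of the Lattice Agreement definition, treating each in turn and leaning on the two lemmas already established. The proof will be short: the excerpt already remarks that \GValidity{} and \GStability{} are ``immediate,'' so I would first spell out briefly why. For \GValidity, I would observe that every value a node learns is stored in $\Learned$, which is only ever updated to $\Validated$ (line~\ref{line:endProp1}) or to some $w$ received in an $\ACCEPT$ message (line~\ref{line:endProp2}); both $\Validated$ and any such $w$ are joins of values that entered $\Pending$, and each such value is in turn a join of proposed values (it originates from some node's $\Proposing$, which equals a past $\MPool$, a join of values delivered via $\REQUEST$, hence of proposed values). The ``includes its last proposal'' clause follows because the $\Propose(v)$ operation waits until $v \sqsubseteq \Learned$ before returning. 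For \GStability, I would note that $\Learned$ is only ever replaced by a strictly larger value: line~\ref{line:endProp1} is guarded by $\Learned \sqsubset \Validated$, and line~\ref{line:learnCond2} is guarded by $\Proposing \sqcup \Learned \sqsubseteq w$, so in both cases the new $\Learned$ dominates the old one; hence the sequence of learned values is monotone in $\sqsubseteq$.

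Next I would dispatch the remaining two properties by citation: \GConsistency{} is exactly Lemma~\ref{lm:comparability} (any two learned values are comparable), and \GLiveness{} follows from Lemma~\ref{lm:replicaLiv} together with the structure of the $\Propose$ operation --- when a correct node invokes $\Propose(v)$, it calls $\SendRequest(v)$, which puts $v$ into its $\MPool$; the ``upon'' event at line~\ref{line:startProp} then eventually fires (either immediately, or once the current proposal finishes and $\Proposing$ is reset to $\perp$), so the node sets $\Proposing \neq \perp$ with $v \sqsubseteq \Proposing$, and Lemma~\ref{lm:replicaLiv} guarantees it eventually learns a value containing it, satisfying the wait condition. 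I would combine these into the two-line proof the excerpt is clearly aiming for.

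The only subtlety --- and the place I would be most careful --- is the interaction between the buffering mechanism and liveness: a correct node that invokes $\Propose(v)$ while it already has $\Proposing \neq \perp$ parks $v$ in $\MPool$, and one must argue the node does not get stuck there forever before $v$ is ever proposed. This is where I would invoke Lemma~\ref{lm:replicaLiv} inductively: the node's current proposal eventually completes (by the lemma), which resets $\Proposing$ to $\perp$ and triggers line~\ref{line:startProp}, moving the accumulated $\MPool$ (which by then includes $v$, since $\MPool$ only grows via joins until it is flushed) into $\Proposing$; then the lemma applies again to that proposal. Since there is no infinite regress --- each application of the lemma consumes the pending batch --- $v$ is eventually learned. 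I would state this linkage explicitly rather than hand-wave it, as it is the one genuinely non-trivial step in an otherwise mechanical assembly.

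\begin{proof}
$\GValidity$: a node's learned value is always the current value of $\Learned$, which is only assigned $\Validated$ (line~\ref{line:endProp1}) or a value $w$ received in an $\ACCEPT$ (line~\ref{line:endProp2}). Both are joins of values previously inserted into $\Pending$, and every such value is some node's former $\Proposing$ value, i.e.\ a former $\MPool$, which is a join of values received in $\REQUEST$ messages and hence of proposed values. Since $\Propose(v)$ returns only after $v \sqsubseteq \Learned$, the learned value dominates, hence includes in its join, the node's last proposal.

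$\GStability$: $\Learned$ changes only at lines~\ref{line:endProp1} and~\ref{line:endProp2}, guarded respectively by $\Learned \sqsubset \Validated$ and by $\Proposing \sqcup \Learned \sqsubseteq w$; in both cases the new value of $\Learned$ dominates the old one, so the learned values are monotone with respect to $\sqsubseteq$.

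$\GConsistency$ is Lemma~\ref{lm:comparability}.

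$\GLiveness$: suppose a correct node $x$ invokes $\Propose(v)$. Then $\SendRequest(v)$ sets $\MPool \gets \MPool \sqcup v$, so $v \sqsubseteq \MPool$ until $\MPool$ is next flushed. If $\Proposing = \perp$, line~\ref{line:startProp} fires and sets $\Proposing \gets \MPool \sqsupseteq v$; otherwise, by Lemma~\ref{lm:replicaLiv} the current proposal eventually completes, resetting $\Proposing$ to $\perp$ (lines~\ref{line:endProp1},~\ref{line:endProp2}) and triggering line~\ref{line:startProp}, which again sets $\Proposing$ to a value dominating the accumulated $\MPool \sqsupseteq v$. In either case $x$ sets $\Proposing \neq \perp$ with $v \sqsubseteq \Proposing$, and by Lemma~\ref{lm:replicaLiv} $x$ eventually learns a value $w$ with $\Proposing \sqsubseteq w$, so $v \sqsubseteq w = \Learned$ and the operation returns.
\end{proof}
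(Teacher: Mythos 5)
Your proof is correct and follows the same route as the paper, which simply combines Lemma~\ref{lm:comparability} (Consistency) and Lemma~\ref{lm:replicaLiv} (Liveness) with the observation that Validity and Stability are immediate. You usefully spell out the details the paper leaves implicit --- in particular the inductive use of Lemma~\ref{lm:replicaLiv} to show a buffered value in $\MPool$ is eventually promoted to $\Proposing$ --- but this is elaboration of the same argument, not a different one.
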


\begin{corollary}
    Algorithms~\ref{alg:GLAtoAS} and~\ref{alg:longLA} implement Atomic Snapshot.
\end{corollary}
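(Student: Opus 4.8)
The plan is to derive the corollary by composing the two results already in hand: the theorem that Algorithm~\ref{alg:longLA} implements Generalized Lattice Agreement over an arbitrary join semi-lattice, and the observation (stated in Section~\ref{subsec:glaToAso}) that $(\cL^*,\sqsubseteq^*,\sqcup^*)$ is indeed a join semi-lattice. Instantiating Algorithm~\ref{alg:longLA} on $(\cL^*,\sqsubseteq^*,\sqcup^*)$ and plugging it in as the $\LA$ object used by Algorithm~\ref{alg:GLAtoAS} therefore gives a concrete implementation; what remains is to argue that this composition yields a linearizable atomic snapshot in which every operation of a correct process terminates.

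First I would establish termination, which is the easy direction: an $\update(i,v)$ call performs exactly one $\LA.\Propose$, and a $\snapshot()$ call performs exactly one $\LA.\Propose$ (Algorithm~\ref{alg:GLAtoAS}); by $\GLiveness$ of the underlying LA, a correct process that proposes eventually learns a value, so both operations return. Second, I would turn to linearizability, which is the crux. The standard argument (following~\cite{kuznetsov2019opodis}) is to use the learned vectors to place linearization points. For each completed $\snapshot()$ by process $i$, let $w$ be the $2n$-vector it learns; by $\GValidity$, $w$ is a join of proposed vectors that includes $i$'s own proposal, so the reading counter in position $n+i$ reflects this particular snapshot. One linearizes each snapshot at a point where the state of the first $n$ positions of the join-semilattice ``matches'' $w[1..n]$, and each $\update(i,v)$ at the point its proposal first becomes reflected in some learned vector. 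Using $\GConsistency$ (all learned vectors are pairwise comparable) and $\GStability$ (each node's learned vectors are monotone), one shows these points can be placed consistently within operation intervals and that the resulting sequential execution is legal: a snapshot returns, for each position $j$, exactly the value written by the $\update$ whose $(w,v)$ pair is maximal among those preceding the snapshot's linearization point.

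The main obstacle is the linearizability argument, and specifically ensuring the chosen linearization points respect the real-time order of operations — in particular, that an $\update$ which completes before a $\snapshot()$ begins is reflected in that snapshot's output. This needs the fact that before $\Propose$ returns at a node, the proposed value has been \emph{learned} (so it is $\sqsubseteq$ some learned vector that is, by comparability, below or equal to every later-learned vector), hence it appears in every subsequently learned vector; combined with the monotone total order $\leq^{R_\ell}$ on each write-position (the write-counter $w$ breaks ties), this forces later snapshots to see it. For the $\MW\MR$ extension one additionally invokes the extra read phase to justify that write-counters are strictly increasing per position. Since the full proof is routine given the LA properties and is of the same flavor as~\cite{kuznetsov2019opodis}, I would state the reduction, give the linearization-point construction, verify legality using $\GValidity$/$\GConsistency$/$\GStability$, and defer the bookkeeping details to Appendix~\ref{sec:app:LA-ASO} as the paper already announces.
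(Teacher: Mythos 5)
Your proposal is correct and follows essentially the same route as the paper: the corollary is obtained by composing the GLA theorem for Algorithm~\ref{alg:longLA} with the correctness of the reduction in Algorithm~\ref{alg:GLAtoAS}, whose linearizability proof (in Appendix~\ref{sec:app:LA-ASO}) likewise orders completed operations by their learned, totally ordered vectors, uses \GValidity{} and the per-operation sequence numbers for legality and real-time order, and \GLiveness{} for termination. The only nuance worth tightening is your parenthetical that comparability alone makes an earlier-learned vector below every later-learned one --- that step actually needs the unique-sequence-number argument you invoke immediately afterward --- but as written the combined argument goes through.
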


\subsection{Time metric}
\label{subsec:ourMetric}

We now define the latency metric we are going to use in evaluating time complexity.  
Our metric is inspired by the metric proposed by Abraham et al.~\cite{abraham-metric} (which in turn rephrases the original metric by Canetti and Rabin~\cite{canetti1993fast}).
The distinguishing feature of our approach is that 
it also applies to \emph{long-lived} executions and executions with \emph{holes} (illustrated in Figure~\ref{fig:egIRA}).\footnote{In Section~\ref{sec:timeMeasure}, we show that the three metrics are equivalent in "hole-free" executions.}

Algorithm~\ref{alg:refinedNTR} describes the iterative method that assigns rounds to events in an execution.
We give an informal description of the metric below.

\begin{definition}[\textbf{I}terative \textbf{R}ound \textbf{A}ssignment - Informal]
Algorithm~\ref{alg:refinedNTR} assigns round $0$ to the initial event, and defines the end of round $i$ as the last event that receives a message sent in round $i-1$.
In addition, if there are no more messages to be received (or in transit), the event inherits the round number of its immediate predecessor.
\end{definition}

\begin{algorithm}
{\footnotesize
\begin{smartalgorithmic}[1]

\State $e_0^* := e_0$
\State $e_0$ is assigned round $0$
\State $r := 0$

\For{i=1...}
    \If{$e_i$ does not receive a message}
        \State $e_i$ is assigned round $r$ \label{line:momentRound}
    \Else
        \State Let $e_j$ be the oldest event from which $e_i$ receives a message
        
        \State Let $r'$ be the round assigned to $e_j$ $(r' \leq r)$
        \State Let $e'$ be the most recent event among $e_{r'}^*$ and $e_j$
        \State All events after $e'$ and up to $e_i$ receive round $r'+1$ \label{line:defRound}
        \State $e_{r'+1}^* := e_i$ \label{line:lastStar}
        \State $r = r'+1$
    \EndIf    
\EndFor

\end{smartalgorithmic}
}
\caption{Iterative Round Assignment (\IRA)}
\label{alg:refinedNTR}
\end{algorithm}

\begin{figure}[tp]
    \centering
    \includegraphics[width=0.5\textwidth]{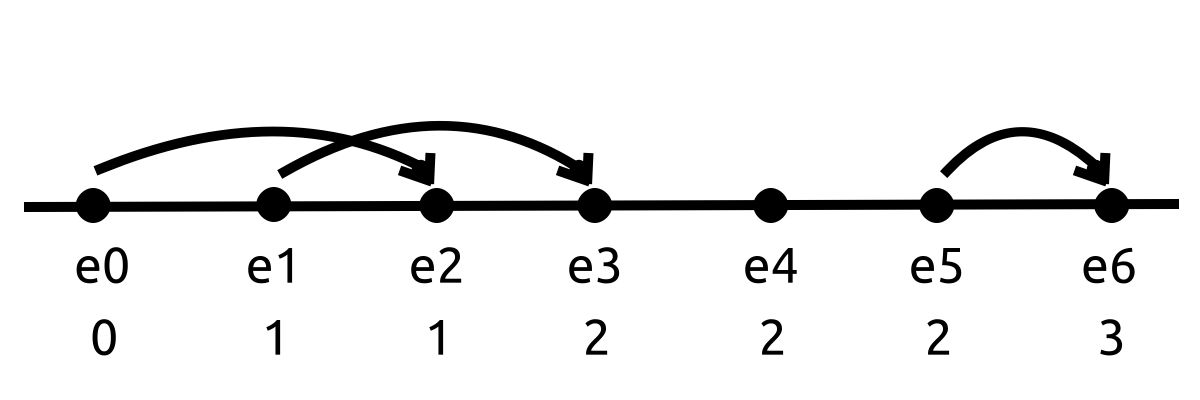}
    \caption{Example of round assignment using $\IRA$. Arrows represent message transmissions and the number below an event corresponds to its round. A ``hole'' in communication appears betwen events $e_3$ and $e_5$.}
    \label{fig:egIRA}
\end{figure}

\begin{definition}[$\IRA$ - Arbitrary Events]
\label{def:iraEvents}
To measure the \emph{latency between two events $e_i$ and $e_j$}, we assign rounds according to Algorithm~\ref{alg:refinedNTR}, \emph{starting from} $e_i$, with \emph{all events} up to and including $e_i$ receiving round $0$.
The latency between $e_i$ and $e_j$ is then given by the round assigned to $e_j$.
\end{definition}

We say that an application request (or simply request, when there is no ambiguity) \emph{completes} once the receiving node learns a value which includes the request.
For a specific node $i$, we are interested in measuring the latency between the event $e_C$ in which $i$ receives a value $v$ from the application software, and an event $e_R$, in which $i$ learns a value $w$ with $v$.

\subsection{Time complexity of Algorithm~\ref{alg:longLA}}
We define latency as the number of rounds spanning between the moment a correct process receives an application call and the moment it returns from the operation.
In evaluating the latency of our protocol, we consider two types of executions: (1)~the \emph{fault-free} case, when all processes are correct, and (2)~the \emph{worst-case}, when only a majority of processes are correct.

A snapshot operation $\Op$ \emph{precedes} another operation $\Op'$ if the response event of $\Op$ happens before the call event for $\Op'$.
Two operations are said to be \emph{concurrent} if none precedes the other.
For $\ASO$ protocols, we analyze latency in fault-free runs of an operation $\Op$ in two distinct scenarios: (a)~without contention, i.e., when no other operation overlaps in time with $\Op$, and (b)~with contention, i.e., when there might be an arbitrary number of concurrent operations.

Garg et al.~\cite{ASOconstant} use the notion of \emph{amortized time complexity}, i.e., the average operation latency taken over a large number of operations in an execution.
In some protocols, including ours, the latency of an operation is only affected by the number of faulty processes whose messages are received during the operation's interval (we call these processes \emph{active-faulty}).
Intuitively, faulty processes take a finite amount of steps, so in these protocols a failure can only affect a finite number of operations.
In this paper, we also distinguish $\ASO$ protocols with \emph{constant} time complexity.

Next, we establish the optimality of our protocol under no-contention.
A protocol implementing $\LA$ \emph{tolerates} $k$ faults if it satisfies all the properties of Definition~\ref{def:LA} in every execution with at most $k$ faulty processes.

\begin{theorem}
    Let $\cP$ be a distributed protocol that implements $\LA$ and tolerates at least one faulty process. Then, there exists a fault-free run of $\cP$ in which an $\LA$ operation requires at least two rounds of communication to complete without contention.
\end{theorem}

\begin{proof}
Consider an operation $\Op$ initiated by node $x$, with call event $e_C$ and response event $e_R$. Suppose $\Op$ completes in at most one round in fault-free, contention-free executions.

We first show that there exists an execution $E = e_1, \ldots, e_C, \ldots, e_R$ such that:
\begin{itemize}
    \item $x$ is the only process to take a step in $e_R$,
    \item no message sent by $x$ in $e_C, \ldots, e_R$ is received by any other process before $e_R$.
\end{itemize}

If multiple processes perform steps in the same event $e$, we can conceptually "split" $e$ into a sequence of events $e^1, e^2, \ldots$, where each process takes the step in its own dedicated event.
Since their steps are independent, these split events are indistinguishable from the original $e$ from each process's perspective.
This reasoning also applies to $e_R$.

Now, assume for the sake of contradiction that in every fault-free, contention-free execution containing both $e_C$ and $e_R$, there exists some process $y \neq x$ that receives a message $m$--sent by $x$ in the interval $e_C, \ldots, e_R$--before $e_R$ occurs.

Let $e_M$ denote the event where $y$ receives $m$. We define rounds from $e_C$'s perspective:
\begin{itemize}
    \item all events up to $e_C$ are in round $0$,
    \item round $1$ ends at the last event $e_L$ that receives a message originating in round $0$.
\end{itemize}

If $m$ is sent after $e_C$, then we can construct $E$ so that all messages from round $0$ are received before $m$.
This ensures that $e_M$ occurs after $e_L$, meaning $e_M$ is in round 2.
Since $e_R$ occurs after $e_M$, it too is assigned round 2--contradicting our assumption that $\Op$ completes in one round.

If instead $m$ is sent in $e_C$, we can again construct the execution so that all round $0$ messages are received before or at the same time as $m$, making $e_M = e_L$.
Since $e_R$ occurs after $e_M$, it is again assigned to round 2--a contradiction.

These contradictions hold regardless of whether $\Op$ is concurrent with any other operation.
Hence, such an execution $E$ must exist.
Now consider an extension $E'$ of $E$ where all messages sent by $x$ after (and including) $e_C$ are indefinitely delayed, while messages from other nodes are not.

Suppose a node $z$ invokes a new operation $\Op'$ after $e_R$, making $\Op'$ non-concurrent with $\Op$.
Since protocol $\cP$ tolerates at least one faulty process, and $x$ appears to have crashed in $E'$, node $z$ must eventually complete $\Op'$ without any process receiving any messages from $x$.

Let $v$ and $w$ be the value proposed and the value learned by $x$ in $\Op$, and let $v'$ and $w'$ be the corresponding values for $z$ in $\Op'$. By $\GValidity$, we know $v \sqsubseteq w$, and by $\GConsistency$, we know $w \sqsubseteq w'$, hence $v \sqsubseteq w'$.

However, since no process receives a message from $x$ since $e_C$, no one could have known about $v$, contradicting the requirement that $w'$ must contain $v$.

Finally, after $\Op'$ completes, we can allow all delayed messages from $x$ to be received, making all processes correct in the final execution $E'$. This completes the proof.
\end{proof}

\begin{theorem}
   In a fault-free run without contention, a request takes at most $2$ rounds to complete.
\end{theorem}
   \begin{proof}
    Consider a contention-free request with call event $e_C$ and return event $e_R$ invoked by a node $i$. 
    There are no call events for other nodes between $e_C$ and $e_R$, but some messages from previous proposals may still be in transit.

    Suppose $v$ is the value to be proposed for the application call.
    If $i$ is not proposing (has $\Proposing = \perp$) when it receives $v$, then it directly sends $\langle \PREPARE, v \rangle$ to everyone. Let $e_P$ be the last event in which a process receives $\langle \PREPARE, v \rangle$ from $i$, then every process also sends $\langle \PREPARE, v \rangle$ by at most $e_P$.
    Now take $e_F$ as the final event in which a process receives $\langle \PREPARE, v \rangle$ in the execution, and $e_S$ as the corresponding sending event.
    It must be that $e_S$ happens between $e_C$ and (potentially including) $e_P$.
    Also, because the channels are FIFO, every previous proposal must have been validated before $e_F$, and $i$ will learn a value containing $v$ by at most $e_F$. 
    Let $e_C$ be assigned round $0$, then $e_P$ happens at most in round $1$.
    As a consequence, $e_S$ is assigned either $0$ or $1$, thus $e_F$ can be assigned at most round $2$. 
    Then, by the end of round $2$, $i$ already has $v$ validated.

    Now suppose that $i$ is proposing when it receives $v$, so it still has a value $v'$ in $\Pending$ that is not validated, w.l.o.g. assume that $v'$ is the only one.
    This value must be from a call that already finished, and the corresponding node sent $\langle \ACCEPT, w \rangle$ containing $v'$ before $e_C$.
    Consider two pairs of events: $(e_A,e_A')$ and $(e_C,e_C')$.
    In the first pair, $e_A$ is the event where $\langle \ACCEPT, w \rangle$ was first sent, and $e_A'$ is the \emph{last} event in which $\langle \ACCEPT, w \rangle$ is received from $e_A$.
    In the second, $e_C$ is the usual application call event and $e_C'$ is the \emph{last} event in which $\langle \REQUEST, v \rangle$ is received from $i$.
    There are two cases to consider: 1) $e_A'$ happens before $e_C'$ and 2) $e_C'$ happens before $e_A'$.

    If it is the first case, then at the moment $e_C'$ happens, every node was already able to propose $v$ (since there was no other value to be learned).
    Take the last event $e_L$ in which a $\langle \PREPARE, v \rangle$ (or a value containing $v$) is received, and $e_S$ as the corresponding sending event, it follows that $i$ validates $v$ by at most $e_L$ and can learn a value containing it.
    Let $e_C$ be assigned round $0$, $e_C'$ and $e_S$ can be assigned at most round $1$, and since $e_L$ receives a message from $e_S$, it can be assigned at most round $2$.
    If it is the second case, then all nodes received $\langle \REQUEST, v \rangle$ and put $v$ in $\MPool$ before $e_A'$.
    Every node proposes $v$ by at most $e_A'$ (since they can adopt $w$ and stop any current proposal).
    Let $e_L$ be the last event in which a process receives a proposal for $v$ and $e_S$ it's corresponding sending event, similarly to the above cases, $e_S$ happens between $e_C$ and $e_A'$.
    Now, let $e_A$ and $e_C$ be assigned round $0$.
    $e_S$ can be assigned at most round $1$ ($e_S$ happens before or at $e_A'$) and $e_L$ at most $2$, which concludes the proof.
   \end{proof}

Consider an execution of our algorithm, and let $F$ ($|F| \leq f$) be its set of faulty processes.

\begin{lemma}
\label{lm:timeLearn}
    Consider an event in which a correct node sends $\langle \PREPARE, v \rangle$ and the first event in which a correct node learns a value including $v$.
    If no correct node receives a message from a faulty one between these two events, then there are at most $3$ rounds between them.
\end{lemma}

\begin{proof}
    A message sent by a correct node is received by every correct node in the execution, and since correct nodes do not receive messages from faulty ones in the interval we are analyzing, we can consider only events originated from correct nodes.
    Therefore, we only refer to correct nodes in the following.
    
    Let $x$ be the node sending $\langle \PREPARE, v \rangle$, $e_P$ be the corresponding event and $e_P'$ the last event a node receives $\langle \PREPARE, v \rangle$ from $x$.
    Because $x$ also sends $\langle \REQUEST, v \rangle$, by $e_P'$ every node received the request and must be proposing.
    Any value learned after $e_P'$ contains $v$ since all nodes have $v$ in $\Pending$.

    Now, at the configuration just after applying $e_P'$, let $V$ be the set in which $w \in V$ satisfies: there exists a (correct) node where $w$ is in $\Pending$ but is not yet validated.
    Consider a value $w \in V$ that is the last whose $\langle \PREPARE, w \rangle$ is received by any node, where $e_L'$ is the event in which $\langle \PREPARE, w \rangle$ is last received and $e_L$ the corresponding sending event.
    It follows that some node learns a value containing $v$ by at most $e_L'$.

    Next, take the first event $e_F$ in which a node sent $\langle \PREPARE, w \rangle$, and $e_F'$ the event in which the last $\langle \PREPARE, w \rangle$ from $e_F$ is received.
    Note that $e_L$ happens at most at $e_F'$ and $e_F$ at most at $e_P'$.
    Let $e_P$ be assigned round $0$, then $e_P'$ (and thus $e_F$) can be assigned at most round $1$, $e_F'$ (and thus $e_L$) at most $2$ and lastly, $e_L'$ can be assigned at most round $3$. Therefore, there are at most $3$ rounds between a propose and the first learn event for $v$.
\end{proof}

\begin{theorem}
    \label{th:8rounds}
    An operation $\Op$ takes at most $8$ rounds to complete if, during its interval, no correct node receives a message from a faulty one.
\end{theorem}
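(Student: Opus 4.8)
The plan is to bound the latency of an arbitrary operation $\Op$ (with call event $e_C$ and return event $e_R$) by chaining two "phases": a first phase in which the buffered value actually gets proposed (i.e.\ some correct node issues a $\PREPARE$ for a value $\sqsupseteq v$), and a second phase in which that $\PREPARE$ turns into a learn, the latter already controlled by Lemma~\ref{lm:timeLearn} at a cost of $3$ rounds. So the real work is to show the first phase costs at most $5$ rounds, after which $3+5=8$ gives the bound. I would start by assigning $e_C$ round $0$ and tracking the request: node $i$ puts $v$ in $\MPool$ and sends $\langle\REQUEST,v\rangle$ to everyone; by round $1$ every correct node has received it and has $v\sqsubseteq \MPool\sqcup\Proposing\sqcup\Learned$.

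Next I would split on whether $i$ is already proposing when $e_C$ occurs. If $i$ is not proposing, it fires the $(\MPool\neq\perp)\wedge(\Proposing=\perp)$ event essentially immediately and sends $\langle\PREPARE,P\rangle$ with $v\sqsubseteq P$; this $\PREPARE$ from a correct node occurs in round $0$ or $1$, and then Lemma~\ref{lm:timeLearn} (applicable since no correct node hears from a faulty one in $\Op$'s interval) gives a learn within $3$ more rounds — far under budget. The interesting case is when $i$ (and possibly others) are stuck with a non-validated value in $\Proposing$. Here I would argue as in Lemma~\ref{lm:replicaLiv}: there is some correct node $y$ whose current proposal $P_y$ is the "oldest" obstruction, meaning no correct node is waiting on anything strictly below what $P_y$ will become; $y$ will validate $P_y$ and learn (again by Lemma~\ref{lm:timeLearn}, $3$ rounds from $y$'s $\PREPARE$), then send $\langle\ACCEPT,w\rangle$ with $P_y\sqsubseteq w$. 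Because channels are FIFO and $i$ has already relayed $\langle\REQUEST,v\rangle$ before $y$'s $\PREPARE$ went out, by the time $i$ receives this $\ACCEPT$ it has $v$ in $\MPool$; upon learning $w$ (or adopting it at line~\ref{line:learnCond2}) it sets $\Proposing\gets\perp$, so the very next firing of the startup event proposes a value $\sqsupseteq v$. I would account: $y$'s blocking $\PREPARE$ is out by round $\le 2$ (it needed $y$ to have received enough $\PREPARE$s to not be blocked by something older, which is one hop past the initial proposals), Lemma~\ref{lm:timeLearn} pushes $y$'s learn to round $\le 5$, the $\ACCEPT$ reaches $i$ by round $\le 6$, $i$'s fresh $\PREPARE$ is out by round $\le 6$ — hmm, this already threatens the budget, so the accounting must be done more carefully, overlapping the rounds of $y$'s learn with the rounds Lemma~\ref{lm:timeLearn} will then charge for $i$'s value.

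The clean way around the double-counting is to observe that once $y$ sends $\langle\ACCEPT,w\rangle$, every correct node that receives it and is not already proposing $\sqsupseteq v$ will, within one more hop, issue a $\PREPARE$ for a value $\sqsupseteq v$ (they have $v$ in $\MPool$ from the earlier $\REQUEST$). Picking the \emph{last} such $\PREPARE$ for a value containing $v$ as the anchor event $e_P$ for Lemma~\ref{lm:timeLearn}, I would show $e_P$ falls in round $\le 5$ by carefully threading: round $1$ for the $\REQUEST$, round $2$ for the initial $\PREPARE$s of the blocking values, round $3$ for the first $\PREPARE$s of whatever becomes $y$'s validated value per the $e_F,e_F'$ bookkeeping in Lemma~\ref{lm:timeLearn}, round $4$ for $y$'s learn/$\ACCEPT$ send, round $5$ for the $\ACCEPT$ received and the ensuing $\PREPARE$ for $v$. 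Then Lemma~\ref{lm:timeLearn} applied with $e_P$ as the sending event contributes $3$ rounds — but since $e_P$ is already at round $\le 5$ and these rounds are relative, I actually want to re-run the round assignment from $e_C$ throughout and check the cumulative count never exceeds $8$; the $3$ of Lemma~\ref{lm:timeLearn} are "the last three hops," which by the structure of $\IRA$ compose additively with the $5$. The main obstacle, and where I expect to spend the most care, is precisely this bookkeeping: $\IRA$'s round assignment is subtle (a round can "absorb" several events, and the $e_{r'}^*$ pointers matter), so proving that the concatenation of the five-round "get-it-proposed" phase with the three-round "propose-to-learn" phase really yields $8$ and not something larger requires showing the boundary event (the anchor $\PREPARE$) is simultaneously the end of the first phase's rounds and the start of the second's — i.e.\ that no round is split across the seam. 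Establishing that clean hand-off, rather than any individual hop bound, is the crux.
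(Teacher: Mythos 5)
Your overall decomposition (a ``get-it-proposed'' phase followed by Lemma~\ref{lm:timeLearn}) is the right shape, but the arithmetic $5+3=8$ has a genuine off-by-one that your own accounting already exposes. Lemma~\ref{lm:timeLearn} bounds the number of rounds from a correct node's $\PREPARE$ to the \emph{first} event in which \emph{some} correct node learns a value including $v$ --- not to the event in which the invoking node $i$ learns it. The operation $\Op$ returns only when $i$ itself satisfies $v \sqsubseteq \Learned$, and in general the first learner is a different node, so $i$ must still receive that node's $\langle \ACCEPT, w\rangle$, which costs one additional round. With your anchor $\PREPARE$ at round $\le 5$, Lemma~\ref{lm:timeLearn} puts the first learn at round $\le 8$ and $i$'s learn at round $\le 9$, which misses the bound. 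The correct budget for the second phase is $3+1=4$ rounds, which forces the first phase down to $4$ rounds, not $5$.

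The $4$-round bound for the first phase is obtained by applying Lemma~\ref{lm:timeLearn} a first time to the \emph{blocking} proposal rather than threading through an ``oldest obstruction'' node $y$: in the worst case $e_C$ occurs just after $i$ sent $\langle\PREPARE, u\rangle$ for its previous value $u$ (so that $\PREPARE$ is at round $0$), Lemma~\ref{lm:timeLearn} gives a first learn of a value containing $u$ within $3$ rounds, and the resulting $\ACCEPT$ reaches $i$ one round later, at which point $i$ sets $\Proposing \gets \perp$ and immediately issues its $\PREPARE$ for a value containing $v$ at round $\le 4$. A second application of Lemma~\ref{lm:timeLearn} plus the final $\ACCEPT$ hop then yields $4+3+1=8$. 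Your instinct about the seam (that the two phases must compose additively under the round assignment) is a legitimate point the paper glosses over --- it is justified by the message-hop characterization of the metric --- but fixing the seam does not repair the missing $\ACCEPT$ round; the phase boundaries themselves must be redrawn as $4+4$.
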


\begin{proof}
    Let $v$ be the value received from the application call for $\Op$, $e$ be the event in which node $i$ proposes $v$ (or a value containing $v$) and $e'$ the event in which a value including $v$ is learned for the first time.
    From Lemma~\ref{lm:timeLearn}, there are at most $3$ rounds between $e$ and $e'$.
    Since the node that learns $v$ sends $\langle \ACCEPT, v \rangle$ to everyone, $i$ receives and adopts it in one extra round.
    We conclude that in at most $4$ rounds every correct node can learn $v$.

    If $i$ is already proposing a value when it receives a call for $v$, it sends $\langle \REQUEST, v \rangle$ to everyone and put it in $\MPool$, so it is proposed next.
    Let $e_P$ be the event in which $i$ initiated its previous proposal to $v$, and consider the worst case where the application call $e_C$ with $v$ happens just after $e_P$.
    From $e_P$ to the event in which $i$ learns its previous proposal $e_P'$ (and thus starts proposing $v$), there are at most $4$ rounds, and from $e_P'$ to the learning event of $v$ there are also at most $4$ rounds.
    Therefore, the operation completes in at most $8$ rounds.
\end{proof}

We say that there are \emph{$k$ active faulty nodes during an operation $\Op$} if, in between the call and return events for $\Op$, a message is received from a total of $k$ distinct faulty nodes.

\begin{theorem}
\label{th:ok}
An operation $\Op$ takes $O(k)$ rounds to complete, where $k$ is the number of active faulty nodes during $\Op$.
\end{theorem}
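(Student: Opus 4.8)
The plan is to generalize the argument of the previous theorem, where the bound $8$ corresponds to the fault-free case $k=0$, by carefully tracking how many ``extra'' rounds each active-faulty node can inject. The key observation is that in Lemma~\ref{lm:timeLearn} we assumed no correct node hears from a faulty one during the propose-to-learn interval; without that assumption, a faulty node's $\PREPARE$ or $\ACCEPT$ message that arrives at some but not all correct nodes can force a correct node to add a new value to $\Pending$ (or to $\Validated$) after it would otherwise have learned, effectively restarting the ``wait for all pending values to be validated'' phase. First I would restate a fault-tolerant version of Lemma~\ref{lm:timeLearn}: if $k'$ distinct faulty nodes send messages that are received by correct nodes within the interval between a correct node sending $\langle \PREPARE, v\rangle$ and the first correct node learning a value including $v$, then there are at most $O(k')$ rounds between these two events. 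The proof would be an induction on $k'$: with $k'=0$ we are in the situation of Lemma~\ref{lm:timeLearn} and get $3$ rounds; each additional active-faulty node contributes a bounded number of rounds, because once a faulty node has crashed it sends no more messages, so its influence is confined to one ``relay wave'' that every correct node propagates to every other correct node in a constant number of rounds, after which the all-correct analysis of Lemma~\ref{lm:timeLearn} applies to the remaining pending values.

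Next I would lift this to the operation level exactly as in the proof of the $8$-round theorem. An operation $\Op$ with application value $v$ either (i) is proposed immediately, in which case by the fault-tolerant lemma plus one extra round for the $\langle \ACCEPT\rangle$ relay it completes in $O(k)$ rounds, or (ii) must wait for a previous proposal of $i$ to complete before $v$ is proposed. In case (ii), the interval of $\Op$ splits into two subintervals: from the call event to the event $e_P'$ where $i$ learns its previous proposal and starts proposing $v$, and from $e_P'$ to the learning of $v$. Each subinterval is governed by the fault-tolerant lemma with the number of active-faulty nodes in that subinterval; since those numbers sum to at most $k$ (a faulty node active in both subintervals is still counted once in $k$, so the bound is only better), the total is $O(k)$. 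I would also note that faulty nodes whose messages are never received by any correct node cannot affect anything, which is why only active-faulty nodes enter the count.

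The main obstacle will be making the induction in the fault-tolerant lemma precise: I need to argue that a single active-faulty node can only cause a \emph{bounded} setback, not an unbounded cascade. The subtle point is that a faulty node $y$ may send several messages before crashing (a $\PREPARE$ for some value, then later an $\ACCEPT$, etc.), and these could arrive at correct nodes spread out over many rounds, each time nudging the ``pending-not-yet-validated'' set. I would handle this by observing that FIFO channels and the fact that $y$ takes only finitely many steps mean there is a \emph{last} message from $y$ received by any correct node; once that message has been relayed by all correct nodes (a constant number of rounds after it is first received), $y$ is permanently irrelevant and contributes at most a fixed additive constant $c$ to the round count. Thus after all $k$ active-faulty nodes have had their last messages absorbed — which happens within $O(k)$ rounds of the start of the interval, since these absorptions can be charged disjointly — we are in the purely-correct regime and finish in $O(1)$ more rounds, giving the $O(k)$ bound overall. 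A clean way to formalize ``charged disjointly'' is to order the active-faulty nodes by the round in which their last message is first received by a correct node and show consecutive such rounds differ by $O(1)$, since between two of them the execution behaves as in Lemma~\ref{lm:timeLearn}.
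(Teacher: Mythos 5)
Your high-level plan (a fault-tolerant analogue of Lemma~\ref{lm:timeLearn}, then the same two-subinterval decomposition as in the $8$-round theorem) is reasonable, and you correctly identify the crux: showing that a single active-faulty node causes only a \emph{bounded} setback. But the mechanism you propose for that crux does not work as stated, and this is a genuine gap rather than a presentational one.

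Your argument is: each faulty node $y$ sends a last message that some correct node receives, after which $y$ is irrelevant; order the faulty nodes by these ``absorption'' rounds and charge $O(1)$ to each. Nothing in this argument bounds \emph{when} that last message arrives. A crash-faulty node may take steps for arbitrarily many rounds before crashing while still counting as a single active-faulty node, so the claim that ``consecutive such rounds differ by $O(1)$'' is precisely the theorem restated, not a step toward it; between two absorption rounds the not-yet-absorbed faulty nodes are still injecting messages, so the execution does not ``behave as in Lemma~\ref{lm:timeLearn}'' there. The bound genuinely depends on protocol-specific structure that your proposal never invokes. The paper's proof counts \emph{delaying values} rather than absorption times: to push completion from round $7$ to round $7+2k$ the adversary must introduce $2k+1$ new values that reach $\Pending$ without being validated in time; a faulty node can introduce at most \emph{one} such value from an internal source, because to start a second proposal it must first have learned its previous one --- at which point it has already learned a value containing $v$ (FIFO forces it to validate $v$ before $w$) and its $\ACCEPT$ precedes the new $\PREPARE$ on every FIFO channel, so correct nodes adopt the learned value and finish; and a value introduced from an external source that only surfaces at round $7+k$ or later requires a relay chain of more than $k$ distinct faulty nodes, a contradiction. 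Without an argument of this kind --- one that uses the FIFO ordering of $\ACCEPT$ before a subsequent $\PREPARE$ and the fact that faulty nodes obey the protocol until they crash --- your induction on $k'$ has no valid inductive step. To repair the proof you would need to replace the ``last-message absorption'' step with the internal/external source dichotomy and the chain-length bound, or an equivalent counting of how many fresh unvalidated values $k$ crash-faulty nodes can inject.
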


\begin{proof}
    See Appendix~\ref{app:ourLAcomp}.
\end{proof}

\begin{corollary}
    Algorithms~\ref{alg:GLAtoAS} and~\ref{alg:longLA} together have an amortized time complexity of $8$ rounds.
\end{corollary}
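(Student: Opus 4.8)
The plan is to combine the three latency bounds established above with the observation that each faulty node can influence the latency of only finitely many operations. Fix an arbitrary infinite execution satisfying the model assumptions, with faulty set $F$, $|F|\le f<n/2$, and let $\Op_1,\Op_2,\dots$ be the sequence of all $\update$ and $\snapshot$ operations invoked in it. Since Algorithm~\ref{alg:GLAtoAS} turns each such operation into a single $\LA.\Propose$ call, it suffices to bound the latency of $\Propose$ operations in Algorithm~\ref{alg:longLA}. For an operation $\Op$, let $k_\Op$ be the number of active-faulty nodes during $\Op$. The preceding theorem — whose proof in fact establishes the explicit bound of $8+2k$ rounds — gives that $\Op$ completes within $8+2k_\Op$ rounds; in particular, $k_\Op=0$ recovers the good-case-with-contention bound of $8$.

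The heart of the argument is to show that $\sum_{\Op} k_\Op$ is bounded by a constant $D$ that does not depend on how many operations are invoked. Consider a faulty node $p\in F$. By definition of ``faulty'', $p$ takes only finitely many steps, hence participates as a sender in only finitely many events and therefore sends only finitely many messages; since each message has a unique receiver and is delivered at most once, only finitely many events of the whole execution receive a message sent by $p$. Call this finite set $E(p)$. An operation $\Op$ counts $p$ among its active-faulty nodes only if some event of $E(p)$ falls in the interval between the call and return events of $\Op$. Because every node behaves sequentially — it waits for an operation to return before invoking the next — the intervals of the operations invoked by one fixed node are pairwise non-overlapping, so any single event lies in the interval of at most $O(n)$ operations (at most one per node). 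Hence at most $O(n)\cdot|E(p)|$ operations have $p$ as an active-faulty node, and summing over the at most $f$ nodes of $F$ yields $\sum_{\Op}k_\Op \le \sum_{p\in F}O(n)\,|E(p)| =: D<\infty$.

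Putting the pieces together, for every $N$ the total latency of the first $N$ operations is at most $\sum_{j=1}^{N}\bigl(8+2k_{\Op_j}\bigr)\le 8N+2D$, so the average latency over the first $N$ operations is at most $8+2D/N$, which converges to $8$ as $N\to\infty$; this is exactly the claimed amortized time complexity of $8$ rounds. I expect the main obstacle to be the middle step: turning the informal statement ``a failure affects only finitely many operations'' into the quantitative bound on $\sum_\Op k_\Op$, being careful that a late-delivered message from a faulty node is still delivered only once, and that the factor $O(n)$ is genuinely needed because operations invoked concurrently by different nodes may all witness the same active-faulty receive event. Everything else is bookkeeping on top of the already-established $O(k)$ bound.
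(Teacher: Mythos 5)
Your proposal is correct and follows exactly the route the paper intends: the paper leaves this corollary unproved, relying on the preceding $O(k)$ theorem (whose proof indeed yields the explicit $8+2k$ bound you use) together with the informal remark that a faulty process takes only finitely many steps and so can affect only finitely many operations. Your contribution is to make that remark quantitative — bounding $\sum_{\Op} k_{\Op}$ via the finiteness of each faulty node's send events and the $O(n)$ overlap factor from sequential invocation — which is a faithful and somewhat more rigorous rendering of the same argument.
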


\section{Measuring latency of ASO protocols}
\label{sec:measuringASO}

We conclude the paper with an overview of time complexity of earlier {\LA} and {\ASO} protocols~\cite{faleiro2012podc,delporte2018implementing,imbs2018set,garg2020amortized,ASOconstant}.
We highlight certain gaps in their latency analyses and discuss the ways to fix them. 
Formalities and proofs are delegated to the appendix. 

\myparagraph{The first message-passing $\LA$ protocol}
Faleiro et al.~\cite{faleiro2012podc} came up with the first $\LA$ implementation for asynchronous message-passing systems.
They use the metric of~\cite{ABDpaper} to measure latency and conclude that it takes $O(n)$ rounds to output from a lattice agreement operation in their protocol.

We show in Appendix~\ref{app:ASO} the somewhat surprising result that  this protocol has constant latency of $16$ rounds in fault-free runs.
The upper bound holds as long as no message from faulty processes is received during the interval of the operation, implying that their $\LA$ protocol has constant amortized time complexity.
We conjecture that the protocol has $O(k)$ worst-case latency, where $k$ is the number of actual failures in the execution.

\myparagraph{The first direct $\ASO$ implementation}
Delporte et al.~\cite{delporte2018implementing} is the first paper to directly implement $\ASO$ in message passing systems, instead of using an atomic register implementation~\cite{ABDpaper} and the shared-memory snapshot construction~\cite{atomic-snapshot}.

In fault-free runs without contention, the latency of their protocol is only $2$ rounds.
In fault-free runs with contention, we support the claim of a bound of $O(n)$ rounds from~\cite{ASOconstant}.

\myparagraph{ASO with SCD-Broadcast}
Imbs et al.~\cite{imbs2018set} introduce the abstraction of \emph{Set Constrained Delivery Broadcast} ($\SCD-\Broadcast$),
and show that it allows for implementing $\LA$ and $\ASO$ with no complexity overhead.
In their complexity analysis, they assume bounded message delays and show that the latency of their $\ASO$ algorithm \emph{in faulty-free and contention-free runs} is $2$ rounds.
In Appendix~\ref{app:ASO}, we show that an operation of their resulting {\ASO} algorithm can take $\Omega(n)$ rounds in fault-free runs \emph{with contention}.
We conjecture that this bound is tight, and so the time complexity of their $\ASO$ protocol is $\Theta(n)$.

\myparagraph{A generic $\ASO$ algorithm}
Garg et al.~\cite{garg2020amortized,ASOconstant} give a generic construction for atomic snapshot which uses any one-shot $\LA$ protocol (see definition in Appendix~\ref{app:oneShot}) as a building block (with constant latency overhead).
The protocol thus inherits the asymptotic complexity of the underlying $\LA$ algorithm.
They also provide a protocol for one-shot $\LA$ with $2$ rounds latency in fault-free runs (using ~\cite{canetti1993fast}'s metric).
Their protocol requires 2 rounds of communication plus two lattice agreement invocations in the good case w/o contention and three lattice invocations with contention, making it at least 6 and 8 message delays, respectively.

For the worst-case latency analysis, they assume an additional requirement over communication channels: if a process executes $\send(m)$, sending $m$ to a \emph{correct} process, then $m$ is eventually received (even if the sender is faulty).
Using this assumption, they show a worst-case latency of $O(\sqrt{f})$ for their $\LA$ protocol.

In this paper, we assume a weaker channel that only guarantees delivery of messages among correct processes.
We show that under this model, the $\LA$ protocol of~\cite{garg2020amortized} has an execution that takes $\Omega(f)$ rounds.
We conjecture the upper bound of their protocol to be $O(f)$, and also that when using the stronger assumption, both our (Section~\ref{sec:ourProtocol}) and~\cite{imbs2018set}'s protocol have $O(\sqrt{f})$ worst-case latency.

The generic $\ASO$ construction may also be combined with the one-shot $\LA$ protocol presented in~\cite{zheng2019opodis}, which has worst-case latency of $O(\log f)$, providing an object whose update and snapshot operations take $O(\log f)$ in both fault-free fault-prone executions.
For the sake of completeness, we also provide the time complexity analysis for the one-shot $\LA$ protocols from~\cite{faleiro2012podc} and~\cite{imbs2018set} in Appendix~\ref{app:oneShot}.

\section{Comparative Analysis of Time Measurement Metrics}
\label{sec:timeMeasure}

In this section, we recall metrics used in the literature~\cite{ABDpaper,canetti1993fast,abraham-metric,lamport1978time} for measuring time in asynchronous systems.
We exhibit executions where the metrics by Attiya et al.~\cite{ABDpaper} and Canetti and Rabin~\cite{canetti1993fast} yield arbitrary results due to the presence of \emph{holes} -- ``periods of silence'' during which no messages are in transit -- which are common in long-lived protocols.
We show that in a subset of executions without holes, which we refer to as \emph{covered executions}, these metrics align with the one proposed by Abraham et al.~\cite{abraham-metric}.
This is not surprising, as these metrics were designed for \emph{distributed tasks}, which assume finite hole-free executions.
We also recall Lamport’s longest causal chain metric~\cite{lamport1978time} and show that it is not suitable for comparing the $\ASO$ protocols we consider here.

Next, we show that the metric from~\cite{abraham-metric} diverges from~\cite{ABDpaper} and~\cite{canetti1993fast} when na\"ively applied to measure time between arbitrary events.
We then show that, after employing our refined method from Section~\ref{subsec:ourMetric}, they match when measuring rounds between arbitrary events in covered executions.

Finally, we show that both our metric and that of~\cite{abraham-metric} yield equivalent results in cases where~\cite{abraham-metric} is applicable.
Altogether, we establish that our metric generalizes~\cite{abraham-metric} and aligns with classical metrics~\cite{ABDpaper,canetti1993fast} when applied to distributed tasks.
A summary of the comparative analysis is presented in Table~\ref{tab:metrics}.

\begin{table*}[tp]
     \begin{center}
 %\resizebox{\columnwidth}{!}{
  \begin{tabular}{||c|c|c|c|c||}
  \hline
   & \makecell[c]{Timed} & \makecell[c]{Equivalent to $\CR$ \\ (Covered Executions)} & \makecell[c]{Equivalent to $\CR$ \\ (Arbitrary Events)} & \makecell[c]{Admits \\ Holes}\\
  \hline
 
  \hline
 $\CR$~\cite{canetti1993fast} & \textcolor{red}{Yes} & - & - & \textcolor{red}{No} \\ \hline
 $\Round$~\cite{ABDpaper,attiya2004distributed} & \textcolor{red}{Yes}  & \textcolor{blue}{Yes} & \textcolor{blue}{Yes} & \textcolor{red}{No} \\ \hline
 
 $\NTR$~\cite{abraham-metric} & \textcolor{blue}{No}  & \textcolor{blue}{Yes} & \textcolor{olive}{Yes}
 %\footnotemark[\value{newCounter}] 
 & \textcolor{red}{No} \\ \hline
 
 $\LCC$~\cite{lamport2006lower} & \textcolor{blue}{No} & \textcolor{red}{No} & \textcolor{red}{No} & \textcolor{blue}{Yes} \\ \hline
 
 $\IRA$ & \textcolor{blue}{No} & \textcolor{blue}{Yes} & \textcolor{blue}{Yes} & \textcolor{blue}{Yes} \\ \hline
 \end{tabular}
 %}
 \end{center}
     \caption{Comparison between asynchronous time metrics. Metrics that are \emph{timed} make use of time assignments to determine the number of rounds between events. We compare each metric against $\CR$, evaluating the number of rounds resulting from applying them over entire (covered) executions and between arbitrary events. \textcolor{blue}{Blue} stands for "good" features and \textcolor{red}{red}---for "bad" ones. \textcolor{olive}{The equivalence of NTR to CR} holds as long as one uses Definition~\ref{def:iraEvents}.}
     \label{tab:metrics}
\end{table*} 

\subsection{Definitions}
\myparagraph{Timed Executions} We assume a global clock, not accessible to the nodes. 
A \emph{timed event} $\overline{e}$ is a pair $(t,e)$ in which $t$ is a non-negative real number, we also say that $\overline{e}$ is a \emph{time assignment} of $e$.
A \emph{timed execution} is an alternating sequence $C_0\overline{e}_1C_1\dots$ where $\overline{e}_1 = (t_1, e_1),\overline{e}_2 = (t_2, e_2),\dots$, where events $e_1,e_2,\ldots$ are equipped with monotonically increasing times $t_1,t_2,\ldots$: 
\begin{enumerate}
    \item $t_m > t_l$ whenever $m > l$;
    \item $t_l \rightarrow \infty$ as $l \rightarrow \infty$.\footnote{We require this property to avoid the case where a never-terminating execution has a finite time duration.}
\end{enumerate}

A \emph{time assignment} of $E$ is a timed execution $\overline{E}$ in which every event $e_i$ in $E$ is matched with a timed event $(t_i,e_i)$ in $\overline{E}$ and the sequences of configurations in $E$ and $\overline{E}$ are the same.
Notice that an execution allows for infinitely many time assignments.

Let $m$ be a message sent in $\overline{e}_l$ and received in $\overline{e}_m$, the \emph{delay} of $m$ is then defined as $t_m - t_l$.
For a finite timed execution 
$\overline{E} = C_0\overline{e}_1...\overline{e}_l C_l$, we define 
$t_{\Start}(\overline{E}) = t_1$, 
$t_{\End}(\overline{E}) = t_l$ (we use $t_{\Start}$ and $t_{\End}$ 
when there is no ambiguity) and $\Duration(\overline{E}) = t_{\End} - t_{\Start}$.

In the subsequent discussion, given an execution $E$, let $\cT(E)$ denote the set of all timed executions $\overline{E}$ based on $E$.

\myparagraph{Time Metrics}
It is conventional to measure the execution time by the number of communication \emph{rounds}, typically calculated using the ``longest message delay.'' These metrics can be applied to both \emph{executions} and \emph{timed executions}.
The first metric we consider is defined in \Cref{def:roundMetric}~\cite{ABDpaper}. 
When applied to timed executions, this metric assumes a known upper bound on message delays, which can be normalized to one time unit without loss of generality. 
To apply this metric to an \emph{execution}, we consider the maximum duration of all possible timed executions that adhere to the upper-bound communication constraint.

\begin{definition}[Round metric]
\label{def:roundMetric}
Given a timed execution $\overline{E}$, in which the maximum message delay is bounded by one unit of time, $\overline{E}$ \emph{takes $\Duration(\overline{E})$ rounds}.

By extension, an execution $E$ takes $\sup_{\overline{E} \in \cT(E)}{\Duration(\overline{E})}$ rounds.
\end{definition}

In the metric proposed by Attiya and Welch~\cite{attiya2004distributed, attiya2024multi}, the time assignments are scaled so that the maximum message delay is always $1$, thus, the metric produces the same results for executions as Definition~\ref{def:roundMetric}.
A more general metric introduced by Canetti and Rabin~\cite{canetti1993fast} captures the time complexity of any finite execution. 
Let $\overline{E}$ be a timed execution, and let $\delta_{\overline{E}}$ be the maximum message delay in it.
Then $\overline{E}$ \emph{takes $\Duration(\overline{E})/\delta_{\overline{E}}$ \emph{\CR} rounds}.

\begin{definition}[$\CR$ metric]
\label{def:crMetric}
A finite execution $E$ takes $\sup_{\overline{E} \in \cT(E)}{\Duration(\overline{E})/\delta_{\overline{E}}}$ rounds, where $\delta_{\overline{E}}$ is the maximum message delay of each corresponding timed execution.
\end{definition}

\begin{figure}[tp]
    \centering
    \includegraphics[width=0.5\textwidth]{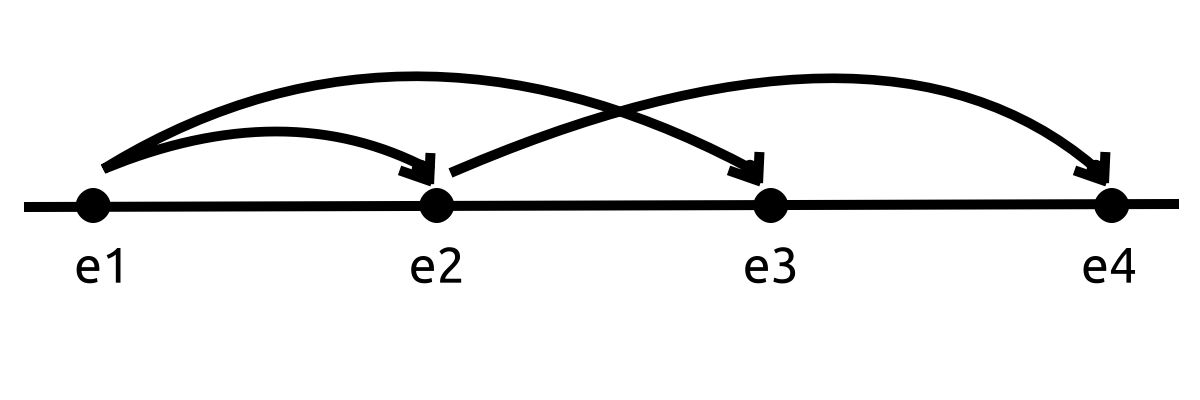}
    \caption{Example of an execution with 2 rounds in the Round, CR and NTR metrics.}
    \label{fig:2cr}
\end{figure}

\begin{example}

Figure~\ref{fig:2cr} shows an execution with four events, where
we assign a delay of $\delta$ to the message exchanges $(e_1,e_3)$ and $(e_2,e_4)$, and a delay of $\delta - \epsilon$ ($\epsilon > 0$) to $(e_1,e_2)$.
By making $\epsilon$ arbitrarily small, the number of rounds in this execution converges to $2$ in the $\CR$ metric. 
The same result is obtained in the Round metric by setting $\delta = 1$.
\end{example}

Recently, Abraham et al.~\cite{abraham-metric} proposed an elegant approach that can be directly applied to executions without relying on time assignments. 
We call this metric \textbf{n}on-\textbf{t}imed \textbf{r}ounds ($\NTR$):

\begin{definition}[$\NTR$ metric]
\label{def:nonTimeRounds}
Given an execution $E$, each event in $E$ is assigned a round number as follows:
\begin{itemize}
    \item The first event $e_0$ is assigned round $0$. We also write $e_0^* = e_0$;
    \item For any $r \geq 1$, let $e_r^*$ be the last event where a message of round $r-1$ is delivered. All events after $e_{r-1}^*$ until (and including) event $e_r^*$ are in round $r$.
\end{itemize}
The number of rounds in $E$ is the round assigned to its last event.
\end{definition}

\begin{example}
    Coming back to Figure~\ref{fig:2cr}, if we assign a round to each event based on Definition~\ref{def:nonTimeRounds} then $e_1$ gets round $0$, $e_2$ and $e_3$ get round $1$ and $e_4$ is assigned round $2$. The execution has therefore $2$ rounds according to $\NTR$.
\end{example}

Lamport~\cite{lamport2006lower} proposed a metric for latency based on the causal chain of messages.
The \emph{Longest Causal Chain} ($\LCC$) was used to show best-case latency of protocols such as consensus~\cite{lamport2006lower} and Crusader Agreement~\cite{abraham2023round}.

\begin{definition}
    [\textbf{L}ongest \textbf{C}ausal \textbf{C}hain] Let $e$ be an event in $E$ and $M$ the set of messages received by $e$, then $e$ is assigned round $k + 1$, where $k$ is the maximum round of an event originating a message in $M$. If $M = \emptyset$, then $k = 0$.
    The number of rounds in an execution becomes the highest round assigned to one of its events.
\end{definition}

This metric, however, diverges from $\CR$ and $\NTR$.

\begin{example} [Reliable Broadcast]
\label{ex:broadcast}
In the \emph{reliable broadcast} primitive~\cite{cachin2011introduction}, a dedicated source \emph{broadcasts} a message and, if the source is correct, then all correct nodes should deliver the message.
Furthermore, if a correct process delivers a message, then every correct process eventually delivers it.
The following protocol satisfies this property:

\begin{itemize}
    \item When the source invokes \emph{broadcast($m$)}, it delivers $m$ and sends it to everyone;

\item When a process receives $m$ for the first time, it delivers $m$ and sends it to everyone.
\end{itemize}

In Figure~\ref{fig:broadcast}, we depict an execution of this protocol with four processes: $p_1$, $p_2$, $p_3$ and $p_4$.
Here, $p_1$ is the source and broadcasts $m$, the message is received by $p_2$ which then sends $m$ to everyone.
Process $p_3$ receives $m$ from $p_2$ before receiving it from $p_1$, and finally, $p_4$ receives $m$ from $p_1$ in the last event.
This execution has $2$ $\LCC$ rounds, while having $1$ round according to $\CR$ and $\NTR$.
\end{example}

\begin{figure}[tp]
    \centering
    \includegraphics[width=0.5\textwidth]{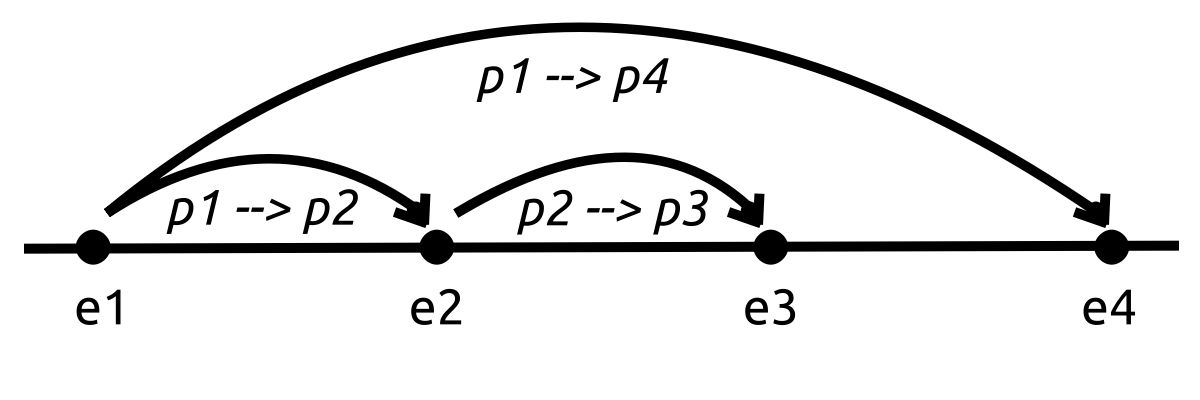}
    \caption{Example of a reliable broadcast protocol execution.}
    \label{fig:broadcast}
\end{figure}

Example~\ref{ex:broadcast} shows that the $\LCC$ metric diverges from the others in cases where a fast exchange of messages happens in the interval of one (or more) slow message.
This is the case for several $\ASO$ protocols in the literature (including ours) which heavily rely on relaying values to speed up the validation phase, making the metric unsuitable for our use case.
On the other hand, $\CR$ and $\NTR$ provide equivalent results in \emph{covered} executions, described next.\footnote{The Round and $\CR$ metrics also provide equivalent results in covered executions (Appendix~\ref{app:equivMetrics}).}

\subsection{Covered executions and holes}
\label{subsec:equiv}
Consider an execution $E = C_0e_1C_1...e_lC_l$ illustrated in Figure~\ref{fig:undefined} where no process receives a message from another process, i.e., events may add messages to the buffer but no event removes a message from it.
$\delta_{\overline{E}}$ is not defined in any time assignment $\overline{E}$.

Now consider an execution $E' = C_0e_1C_1...e_lC_l...e_mC_m$ in which:

\begin{itemize}
    \item A message $m$ is sent in $e_1$ and received in $e_l$;
    \item A message $m'$ is sent in $e_{l+1}$ and received in $e_m$;
    \item No message from $e_1...e_l$ is received in $e_{l+1}...e_m$.
\end{itemize}

In this example, illustrated in Figure~\ref{fig:unbounded} with $5$ events, $\delta_{\overline{E'}}$ exists for any time assignment of $E'$, but we can still assign an arbitrary time difference to $e_l$ and $e_{l+1}$ without affecting $\delta_{\overline{E'}}$, which results in the number of $\CR$ rounds to be unbounded.

\begin{figure}
    \centering
     \begin{subfigure}[t]{4.5cm}
         \centering
         \includegraphics[height=2.75cm]{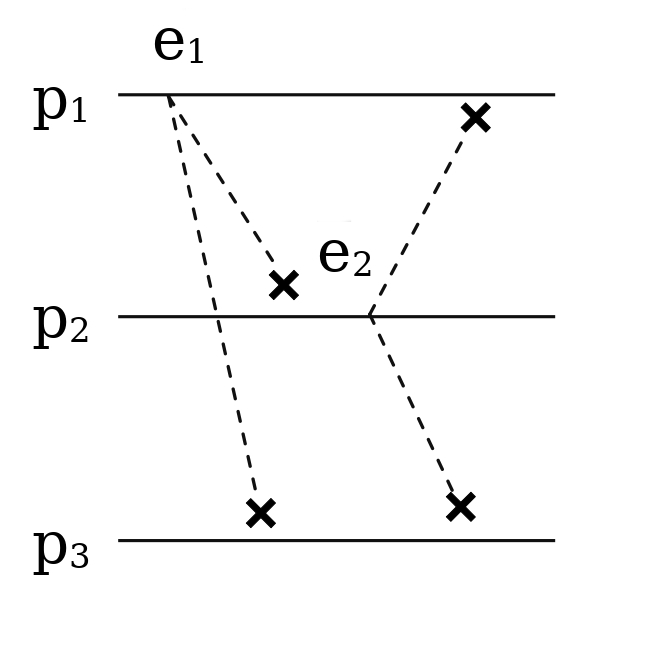}
         \caption{Execution with undefined $\delta_{\overline{E}}$.}
         \label{fig:undefined}
     \end{subfigure}
     \hfill
     \begin{subfigure}[t]{4.5cm}
         \centering
         \includegraphics[height=2.75cm]{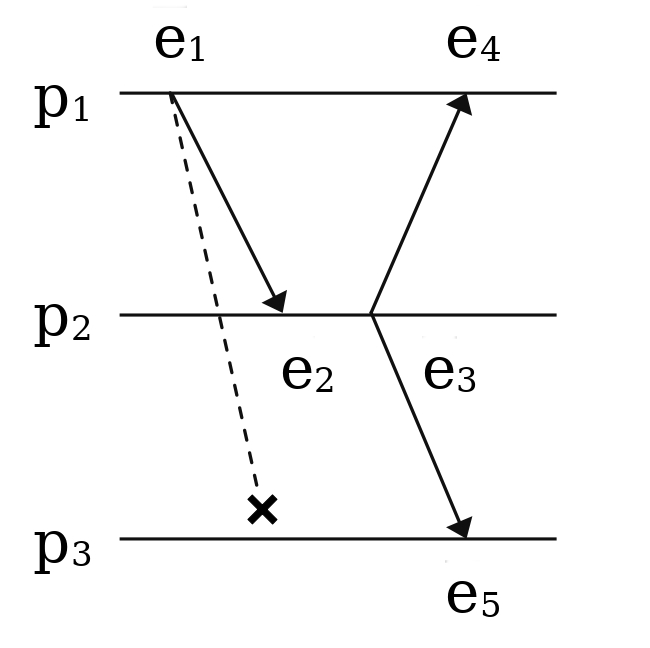}
         \caption{Execution where the number of rounds is unbounded according to Round and $\CR$.}
         \label{fig:unbounded}
     \end{subfigure}
     \hfill
     \begin{subfigure}[t]{4.5cm}
        \centering
        \includegraphics[height=2.75cm]{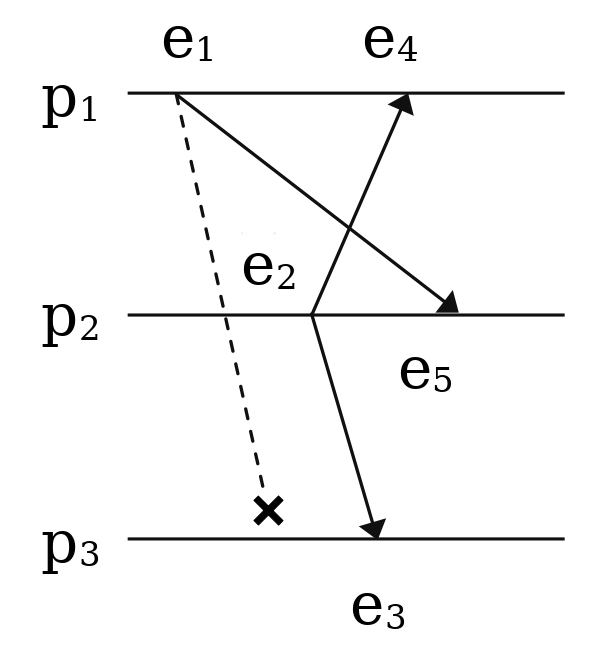}
        \caption{Covered execution.}
        \label{fig:covered}
     \end{subfigure}
    \caption{Examples of non-covered and covered executions.}
    \label{fig:executions}
\end{figure}

The two executions in the examples above have events whose time difference is unrelated to message delays.
By consequence, the duration of these executions can grow irrespective of any bound imposed by message exchanges.
Similarly, in Figure~\ref{fig:unbounded}, since there is no message being received in $e_3e_4e_5$ from $e_1e_2$, there is no round assignment defined when using $\NTR$ to $e_3$, $e_4$ and $e_4$.

We then restrict the analysis of these metrics to executions that are \emph{covered}. Formally:

\begin{definition} [Covered Execution]
\label{def:covExec}
    A \emph{hole} in an execution is a pair $(e_l,e_{l+1})$ in which no event in $e_{l+1}...$ receives a message from $...e_l$, in other words, there are no message hops among the two sequence of events. An execution is \emph{covered} iff it has no holes.
\end{definition}

Abraham et al.~\cite{abraham-metric} introduce $\NTR$ as an equivalent to $\CR$, however, no formal proof is provided.
The next result corroborates this claim in covered executions.
Later in Example~\ref{ex:diffCRNTR}, we show that using $\NTR$ naively to measure time between events may \emph{not} match $\CR$.

\begin{theorem}
    A finite covered execution $E$ has $k$ $\CR$ rounds iff it has $\lceil k \rceil$ $\NTR$ rounds.
\end{theorem}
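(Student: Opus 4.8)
The claim is that a finite covered execution $E$ has $k$ $\CR$ rounds if and only if it has $\lceil k \rceil$ $\NTR$ rounds. Since $\NTR$ always yields an integer, the content of the statement is really: the number of $\NTR$ rounds equals $\lceil k \rceil$, where $k = \sup_{\overline{E} \in \cT(E)} \Duration(\overline{E})/\delta_{\overline{E}}$. My plan is to prove this by bounding the $\CR$ value from both sides in terms of the $\NTR$ value. Let $\rho$ denote the number of $\NTR$ rounds of $E$, i.e., the round assigned to the last event by Definition~\ref{def:nonTimeRounds}. I will show (i) $k \le \rho$, so $\lceil k \rceil \le \rho$, and (ii) $k > \rho - 1$, so $\lceil k \rceil \ge \rho$; together these force $\lceil k \rceil = \rho$.

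For the upper bound $k \le \rho$, I would take an arbitrary timed execution $\overline{E} \in \cT(E)$ with maximum message delay $\delta$ and show $\Duration(\overline{E}) \le \rho \cdot \delta$. The key structural fact is that in a covered execution, every pair of consecutive events is connected by a chain of message hops, so the events $e_0^* , e_1^*, \ldots, e_\rho^*$ from the $\NTR$ construction "anchor" the timeline: each event in round $r$ occurs no later than $e_r^*$, and $e_r^*$ receives a message sent in round $r-1$, hence sent no earlier than $e_{r-1}^*$ minus... — more carefully, the message delivered at $e_r^*$ was sent at some event $e_j$ in round $r-1$, which occurs at time $\ge t_{\text{start}}$ offset appropriately; by induction $t(e_r^*) \le t(e_{r-1}^*) + \delta$. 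Summing the telescoping inequality from $r=1$ to $\rho$ gives $t(e_\rho^*) - t(e_0) \le \rho\delta$, and since every event lies at or before $e_\rho^*$ (the last event is in round $\rho$), $\Duration(\overline{E}) \le \rho\delta$, hence $\Duration(\overline{E})/\delta \le \rho$. Taking the supremum gives $k \le \rho$.

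For the lower bound $k > \rho - 1$ (strictly, so that $\lceil k\rceil \ge \rho$ whenever $\rho \ge 1$; the case $\rho = 0$ is trivial since then $E$ has one event and $k = 0$), I would construct a family of time assignments realizing a duration arbitrarily close to $\rho \cdot \delta$. The idea: normalize $\delta = 1$ and assign times so that every message of round $r-1$ takes essentially the full unit delay to reach $e_r^*$. Concretely, place all events up to $e_0^*$ at time $0$, and for each $r$, stretch the events strictly between $e_{r-1}^*$ and $e_r^*$ into the open interval $(r-1, r)$, placing $e_r^*$ at time $r - \epsilon_r$ with $\epsilon_r \to 0$; one must check that this keeps all message delays $\le 1$ — this uses the $\NTR$ round definition, which guarantees a message received at an event of round $r$ was sent at an event of round $r-1$ or $r$, hence in a time interval of length $< 2$, and with care in choosing the stretching the delays stay below $1$ (actually one wants delays bounded by $1$; the subtle point is messages sent in round $r$ and received in round $r$, and messages received at $e_r^*$ which by definition come from round $r-1$). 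As $\epsilon_r \to 0$ the duration approaches $\rho$, so $k \ge \rho$; combined with $k \le \rho$ this gives $k = \rho = \lceil k \rceil$. (If one only gets $k > \rho - 1$ from a cruder construction, that still suffices for $\lceil k \rceil = \rho$ together with the upper bound, provided $k$ is not itself an integer less than $\rho$, which the upper bound $k \le \rho$ rules out except $k = \rho$.)

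**Main obstacle.** The delicate part is the lower-bound direction: exhibiting time assignments whose duration approaches $\rho\delta$ while respecting $\delta$ as the *maximum* delay — in particular handling "fast" messages that are sent and received within the same round or across a round boundary without letting their delay exceed $\delta$, and simultaneously ensuring $\delta$ really is the max (not some smaller value, which would inflate the $\CR$ count in our favor but must be controlled to land exactly on $\rho$). The covered hypothesis is exactly what prevents holes from letting the duration blow up independently of delays, so I would be careful to invoke it wherever I assert that consecutive events are linked by bounded-delay message chains. I also expect to need the observation (provable from Definition~\ref{def:nonTimeRounds}) that $e_r^*$ is well-defined for all $r \le \rho$ precisely because $E$ is covered, which is implicitly where the theorem's hypothesis does its work.
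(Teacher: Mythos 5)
Your overall route is genuinely different from the paper's: you bound $k=\sup_{\overline{E}}\Duration(\overline{E})/\delta_{\overline{E}}$ directly from both sides in terms of the $\NTR$ round count $\rho$, whereas the paper pivots through an intermediate combinatorial quantity (the minimum number of message hops needed to cover $E$), proving separately that this quantity equals $\lceil k\rceil$ for $\CR$/Round and equals $\rho$ for $\NTR$. Your upper bound $k\le\rho$ is correct: since every round-$(r-1)$ event precedes $e_{r-1}^*$ and $e_r^*$ receives a message sent at a round-$(r-1)$ event, $t(e_r^*)\le t(e_{r-1}^*)+\delta$ telescopes to $\Duration(\overline{E})\le\rho\,\delta_{\overline{E}}$ (using coveredness to ensure the last event is $e_\rho^*$). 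This is essentially the ``necessity'' half of the paper's hop-counting argument in a cleaner form.

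The gap is in the lower bound, and it is exactly at the point you flagged as the main obstacle. The construction you describe --- stretching the events of round $r$ across the open interval $(r-1,r)$ with $e_r^*$ near time $r$ --- does not keep delays at or below $1$. A message can be sent at the \emph{first} event of round $r-1$ (placed just after time $r-2$) and delivered at $e_r^*$ (placed near time $r$); such messages exist in general (e.g., $e_r^*$ by definition receives a message from \emph{some} round-$(r-1)$ event, which may be the earliest one). Its delay approaches $2$, so either the assignment is inadmissible for the Round metric or $\delta_{\overline{E}}\approx 2$ and the $\CR$ value of that timed execution collapses to about $\rho/2$, which is not $>\rho-1$. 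The observation that the sender lies in round $r-1$ or $r$ only confines the delay to an interval of length $<2$, not $\le 1$. The fix is to \emph{compress} rather than stretch: place all events of round $r$ in a window of width $\epsilon\, n_r$ ending at $t_r^*$, and set $t_r^*=t_{r-1}^*+1-\epsilon\, n_{r-1}$, i.e., shift each cluster left by the accumulated compression so that the distance from the earliest round-$(r-1)$ event to the latest round-$r$ event is exactly $1$; same-round messages then have delay $O(\epsilon)$ and the duration is $\rho-O(\epsilon)$ with $\delta_{\overline{E}}=1$. This is precisely the time assignment the paper builds (with the bookkeeping constraint $\epsilon<\min\bigl(\tfrac{1}{2n_{\max}},\tfrac{1}{(\rho-1)n_{\max}}\bigr)$ to guarantee the ceiling lands on $\rho$). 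With that substitution your two-sided argument goes through and is arguably more direct than the paper's, since it avoids the message-hop characterization entirely.
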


\begin{proof}
    See Appendix~\ref{app:messageHops}.
\end{proof}

\subsection{Time between arbitrary events}

In long-lived executions (such as those of atomic snapshot algorithms) we are interested in measuring time between two events, for instance, between an application call and response.
Definition~\ref{def:crMetric} can easily be adapted to measure the number of rounds between two events as follows:

\begin{definition}[Generalized $\CR$ metric]
\label{def:genCR}
    Let $E$ be an execution, 
    let $\cT(E )$ denote the set of all timed executions $\overline{E}$ based on $E$, and $\delta_{\overline{E}}$ - the maximum message delay in $\overline{E}$.
    Let $e_i$ and $e_j$ ($j > i$) be events in $E$, and $t_i$ and $t_j$ time assignments in $\overline{E}$ for them respectively.
    Then we say that in between $e_i$ and $e_j$ there are:
    $\sup_{\overline{E} \in \cT( E )} (t_j-t_i)/ \delta_{\overline{E}}$
    \emph{$\CR$ rounds}.
\end{definition}

An appealing way of defining time between two events $e_i$ and $e_j$ using a non-timed metric is to assign rounds according to $\NTR$, and then take the difference of rounds assigned to $e_i$ and $e_j$.
As illustrated in Example~\ref{ex:diffCRNTR}, this definition can diverge from generalized $\CR$.

\begin{figure}[tp]
    \centering
    \includegraphics[width=0.5\textwidth]{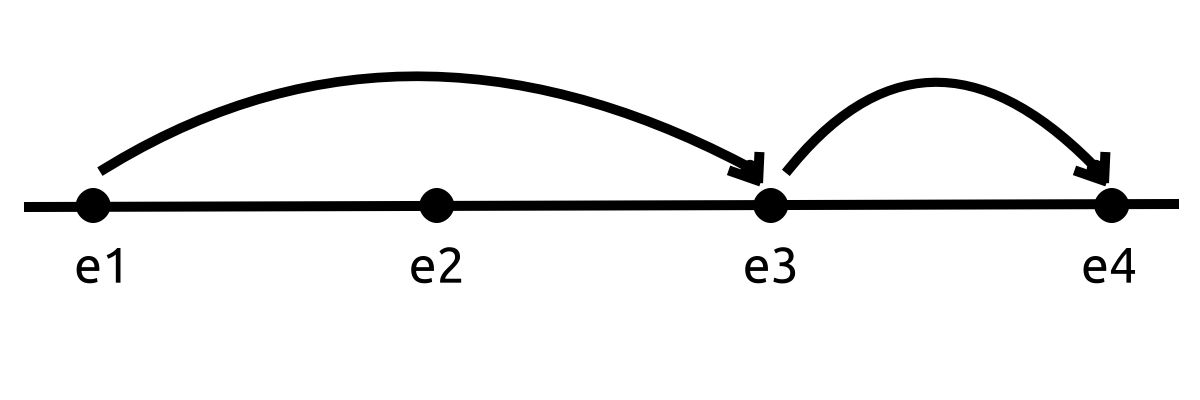}
    \caption{An execution in which there are $2$ $\CR$ rounds between $e_2$ and $e_4$. However, the difference of the rounds assigned to $e_2$ and $e_4$ using $\NTR$ is $1$.}
    \label{fig:eventCR}
\end{figure}

\begin{example}
\label{ex:diffCRNTR}
    Consider the execution shown in Figure~\ref{fig:eventCR}.
    We can assign times to $e_1$, $e_3$ and $e_4$ such that the two message hops have delay of $\delta$.
    Now consider the number of rounds between $e_2$ and $e_4$, since we can assign a time for $e_2$ that is arbitrarily close to $e_1$'s assignment, there are $2$ $\CR$ rounds between $e_2$ and $e_4$.
    However, the round assignments using $\NTR$ to $e_2$ and $e_4$ are $1$ and $2$ respectively, so simply taking the difference between them leads to a value that diverges from $\CR$.
\end{example}

We then give the following definition, using the approach described in Section~\ref{subsec:ourMetric}:

\begin{definition}[Generalized $\NTR$]
\label{def:genNTR}
Given an execution $E$, let $e_i$ and $e_j$ ($j > i$) be events in $E$. The number of rounds between $e_i$ and $e_j$ is given by the round assigned to $e_j$ according to the following:
\begin{itemize}
    \item All events up to (and including) $e_i$ are assigned round $0$. We also write $e_0^* = e_i$;
    \item For any $r \geq 1$, let $e_r^*$ be the last event where a message of round $r-1$ is delivered. All events after $e_{r-1}^*$ until (and including) event $e_r^*$ are in round $r$.
\end{itemize}
\end{definition}

\begin{theorem}
\label{th:evEquiv}
    Let $E$ be a covered execution and $e_i$ and $e_j$ ($j>i$) be events of $E$.
    There are $k$ rounds in between $e_i$ and $e_j$ according to $\CR$ (Definition~\ref{def:genCR}) iff there are $\lceil k \rceil$ rounds in between them according to $\NTR$ (Definition~\ref{def:genNTR}).
\end{theorem}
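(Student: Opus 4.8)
The statement generalizes the earlier global-execution theorem ($E$ has $k$ $\CR$ rounds iff $\lceil k\rceil$ $\NTR$ rounds) to the interval between two events $e_i,e_j$. My plan is to reduce the two-event claim to the already-proved global claim by a ``truncate and restart'' argument. Specifically, given $E$ and the distinguished events $e_i \prec e_j$, I would build an auxiliary execution $E'$ that coincides with $E$ from $e_i$ onward (so the combinatorial structure of message hops in $[e_i,\dots,e_j]$ is preserved), treat $e_i$ as the new initial event $e_0$, and observe that Definition~\ref{def:genNTR} applied to $E$ with anchor $e_i$ is literally Definition~\ref{def:nonTimeRounds} applied to $E'$. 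The subtle point is that messages sent \emph{before} $e_i$ but received \emph{after} $e_i$ must be accounted for: in the $\NTR$ bookkeeping they behave like ``round-$0$ messages'' (since all events up to $e_i$ get round $0$), which is exactly how $e_0^*$'s round-$0$ messages behave in the global definition, so the match is clean. On the $\CR$ side, I would argue that for any timed execution $\overline{E}\in\cT(E)$, the sub-timed-execution from $t_i$ to $t_j$ has the same maximum message delay (up to the fact that delays only shrink when we are free to push $t_i$ later / $t_j$ earlier), and conversely any timed execution of $E'$ extends to one of $E$ without increasing $\delta$; hence $\sup_{\overline E}(t_j-t_i)/\delta_{\overline E}$ computed over $\cT(E)$ equals the $\CR$-round count of $E'$.

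**Key steps, in order.**
\begin{enumerate}
\item \emph{Construct $E'$.} Let $E'$ be the execution whose event sequence is $e_i, e_{i+1},\dots$ with $e_i$ relabeled as the initial event; messages in transit at $C_{i-1}$ that are consumed at or after $e_i$ are placed in $E'$'s initial buffer. Note $E'$ is covered because $E$ is (no hole can be created by truncation on the left, since every message hop of $E$ lying entirely in $[e_i,\dots]$ survives, and holes are about hops \emph{within} the tail).
\item \emph{Match the $\NTR$ counts.} Show that the round assigned to $e_j$ by Definition~\ref{def:genNTR} with anchor $e_i$ in $E$ equals the round assigned to $e_j$ by Definition~\ref{def:nonTimeRounds} in $E'$. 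This is essentially a syntactic identity: both procedures set all events up to the anchor to round $0$, define $e_r^*$ as the last event receiving a round-$(r-1)$ message, and messages from pre-$e_i$ events count as round-$0$ messages in $E'$ just as they are round-$0$ in the anchored procedure. Make explicit that $e_0^*=e_i$ in both.
\item \emph{Match the $\CR$ counts.} Prove $\sup_{\overline E\in\cT(E)}(t_j-t_i)/\delta_{\overline E} = \sup_{\overline{E'}\in\cT(E')}\Duration(\overline{E'})/\delta_{\overline{E'}}$. For ``$\le$'': from $\overline E$ restrict to times $t_i,\dots,t_j$, shift so $t_i=0$; the restricted timed execution is a valid timed execution of $E'$ (the initial-buffer messages get delays $t_{\text{recv}}-0 \ge t_{\text{recv}}-t_i$, and one checks delays stay $\le\delta_{\overline E}$ after possibly rescaling — or simply note $\delta_{\overline{E'}}\le\delta_{\overline E}$ so the ratio does not decrease), giving $\ge$ in the ratio. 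For ``$\ge$'': given $\overline{E'}$, prepend a timed prefix for $e_1,\dots,e_{i-1}$ with tiny inter-event gaps and delays $\le\delta_{\overline{E'}}$ for the few pre-$e_i$ messages; this yields $\overline E\in\cT(E)$ with the same $t_j-t_i$ and $\delta_{\overline E}=\delta_{\overline{E'}}$.
\item \emph{Invoke the global theorem.} Apply the already-proved equivalence (``finite covered execution has $k$ $\CR$ rounds iff $\lceil k\rceil$ $\NTR$ rounds'', proof in Appendix~\ref{app:messageHops}) to the finite covered prefix of $E'$ ending at $e_j$. Combining with Steps 2 and 3 yields the claim.
\end{enumerate}

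**Main obstacle.** The delicate part is Step~3, handling the messages that straddle the anchor $e_i$: on the $\CR$ side, pushing $t_i$ to be ``arbitrarily close to the previous event'' (as in Example~\ref{ex:diffCRNTR}) is what makes the supremum exceed a naive difference, and I must make sure the construction of $\overline{E'}$ from $\overline E$ genuinely realizes this — i.e., that shrinking the $[t_i,t_j]$ window while holding $\delta$ fixed is always feasible given the straddling messages' delay constraints. Concretely, a message sent at $t<t_i$ and received at $t'\in(t_i,t_j]$ imposes $t'-t\le\delta$, which is a constraint \emph{not} visible inside the window $[t_i,t_j]$; I need to argue we can always re-time the prefix (and, if necessary, uniformly rescale) so these constraints are met without inflating $\delta_{\overline{E'}}$ beyond $\delta_{\overline E}$, or symmetrically that from any $\overline{E'}$ the prefix can be added cheaply. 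Once the straddling-message bookkeeping is pinned down, the rest is a routine transfer of the global Round/$\CR$/$\NTR$ equivalence to the interval setting.
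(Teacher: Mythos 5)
Your reduction via the truncated execution $E'$ has a genuine gap that your own ``main obstacle'' paragraph gestures at but does not resolve, and in fact the truncation cannot be made to work as stated: messages that straddle the anchor (sent before $e_i$, received in $(e_i,e_j]$) or that straddle the right endpoint (sent in $[e_i,e_j]$, received after $e_j$) are exactly what make the interval covered and what bound $\delta$, and both roles are destroyed by cutting. Concretely, take $E=e_1e_2e_3e_4e_5$ with messages $e_1\to e_2$, $e_1\to e_4$, $e_3\to e_5$, anchor $e_i=e_2$, target $e_j=e_4$. $E$ is covered and both generalized metrics give $1$ round between $e_2$ and $e_4$ (the hop $(e_1,e_4)$ is a round-$0$ message under Definition~\ref{def:genNTR}, and forces $t_4-t_2<t_4-t_1\le\delta_{\overline E}$ under Definition~\ref{def:genCR}). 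But in your $E'$ the message $e_1\to e_4$ sits in the initial buffer with no sending event, so it is not a message hop, $(e_2,e_3)$ becomes a hole (so Definition~\ref{def:covExec} fails and the global theorem is inapplicable), and it has no defined delay, so it no longer constrains $\delta_{\overline{E'}}$; stretching $t_3-t_2$ then makes $(t_4-t_2)/\delta_{\overline{E'}}$ unbounded. A symmetric failure occurs at the right end: a hop sent inside $[e_i,e_j]$ and received after $e_j$ can be the only thing covering a gap before $e_j$, and vanishes from the ``finite covered prefix of $E'$ ending at $e_j$'' you invoke in Step~4. Patching with a dummy sender event at the anchor changes the delay of a straddling message from $t_{\mathrm{recv}}-t_{\mathrm{send}}$ to $t_{\mathrm{recv}}-t_i$, so you would still owe a proof that the supremum is unchanged; this is not routine.

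The paper avoids truncation altogether: it characterizes both generalized metrics by a common combinatorial quantity, the minimum number of message hops $M$ \emph{drawn from all of $E$} whose union of intervals contains $\interval(e_i\ldots e_j)$ (Appendix~\ref{app:arbEvents}, Theorems~\ref{th:genCRtoHops} and~\ref{th:eqNTRevent}). Because a covering hop is allowed to start before $e_i$ or end after $e_j$, the straddling messages stay in play on both sides, which is precisely the information your $E'$ discards. If you want to salvage a reduction-style argument, you would need to augment $E'$ with explicit sender/receiver stubs for every straddling message and then prove that the CR supremum and the NTR round of $e_j$ are invariant under that surgery --- at which point you have essentially reproved the covering characterization.
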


\begin{proof}
    See Appendix~\ref{app:arbEvents}.
\end{proof}

\subsection{Relating IRA to NTR}

\begin{theorem}
\label{th:IRAtoNTR}
    Let $E$ be a finite covered execution and suppose that all events of $E$ are assigned rounds according to $\IRA$ after all iterations of the algorithm.
    It holds that:

    \begin{enumerate}
        \item Round $0$ is composed only of $e_0$ (the initial event).
        \item The final event of round $i+1$ is the last event to receive a message from round $i$.
    \end{enumerate}
\end{theorem}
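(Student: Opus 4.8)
The plan is to prove the two claims by induction on the round number, essentially verifying that $\IRA$ (Algorithm~\ref{alg:refinedNTR}) reproduces the round-assignment of $\NTR$ (Definition~\ref{def:nonTimeRounds}) on a covered execution. For claim (1), I would argue directly from the algorithm: the initialization sets $e_0^* := e_0$ with round $0$, and round $0$ can only be reassigned if some later event $e_i$ receives a message whose sending event was assigned round $0$, in which case line~\ref{line:defRound} would assign round $1$ (not $0$) to the intervening events. Since no event receives a message from a ``round $-1$'' event, no event other than $e_0$ is ever put in round $0$. (One subtlety: a later event $e_i$ receiving a message from $e_0$ triggers the $r'=0$ branch, assigning round $1$ to everything strictly after $e_0$ up to $e_i$; this is consistent with claim (1) and with claim (2) for $i=0$.)

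For claim (2), I would set up an induction on $i$, with the hypothesis that after all iterations, the events of round $i$ form a contiguous block ending precisely at the last event receiving a message from round $i-1$ (for $i\ge 1$), and for $i=0$ the block is just $\{e_0\}$. The base case is claim (1). For the inductive step, the key observation is how $e_{r'+1}^*$ is updated in line~\ref{line:lastStar}: each time some $e_i$ receives a message from an event $e_j$ of round $r'$, the algorithm extends round $r'+1$ up to $e_i$ and records $e_{r'+1}^* := e_i$. The final value of $e_{i+1}^*$ is therefore the last event $e_i$ in the whole execution that receives a message originating from \emph{some} event of round $i$ — provided that when such a late $e_i$ is processed, the round of its oldest sender $e_j$ has already stabilized to $i$. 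This is where I would lean on the covered hypothesis and on FIFO/ordering: because $E$ is covered, every prefix is ``connected'' by message hops, so rounds are assigned in a monotone, non-decreasing way as $i$ increases, and by the time $e_i$ is reached, all earlier events (in particular $e_j$ and $e^*_{r'}$) have their final round numbers. I would also need the clause ``let $e'$ be the most recent event among $e_{r'}^*$ and $e_j$'' to ensure we never re-lower a round already assigned — this guarantees monotonicity of the round function along the execution order.

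The main obstacle I anticipate is justifying that the round assigned to an event is \emph{stable}, i.e. once the algorithm moves past an event, its round is never revisited. The algorithm only ever processes events left to right and only ever assigns round $r'+1$ to a block ending at the current event $e_i$, with $r'$ being the round of an \emph{earlier} event; so I must show that $r'$ is already final when $e_i$ is processed, and that $r'+1$ is indeed the correct $\NTR$-round of $e_i$. The cleanest way is to prove a lemma: for all $i$, when iteration $i$ of the for-loop completes, every event $e_0,\dots,e_i$ carries its final round number, and that number equals its $\NTR$-round. This lemma, proved by strong induction on $i$, simultaneously yields stability and correctness; claims (1) and (2) of the theorem then follow by reading off the structure of $\NTR$ (Definition~\ref{def:nonTimeRounds}): round $0$ is $\{e_0\}$, and $e_{i+1}^*$ — the last event of round $i+1$ — is by definition the last event delivering a message of round $i$.

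One more point I would address explicitly: the equivalence with $\NTR$ requires that $\IRA$ and $\NTR$ partition events into the \emph{same} contiguous blocks. The potential mismatch is the ``most recent event among $e_{r'}^*$ and $e_j$'' step — if $e_j$ precedes $e_{r'}^*$, $\IRA$ starts the new block at $e_{r'}^*$ rather than at $e_j$, which matches $\NTR$'s rule that ``all events after $e_{r-1}^*$ until $e_r^*$ are in round $r$.'' I would verify this correspondence carefully, as it is exactly the refinement $\IRA$ makes over a naive reading of Definition~\ref{def:nonTimeRounds}, and it is where an off-by-one or an overlooked non-covered gap would break the argument. Since $E$ is assumed covered, no such gap arises, and the block structures coincide, completing the proof.
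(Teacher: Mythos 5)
Your overall plan---induction on the round number, with claim (1) as the base case and the ``most recent of $e_{r'}^*$ and $e_j$'' clause plus coveredness driving the inductive step---is the same skeleton as the paper's proof. But the key lemma you propose to make it rigorous is false, and this is a genuine gap. You claim that ``when iteration $i$ of the for-loop completes, every event $e_0,\dots,e_i$ carries its final round number,'' and, relatedly, that the round $r'$ of the oldest sender $e_j$ ``has already stabilized'' when $e_i$ is processed. Neither holds. Take $E=e_0e_1e_2e_3$ where $e_1$ sends a message received by $e_2$ and $e_0$ sends a message received only by $e_3$ (this execution is covered). After iteration $1$, $e_1$ sits in round $0$ via line~\ref{line:momentRound}; after iteration $2$, $e_2$ is placed in round $1$ with $e_1$ \emph{still} in round $0$; only at iteration $3$ does the sweep of line~\ref{line:defRound} move $e_1$ to its final round $1$. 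A similar example ($e_0\!\to\!e_1$, $e_2$ receives nothing, $e_2\!\to\!e_3$, $e_1\!\to\!e_4$) shows that an event ($e_2$) can be processed as a round-$1$ sender and later be revised to round $2$. So rounds are \emph{not} stable iteration by iteration; revision of already-processed events is the whole point of the sweep. For the same reason, your claim-(1) assertion that ``no event other than $e_0$ is ever put in round $0$'' is literally false: non-receiving events are put in round $0$ temporarily, and you need coveredness to argue they are all eventually swept into round $1$ by the last event that receives a message from $e_0$.

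The paper avoids this trap by reasoning only about the \emph{final} (post-all-iterations) assignment. For each $i$ it takes $e^*$, the last event to receive a message from (final) round $i$, observes that $e^*$ must end up in round $i+1$, and then disposes of any later events momentarily assigned round $i+1$ by invoking coveredness: there is a last event receiving a message from round $i+1$, and its iteration sweeps everything after $e^*$ into round $i+2$, so $e^*$ is what remains as the final event of round $i+1$. If you want to salvage your iteration-indexed lemma, you would have to weaken it to something like ``an event's assigned round never decreases and its final value equals its $\NTR$-round,'' and then still supply the paper's coveredness argument to pin down where each round ends; as written, your induction has no valid invariant to carry.
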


\begin{proof}
    See Appendix~\ref{app:thIRA}.
\end{proof}

\begin{corollary}
    \label{col:IRAisNTR}
    $\IRA$ and $\NTR$ assign the same rounds to events in covered executions.
\end{corollary}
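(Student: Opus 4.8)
The plan is to show that the round assignment $\IRA$ leaves behind on a (finite) covered execution satisfies exactly the recurrence that defines $\NTR$ (Definition~\ref{def:nonTimeRounds}), so the two assignments must be identical. Almost all of this is already contained in the preceding theorem: its item~1 states that after $\IRA$ terminates round~$0$ consists solely of $e_0$, which matches the first clause of Definition~\ref{def:nonTimeRounds} and in particular fixes $e_0^* = e_0$; its item~2 states that the final event of $\IRA$-round~$(i+1)$ is the last event receiving a message from $\IRA$-round~$i$, which is precisely how $\NTR$ specifies the boundary $e_{i+1}^*$.

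I would then argue by induction on the round number $i$ that $\IRA$ and $\NTR$ assign the same set of events to each round $\le i$. The base case $i=0$ is item~1 of the theorem. For the step, assume agreement up to round $i$. Then the collection of messages sent by round-$i$ events is the same under both assignments, so ``the last event that receives a message from round~$i$'' denotes one and the same event; by item~2 of the theorem it is the last event of $\IRA$-round~$(i+1)$, and by Definition~\ref{def:nonTimeRounds} it is the last event of $\NTR$-round~$(i+1)$. It then remains only to see that both schemes assign round $i+1$ to every event lying strictly after $e_i^*$ and up to this common endpoint; for $\NTR$ this is the definition.

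The crux is this last point: upgrading the coincidence of the block endpoints to the coincidence of the whole blocks. What is needed is that the assignment produced by $\IRA$ is nondecreasing along the execution and labels each round with a contiguous segment of events. This is evident from the algorithm itself --- line~\ref{line:defRound} always (re)labels an entire interval of consecutive events ending at the current one, and line~\ref{line:lastStar} records its right endpoint as $e_{r'+1}^*$ --- and it is effectively what the theorem's proof establishes (the arguments that no later iteration can relabel the events of round $i+1$, and that the events following $e_i^*$ up to the next pertinent receive form exactly round $i+1$). With contiguity in hand, the equality of the two versions of $e_{i+1}^*$ obtained in the inductive step forces the two round-$(i+1)$ blocks to coincide, completing the induction; and since both $\IRA$ and $\NTR$ assign a round to every event of the execution, they agree everywhere. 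The same round-by-round reasoning extends the statement to infinite covered executions.
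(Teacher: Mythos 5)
Your proof is correct and follows the paper's intended route: the paper states this corollary without proof, treating it as immediate from items~1 and~2 of the preceding theorem, and your induction on the round number together with the contiguity observation is exactly the argument needed to make that implication explicit.
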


%\bibliographystyle{abbrv}
%\bibliography{references}

\appendix

\section{Protocol with $O(n^2)$ message complexity per request}
\label{app:refinedAlg}

Algorithm~\ref{alg:longLA} has a complexity of $(n^2)$ messages per \emph{proposed} value, where each proposal might contain an arbitrary number of requests.
As a consequence, processes are required to exchange messages that can grow indefinetely in size, resulting in high network bandwidth usage.
We address this problem in Algorithm~\ref{alg:refinedlongLA}, with a few small modifications to Algorithm~\ref{alg:longLA}.

Instead of waiting for validation from a quorum and relaying entire proposals, processes keep track of each individual request, and relay only the difference between a received proposal and the current values waiting for validation.
The same occurs in the $\ACCEPT$ phase of the protocol, where processes only send the difference between new learned values and previous ones.
With these modifications, an individual request is now relayed once by every process before proposing, another time at the proposal and validation phase and one last time in the $\ACCEPT$ message, for a total of $3*n^2$ messages per request.

\begin{algorithm}
{\footnotesize
\begin{smartalgorithmic}[1]

\Upon{\Startup}
    \State $\Proposing, \MPool, \Pending, \Relaying, \Validated, \Learned, \ToAdopt \gets \emptyset$
\EndUpon
\algspace[0.5]

\Operation{\Propose}{$v$}
    \State \SendRequest($v$)
    \State wait until $v \sqsubseteq \bigsqcup\Learned$ 
    \State \Return $\bigsqcup\Learned$
\EndOperation
\algspace[0.5]

\Operation{\SendRequest}{$v$}
    \State $\MPool \gets \MPool \cup \{v\}$
    \State \Send{$\REQUEST, v$}{every other node}
\EndOperation
\algspace[0.5]

\UponReceiving{$\REQUEST, v$}{a node}
    \If{$v \not\in \MPool \cup \Proposing \cup \Learned$}
        \State $\MPool \gets \MPool \cup \{v\}$
        \State \Send{$\REQUEST, v$}{every other node}
    \EndIf
\EndUponReceiving
\algspace[0.5]

\UponEvent{$(\MPool \neq \emptyset) \wedge (\Proposing  = \emptyset)$}
    \State $\Proposing \gets \MPool$
    \For{$v \in \MPool$}
        \State $\Pending[v] \gets 1$
    \EndFor
    \State $\MPool \gets \emptyset$
    \State \Send{$\PREPARE, \Proposing$}{every other node}
\EndUponEvent
\algspace[0.5]

\UponReceiving{$\PREPARE, V$}{a node}
    \State $\Relaying \gets \emptyset$
    \For{$v \in V$}
        \If{$v \in \Pending.\keys()$}
            \State $\Pending[v]++$
        \Else
            \State $\Pending[v] \gets 1$
            \State $\Relaying \gets \Relaying \cup \{v\}$
        \EndIf
    \EndFor
    \If{$\Relaying \neq \emptyset$}
        \State \Send{$\PREPARE, \Relaying$}{every node}
    \EndIf
\EndUponReceiving
\algspace[0.5]

\Upon{exists $v$ s.t. $\Pending[v]=n-f$}
    \State $\Validated \gets \Validated \cup \{v\}$
\EndUpon
\algspace[0.5]

\UponEvent{$\Pending.\keys() \subseteq \Validated$}
    \If{$\Learned \subset \Validated$}
        \State $\Delta\Learned \gets \Validated - \Learned$
        \State $\ToAdopt \gets \ToAdopt - \Delta\Learned$
        \State $\Learned \gets \Validated$
        \State $\Proposing \gets \emptyset$
        \State \Send{$\ACCEPT,\Delta\Learned$}{every node}
    \EndIf
\EndUponEvent
\algspace[0.5]

\UponReceiving{$\ACCEPT, W$}{a node}
    \State $\ToAdopt \gets \ToAdopt \cup W$
    \If{$(\Proposing \subseteq \ToAdopt)$}
        \State $\Validated \gets \Validated \cup \ToAdopt$
        \State $\Learned \gets \Learned \cup \ToAdopt$
        \State $\Proposing \gets \emptyset$
        \State \Send{$\ACCEPT,\ToAdopt$}{every node}
    \EndIf
\EndUponReceiving
\end{smartalgorithmic}
}
\caption{Refined Long-Lived LA: code for node $x$.}
\label{alg:refinedlongLA}
\end{algorithm}

\section{Time Complexity of Algorithm~\ref{alg:longLA}}
\label{app:ourLAcomp}

We show that an operation $\Op$ in Algorithm~\ref{alg:longLA} takes $O(k)$ rounds to complete, where $k$ is the number of active faulty nodes during $\Op$.

Messages from and to faulty nodes may not arrive, however, a message sent by (and to) a faulty node at round $r$ is received at most by round $r+1$.
Moreover, since channels are FIFO, when a node $i$ receives a message from another node $j$, $i$ must also have received all previous messages $j$ sent to $i$, irrespective of them being correct or faulty.

If a correct node receives $\langle \PREPARE, v' \rangle$ (even from a faulty node) in round $r$, every correct node will have $v'$ added to $\Pending$ by the end of round $r+1$, and will have $v'$ validated by the end of round $r+2$.
Also, faulty nodes wait for its current proposal to finish before starting a new one, in which case they send an $\ACCEPT$ message for the last learned value before sending the new proposal.

We say that a node \emph{introduces a new value} $w$ during the operation if it is the first node to send a $\langle \PREPARE, w \rangle$ for $w$ in the interval of the operation.
A node can introduce a new value coming from an internal source, i.e., the value was buffered and proposed when the node had already finished its previous proposal, or from an external source, i.e., after receiving a proposal originated from another node before the operation started.

Let $v$ be the value received from the application call for $\Op$ and $e_C$ (as well as all previous events) be assigned round $0$.
If there are no active faulty nodes, a correct node learns a value containing $v$ by at most round $7$ (by Lemma~\ref{lm:timeLearn}, here, we include the time $v$ can remain buffered).
Also by the end of round $5$, every correct node has sent a $\PREPARE$ message for $v$ and has $v$ validated by the end of round $6$ (including buffering time, a correct node proposes $v$ in round $4$ at the latest).
By that point, all correct nodes are waiting for their proposals to complete and, therefore, cannot introduce a value from an internal source.
In order to delay a correct node from leaning a value containing $v$ by round $7$, every correct node should receive a new value in a $\PREPARE$ message before, which is added to $\Pending$ but is not validated.
Using a simple inductive argument, $2k+1$ new proposals originated from faulty nodes are necessary to delay a correct node from learning a value from round $7$ to round $7 + 2k$.

Suppose that there is an execution where it takes $8 + 2k + 1$ rounds for node $i$ to complete an operation.
But there are only $k$ active faulty nodes, which means that at least $k+1$ extra proposals were introduced by active faulty nodes.

Let $f_0$ be an active faulty node that introduced more than one of the $2k+1$ values that delayed the operation (assuming w.l.o.g. that there are \emph{exactly} $2k+1$ new proposals).
Let $w$ and $w'$ be the first and the second values introduced by $f_0$ respectively.
If $w'$ was received by $f_0$ from an internal source, $f_0$ should have finished its previous proposal (and learned a value containing $w$) before proposing $w'$.
But because $w$ was one of the values that delayed the operation, and since channels are FIFO, $f_0$ needs to add $v$ to $\Pending$ before validating $w$ (at least a majority of correct nodes sent a $\PREPARE$ for $v$ before sending a $\PREPARE$ for $w$).
$f_0$ then learns a value containing $v$ and sends $\ACCEPT$ with that value to everyone.
The $\ACCEPT$ message is received by correct processes before $\langle \PREPARE, w' \rangle$, and they would be able to adopt it.

So $f_0$ must have received $\langle \PREPARE, w' \rangle$ from an external source at most by round $1$, which means it issued proposals for $w'$ that can be received by at most round $2$.
We can also conclude that at least $k+1$ values were introduced by active faulty nodes from external sources.
Now let $w_{k+1}$ be the $(k+1)$th such value used to delay correct nodes from learning $v$.
The earliest round $w_{k+1}$ can delay is $7+k$, which means that by round $7+k$ all correct nodes already sent a propose for $w_{k+1}$, but by the end of round $5+k$ no correct node has done it (otherwise $w_{k+1}$ would have been validated in round $7+k$ by every correct process).
Take the first active faulty node $f_1$ from which a correct node received $\langle \PREPARE, w_{k+1} \rangle$.
Since the earliest this message is received is in round $6+k$, the earliest it could be sent is in round $5+k$, so $f_1$ first received $\langle \PREPARE, w_{k+1} \rangle$ from another distinct active faulty node, $f_2$, which sent it in round $4+k$ the earliest.
But $w_{k+1}$ was introduced from an external source and it needs to be received by a faulty node at round $1$.
Following the chain above, for the node $f_{k+6}$ to receive it in round $1$, there would be necessary a chain of $k+6$ active nodes, although there are only $k$.

Therefore, an operation takes less than $8+2k+1$ rounds to complete.

\section{Equivalence Proofs for Time Measurement Metrics}
\label{app:equivMetrics}
In this section, we present detailed proofs for the equivalence between $\CR$ and $\NTR$ in covered executions.
The proofs are written with respect to a new (non-timed) method for interpreting latency: the minimum number of \emph{message hops} that can cover an execution.
Before proceeding, we establish the equivalence between the Round and $\CR$ metrics.

\begin{theorem}
    Round and $\CR$ assign the same number of rounds to finite covered executions.
\end{theorem}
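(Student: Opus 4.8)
The plan is to show mutual inequalities between the number of Round-rounds and the number of $\CR$-rounds for any finite covered execution $E$. Recall that the Round metric takes $\sup_{\overline E\in\cT(E)}\Duration(\overline E)$ over all timed executions in which every message delay is at most one time unit, while the $\CR$ metric takes $\sup_{\overline E\in\cT(E)}\Duration(\overline E)/\delta_{\overline E}$, where $\delta_{\overline E}$ is the \emph{actual} maximum message delay in $\overline E$. The key structural fact I would use is that in a covered execution every ``gap'' between consecutive events is bracketed by a message hop, so the total duration cannot drift away from a multiple of the largest delay; this is exactly what fails in the non-covered examples of Figures~\ref{fig:undefined} and~\ref{fig:unbounded}, and it is why the restriction to covered executions is needed.

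First I would prove that the number of Round-rounds is at least the number of $\CR$-rounds. Given any timed execution $\overline E$ with maximum delay $\delta_{\overline E}$, simply rescale time by dividing every timestamp by $\delta_{\overline E}$; the resulting timed execution $\overline E'$ is still a valid time assignment of $E$ (the ordering and divergence conditions are preserved), all its message delays are now at most $1$, and $\Duration(\overline E')=\Duration(\overline E)/\delta_{\overline E}$. Hence every value achieved by the $\CR$ ratio is achieved by some unit-delay-bounded timed execution, so the supremum defining the Round metric dominates the supremum defining the $\CR$ metric.

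Next I would prove the reverse inequality: the number of $\CR$-rounds is at least the number of Round-rounds. Here I take a timed execution $\overline E$ in which all delays are bounded by $1$; its $\CR$ value is $\Duration(\overline E)/\delta_{\overline E}\ge \Duration(\overline E)/1 = \Duration(\overline E)$ since $\delta_{\overline E}\le 1$. Thus every duration counted by the Round metric is bounded above by the $\CR$ value of the same timed execution, so taking suprema gives the Round metric is at most the $\CR$ metric. Combining the two inequalities gives equality. (Strictly, one should check the suprema are attained or argue with $\varepsilon$-approximations; I would phrase it as: for every $\overline E$ feasible for one metric there is a feasible $\overline E'$ for the other with at least as large a value, which suffices for the equality of suprema.)

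The main obstacle I anticipate is not the scaling argument itself but making sure the suprema are finite and well-behaved for covered executions — i.e. that $\delta_{\overline E}$ is bounded away from $0$ over the relevant family and that $\Duration$ is bounded, so that the ratios do not blow up. This is where coveredness is essential: because $E$ is finite and covered, each of the finitely many event-gaps is spanned by some message hop whose delay is at most $\delta_{\overline E}$, so $\Duration(\overline E)\le (\text{number of rounds})\cdot\delta_{\overline E}$ with a bound independent of the particular time assignment, giving a finite $\CR$ value; and conversely one can always realize a time assignment where every hop has delay exactly the bound, so the supremum is actually achieved. I would isolate this boundedness observation as the one genuinely non-trivial lemma and then let the two rescaling inequalities finish the proof; the message-hop reformulation mentioned in the appendix preamble is the natural language in which to state it.
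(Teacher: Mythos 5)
Your proof is correct and follows essentially the same route as the paper's: both hinge on rescaling timestamps by $1/\delta_{\overline{E}}$ (the paper calls this a \emph{normalization}) to pass between the two families of timed executions, and then obtain equality of the suprema from the two resulting inequalities. The only quibble is your parenthetical claim that the supremum is ``actually achieved'' --- in general it is only approached (e.g.\ via delays of $\delta-\epsilon$ with $\epsilon\to 0$) --- but, as you note yourself, mutual domination of the feasible values already yields equality of suprema, so nothing is lost.
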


\begin{proof}
    Let $E$ be a finite covered execution and $\overline{E}$ a time assignment for $E$, with maximum message delay $\delta_{\overline{E}}$.
    Since we consider algorithms that do not make use of clocks, we can ``shrink'' or ``stretch'' time assignments without altering the steps in the underlying execution.
    Consider the time assignment $\overline{E}'$ built as following:

    \begin{enumerate}
        \item $t_{\Start}(\overline{E}') = t_{\Start}(\overline{E})$;
        \item For every event $\overline{e}_{l}$ in $\overline{E}$ with time $t_{l}$, have $\overline{e}_{l}'$ in $\overline{E}'$ with time $t_l'=t_{\Start} + (t_{l} - t_{\Start})\frac{1}{\delta_{\overline{E}}}$.
    \end{enumerate}

    We call $\overline{E}'$ a \emph{normalization} of $\overline{E}$.
    By construction, the maximum message delay in $\overline{E}'$ is $1$ and $\overline{E}'$ has the same number of $\CR$ rounds than $\overline{E}$.

    Now let $\cT(E)$ be the set of valid executions for the Round metric and $\overline{E} \in \cT(E)$ have $k$ rounds (using Round).
    If $\delta_{\overline{E}} < 1$, then the normalization $\overline{E}'$ of $\overline{E}$ has $k'>k$ rounds: $\Duration(\overline{E}') = t_{end}(\overline{E}')-t_{start}(\overline{E}') = (t_{end}(\overline{E})-t_{start}(\overline{E}))/\delta_{\overline{E}}$.

    Consider the set $\cT'(E)$ of valid executions for the Round metric where for all $\overline{E}' \in \cT'(E)$, $\delta_{\overline{E}'} = 1$.
    Since for every timed execution $\overline{E} \in \cT(E)$ with $k$ rounds, there is a timed execution $\overline{E}' \in \cT'(E)$ with $k'$ rounds where $k' \geq k$, then: $\sup_{\overline{E}' \in \cT'(E)}{\Duration(\overline{E}')} = \sup_{\overline{E} \in \cT(E)}{\Duration(\overline{E})}$.

    Easy to see that every execution $\overline{E}' \in \cT'(E)$ has the same number of rounds according to both $\CR$ and Round metrics.
    So if the Round metric assigns $k$ rounds to $E$ and $\CR$ assigns $k'$, $k' \geq k$.
    But we also know that for any time assignment $\overline{E}$ of $E$, the normalization of $\overline{E}$ is a valid timed-execution for the Round metric and has the same number of $\CR$ rounds as $\overline{E}$.
    This means that $k \geq k'$, since for any time assignment $\overline{E}$, there is a time assignment with the same number of rounds in both Round and $\CR$ metrics,
    therefore $k = k'$.
\end{proof}

\subsection{A new look at execution latency}
\label{app:messageHops}

In covered executions, it seems natural to relate the number of message hops to the number of communication rounds.
Next, we define the concept of covering executions and events with message hops.

Consider the finite execution $E = e_0,\ldots,e_l$.
We can visualize these events as points on a real line, where their positions correspond to their indices, that is, $e_0$ at $0$, $e_1$ at $1$ and so on.
Each pair of events $(e_i,e_j)$ defines an interval $[i,j]$, and we denote this by $\interval((e_i,e_j)) = [i,j]$.
Likewise, $E$ defines the interval $[0,l]$, which we represent as $\interval(E) = [0,l]$.

Since a message hop consists of a pair of events $(e_i,e_j)$, it also specifies an interval $[i,j]$.
For a set $M$ of message hops, we define $\interval(M) = \bigcup_{m \in M}\interval(m)$.

\begin{definition}[Execution cover]
\label{def:seqCover}
Let $E$ be a finite execution and $M$ a set of message hops from $E$.
We say that $M$ \emph{covers} $E$ if $\interval(M) = \interval(E)$.
Analogously, we say that $E$ can be covered by $k$ message hops if $|M| = k$.
\end{definition}

\begin{theorem}
\label{th:messageCover}
    If a covered execution $E$ has $k$ rounds according to the Round metric, then $\lceil k \rceil$ message hops are necessary and sufficient to cover $E$.
\end{theorem}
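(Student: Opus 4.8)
The claim to establish is that a covered execution $E$ with $k$ Round-rounds needs exactly $\lceil k \rceil$ message hops to be covered, where "covered by $M$" means $\interval(M) = \interval(E)$ (Definition~\ref{def:seqCover}). I would prove the two directions separately: sufficiency ($\lceil k\rceil$ hops suffice) and necessity ($\lceil k\rceil$ hops are required). The bridge between Round-rounds and message hops is the normalization idea already used above: any timed execution can be rescaled so that its maximum delay is exactly $1$, and in such a normalized execution the duration equals the number of $\CR$/Round rounds.

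\textbf{Sufficiency.} First I would take a time assignment $\overline{E}$ that achieves (or approaches) the supremum defining the Round metric, normalized so $\delta_{\overline{E}} = 1$ and $\Duration(\overline{E}) = k$ (treating the supremum carefully — if it is not attained, argue with a sequence of assignments or directly with the combinatorial structure below). The key structural observation: greedily build a chain of message hops starting from the first event. Pick a hop $(e_0, e')$ with $e'$ as late as possible among all events receiving a message sent at (or before, in round terms) the start; then from $e'$ pick the next hop reaching as far right as possible, and so on. Because $E$ is covered, at every event there is always some message hop straddling it, so this greedy chain never gets stuck and its intervals' union is all of $\interval(E)$. The number of hops in this chain is bounded by the number of rounds: each hop, being a single message, has delay at most $1$ in the normalized assignment, so advancing through the whole duration $k$ takes at most $\lceil k \rceil$ hops. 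Making the "reaches as far as possible" argument precise — that one round's worth of time is crossed by a single hop — is where I would be most careful; it mirrors the $\NTR$ round-assignment logic where $e_r^*$ is the last receiver of a round-$(r-1)$ message.

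\textbf{Necessity.} Conversely, suppose $E$ can be covered by $j$ message hops $M = \{h_1,\dots,h_j\}$, ordered so their intervals chain together to cover $[0,l]$ (a minimal cover can always be linearly ordered this way, discarding redundant hops). I would construct a time assignment: give the sending and receiving endpoints of each $h_t$ times that force each hop to have delay close to $1$, stretching the "gaps" so that consecutive hops in the chain are nearly end-to-end. This yields a timed execution with $\delta_{\overline{E}} = 1$ and total duration close to $j$, hence close to $j$ Round-rounds; since the Round metric is a supremum over such assignments, $k \le j$, so $j \ge \lceil k \rceil$ (using integrality of the hop count). Combined with sufficiency, the minimal number of covering hops is exactly $\lceil k \rceil$.

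\textbf{Main obstacle.} The delicate point is handling the supremum in the definition of the Round metric simultaneously with the discrete hop count — showing that the "stretching" construction in the necessity direction can push the duration arbitrarily close to the integer $j$ without ever exceeding delay $1$, and symmetrically that the greedy chain in the sufficiency direction genuinely crosses one full time unit per hop rather than potentially less. Both rely essentially on the covered hypothesis (no holes), which guarantees the chain of hops is unbroken; I expect to invoke Definition~\ref{def:covExec} repeatedly and to lean on the normalization lemma established just above for the rescaling bookkeeping.
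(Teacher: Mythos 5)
Your necessity direction is sound and is the same as the paper's: in a time assignment whose duration approaches $k$, each message hop spans at most one unit of time, so any cover needs at least $\lceil k \rceil$ hops. The stretching construction you attach to that direction is actually superfluous there (the unit-delay bound alone suffices), and, as stated for an \emph{arbitrary} minimal cover, it is not clearly valid: other messages in the execution, not belonging to your chosen cover, can constrain the time assignment and prevent consecutive hops from being pushed nearly end-to-end. The paper avoids this by stretching a very particular cover, built greedily so that no message is sent before one segment boundary and received after the next.

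The genuine gap is in your sufficiency direction. From ``each hop has delay at most $1$'' you conclude that the greedy chain ``advances through the whole duration $k$ in at most $\lceil k \rceil$ hops,'' but that inequality points the wrong way: an upper bound of one unit per hop only shows the chain needs \emph{at least} $k$ hops, never at most. In the fixed duration-maximizing assignment a greedy hop may well cross far less than one time unit, so the claim you flag as the ``main obstacle'' (one full unit per hop) is simply false there and cannot be repaired within that assignment. The paper's resolution is the one you placed under necessity: having built the greedy chain of $k'$ hops $(e_1',e_1^*),\ldots,(e_{k'}',e_{k'}^*)$, where $e_i^*$ is the \emph{last} event receiving a message sent in the segment $e_{i-2}^*\ldots e_{i-1}^*$, it constructs a fresh time assignment that compresses each segment into an interval of length $\epsilon_i n_i$ just before its endpoint and sets $t_i^* = t_{i-1}^* - \epsilon_{i-1}n_{i-1} + 1$; a careful choice of $\epsilon < \min\bigl(\tfrac{1}{2n_{max}},\tfrac{1}{(k'-1)n_{max}}\bigr)$ keeps every delay at most $1$ while making the duration $k''$ satisfy $\lceil k'' \rceil = k'$. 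Since $k'' \le k$ by the supremum, this yields $k' \le \lceil k \rceil$, which together with your necessity bound closes the theorem. So you have identified all the right ingredients, but the stretching argument must be applied to the greedy cover in the sufficiency direction, not to an arbitrary cover in the necessity direction.
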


\begin{proof}
     Let $E$ have $k$ rounds according to Round.
    There is a time assignment $\overline{E}$ with $\Duration(\overline{E}) = k - \epsilon$, where $\epsilon > 0$ can be arbitrarily small.
    Starting from $t_{\Start}$, assume that there exists a set of message hops where each hop can cover the maximum amount of time, that is, an interval of one unit, then at least $\lceil \Duration(\overline{E}) \rceil = \lceil k \rceil$ message hops are necessary to cover the whole duration.

    Now, we proceed to build a set $M$ that covers $E$ with $\lceil k' \rceil$ message hops, and show next that there exists a timed execution with $k''$ rounds, where $\lceil k'' \rceil = k'$.

    For the first element of $M$, we take pair $p_1 = (e_1',e_1^*)$ where $e_1'=e_0$ (the initial event) and $e_1^*$ is the last event in $E$ where a message from $e_1'$ is received, we also define $e_0^* = e_0$.
    Now, we inductively take pair $p_i = (e_i',e_i^*)$ where $e_i^*$ is the last event to receive a message originated in $e_{i-2}^*...e_{i-1}^*$ and $e_i'$ is the \emph{first} corresponding event to have sent such message ($e_i^*$ may receive more than one).
    We continue to select pairs until pair $p_{k'} = (e_{k'}',e_{k'}^*)$ where $e_{k'}^*$ is the last event of $E$ (note that this construction is possible since $E$ is covered).

    The set $M = \{(e_1',e_1^*),...,(e_{k'}',e_{k'}^*)\}$ clearly covers $E$, implying that $E$ can be covered by $k'$ message hops.
    We now show that there exists a time assignment $\overline{E}$ with $k''$ rounds in which $\lceil k'' \rceil = k'$.

    Consider the following time assignment:

    \begin{itemize}
        \item $t_0 = t_1' = t_0^* = 0$
             \item $t_1^* = 1$
    \end{itemize}

    Take the sub-sequence $E_1$ containing all events in $e_0^*...e_1^*$ except for $e_0^*$.
    Note that $e_2'$ appears in $E_1$ and by construction, every message originated in $E_1$ that is received in the execution is received before $e_2^*$ or at $e_2^*$.

    We now enumerate the events in $E_1$ in reverse order: $e_1^*$ is assigned $0$, the event preceding $e_1^*$ receives $1$ and so on until the first event of $E_1$ receives $n_1$.
    Assign time to these events according to their enumeration $j_1$ as following:

    \begin{itemize}
        \item $t_{j_1} = t_1^* - \epsilon_1j_1$, where $0 < \epsilon_1n_1 < t_1^*$ if $n_1 > 0$ and $\epsilon_1 = 0$ otherwise.
    \end{itemize}

    We make so that $t_2^* = t_1^* - \epsilon_1n_1 + 1$, so that every message hop originated from $E_1$ satisfies the upper bound on message delay.

    In general, let $E_i$ be the sub-sequence containing all events in $e_{i-1}^*...e_i^*$ except for $e_{i-1}^*$.
    Enumerate the events in $E_i$ in the following order: $e_{i-1}^*$ receives $0$, the event preceding it receives $1$ and so on until the first event in $w_i$ receives $n_i$.
    Assign time to these events according to their enumeration $j_i$ as follows:
    \begin{itemize}
        \item $t_{j_i} = t_i^* - \epsilon_ij_i$, where $0 < \epsilon_in_i < (t_{i}^* - t_{i-1}^*)$ if $n_i >0$ and $\epsilon_i = 0$ otherwise.
    \end{itemize}

    We make so that $t_i^* = t_{i-1}^* - \epsilon_{i-1}n_{i-1} + 1$.
    For simplicity, assume that every $n_i > 0$ (in the case where some $n_i = 0$, we just make $t_i* = t_{i-1}^* + 1$ and the following analysis works analogously).
    From the time assignments above:
    \begin{gather}
        \label{eq:duration}
        \Duration(\overline{E}) = k'' = t_{k'}^* - t_0^* \\
        \label{eq:tEnd}
        t_{k'}^* = k' - (\epsilon_1n_1 + ... + \epsilon_{k'-1}n_{k'-1})
    \end{gather}

    With the following constraint, for all $i = 1,...,k'$ (we make $\epsilon_0n_0 = 0$):
    \begin{equation}
        \label{eq:constraint}
        0 < \epsilon_in_i < 1 - \epsilon_{i-1}n_{i-1}
    \end{equation}

    Let us make $\epsilon = \epsilon_1 = \epsilon_2 = ... = \epsilon_{k'-1}$, and let $n_{max} = max(n_1,...,n_{k'-1})$.
    The conditions in (\ref{eq:constraint}) can be satisfied by making:
    \begin{gather}
        \label{eq:maxCond}
        \epsilon n_{max} < 1 - \epsilon n_{max} \\
        \label{eq:firstCond}
        \epsilon < \frac{1}{2n_{max}}
    \end{gather}

    In order to make $\lceil k'' \rceil = k'$, the difference between $k'$ and $k''$ needs to be in the interval:
    \begin{equation}
        0 \leq (k' - k'') < 1
    \end{equation}
         
    Thus,
    \begin{gather}
        k' - k'' = k' - k' + (\epsilon_1 n_1 + \ldots + \epsilon_{k'-1} n_{k'-1}) < 1 \\
        \label{eq:mainCond}
        (\epsilon n_1 + \ldots + \epsilon n_{k'-1}) < 1 \\
        (\epsilon n_1 + \ldots + \epsilon n_{k'-1}) < (k'-1)\epsilon n_{max}
    \end{gather}

    To satisfy (\ref{eq:mainCond}), we can make so that:
    \begin{gather}
        (k'-1)\epsilon n_{max} < 1 \\
        \label{eq:secondCond}
        \epsilon < \frac{1}{(k'-1)n_{max}}
    \end{gather}
    
    From (\ref{eq:firstCond}) and (\ref{eq:secondCond}):
    \begin{equation}
        \label{eq:voila}
        \epsilon < min(\frac{1}{2n_{max}},\frac{1}{(k'-1)n_{max}})
    \end{equation}

    As long as inequality (\ref{eq:voila}) is satisfied, the time assignments we have chosen guarantee that $\lceil k'' \rceil = k'$.
    Because $k'' \leq k$ and $k'$ message hops cover $E$ for any time assignment, it also follows that $\lceil k \rceil = k'$.
\end{proof}

\begin{corollary}
\label{col:CRtoHops}
    If a covered execution $E$ has $k$ $\CR$ rounds, then $\lceil k \rceil$ message hops are necessary and sufficient to cover $E$.
\end{corollary}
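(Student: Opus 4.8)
The plan is to derive this corollary directly by composing the two results already established in this section, so almost no new work is needed. First I would invoke the theorem proved at the start of Appendix~\ref{app:equivMetrics}, stating that the Round metric and the $\CR$ metric assign the same number of rounds to every finite covered execution. Hence, if $E$ is covered and has $k$ $\CR$ rounds, then $E$ also has exactly $k$ rounds according to the Round metric (with $k$ possibly non-integer, as $\CR$ round counts need not be integers).

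Next I would apply Theorem~\ref{th:messageCover} verbatim: since $E$ is a covered execution with $k$ rounds under the Round metric, $\lceil k \rceil$ message hops are necessary and sufficient to cover $E$. Chaining the two statements yields the claim, so the ``proof'' is a single-line reduction.

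The only point that requires a moment of care is consistency of the round-count value $k$ across the two metrics when $k$ is fractional: the equivalence theorem is phrased as an equality of round counts, and Theorem~\ref{th:messageCover} already takes the ceiling $\lceil k \rceil$, so the fractional case is handled uniformly and there is no genuine obstacle. I would simply remark that both necessity and sufficiency transfer because Theorem~\ref{th:messageCover} asserts both, and the equivalence theorem identifies the two round counts exactly.
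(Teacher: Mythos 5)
Your proposal is correct and is exactly the derivation the paper intends: the corollary is stated immediately after Theorem~\ref{th:messageCover} precisely because it follows by composing that theorem with the Round/$\CR$ equivalence theorem proved at the start of the appendix. Your remark about the fractional case and the transfer of both necessity and sufficiency is accurate and requires no further justification.
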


The next results corroborate the equivalence between $\NTR$ and $\CR$.

\begin{theorem}
\label{th:ntrCover}
    Let $E$ be a finite covered execution.
    $E$ has $k$ rounds in the $\NTR$ metric \emph{iff} $k$ message hops are necessary and sufficient to cover it.

    \begin{proof}
        Let events in $E$ be assigned rounds according to $\NTR$, resulting in $k$ rounds. We can select a set $M$ of $k$ message hops as following (sufficiency):
        \begin{itemize}
            \item Take the first pair $p_1 = (e_0^*,e_1^*)$;
            \item Take pair $p_i = (e_i',e_i^*)$, where $e_i^*$ is the last event of round $i$, and $e_i'$ is the first event from which a message is received in $e_i^*$ ($e_i'$ has to be an event of round $i-1$).
        \end{itemize}
        Since there are $k$ rounds, $M$ has $k$ message hops.
        It is also easy to see that $M$ covers $E$.

        Suppose that a sequence $M$ of $k'$ message hops can cover $E$ with $k'<k$.
        If we assume that each pair $(e_l,e_m)$ are assigned either with the same number of rounds or $e_m$ has one round higher than $e_l$, then since $k' < k$, there would be an entire round that is not covered by any message hop.
        On the other hand, a pair $(e_l',e_m')$ cannot have $e_m'$ assigned two (or more) rounds higher than $e_l'$ by definition, since $e_m'$ receives a message from $e_l'$ (necessity).

        Now let $k$ message hops be necessary and sufficient to cover $E$, and assume that $E$ has $k'$ $\NTR$ rounds.
        Then $k = k'$, since $k'$ rounds are necessary and sufficient to cover $E$.
    \end{proof}
\end{theorem}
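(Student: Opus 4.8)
The plan is to prove the sharper statement that the $\NTR$-round count of $E$ equals the \emph{minimum} number of message hops needed to cover $E$; this one equality delivers both directions of the ``iff'' and the ``necessary and sufficient'' phrasing at once, since the minimum cover size is a quantity of $E$ alone. The proof must be direct from Definitions~\ref{def:nonTimeRounds} and~\ref{def:seqCover}, rather than routed through the $\CR$/$\NTR$ correspondence, since that correspondence is itself derived from this result. Throughout I identify an event $e_j$ with the integer point $j$ (as in Definition~\ref{def:seqCover}), so that a hop $(e_a,e_b)$ covers the closed real interval $[a,b]$ and the midpoint $a+1/2$ separating consecutive events $e_a,e_{a+1}$ is covered only by a hop whose interval contains both $a$ and $a+1$.

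The structural fact I would isolate first is: if $(e_a,e_b)$ is a message hop and $e_a$ has $\NTR$-round $r$, then $e_b$ has round $r$ or $r+1$. Indeed, the message $e_b$ receives from $e_a$ is a round-$r$ message, so $e_{r+1}^*$, the last event delivering a round-$r$ message, is no earlier than $e_b$, giving $e_b$ round $\le r+1$; and $e_b$ follows $e_a$, so its round is $\ge r$. A consequence needed below is that in a \emph{covered} execution there are no ``round jumps'': every consecutive pair $(e_j,e_{j+1})$ is spanned by some hop, which forces the round of $e_{j+1}$ to exceed that of $e_j$ by at most one, and a short induction then identifies the last event of $E$ with $e_k^*$, where $k$ is the $\NTR$-count of $E$.

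For sufficiency I would exhibit a cover of size $k$: for $i=1,\dots,k$ take the hop $(e_i',e_i^*)$, where $e_i^*$ is the last event of round $i$ and $e_i'$ is the earliest event that sent a message delivered at $e_i^*$. Since $e_i^*$ delivers a round-$(i-1)$ message, which was sent by an event of index at most that of $e_{i-1}^*$, we get $\mathrm{index}(e_i') \le \mathrm{index}(e_{i-1}^*)$; together with $e_1' = e_0^* = e_0$ and $e_k^*$ being the final event, the intervals $[\mathrm{index}(e_i'),\mathrm{index}(e_i^*)]$ overlap pairwise and sweep out $[0,l] = \interval(E)$. For necessity, suppose $M$ covers $E$: for each $i=0,\dots,k-1$ the boundary point $\mathrm{index}(e_i^*)+1/2$ lies strictly inside $\interval(E)$ and is covered by some $(e_a,e_b) \in M$ with $e_a$ in round $\le i$ and $e_b$ in round $\ge i+1$; by the structural fact these rounds are exactly $i$ and $i+1$, so a single hop covers at most one boundary point and $|M| \ge k$. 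Hence $k$ hops are necessary and sufficient; applying the same conclusion to the $\NTR$-count $k'$ of $E$ forces $k=k'$, which is the asserted equivalence.

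The main obstacle I anticipate is the bookkeeping around the half-integer boundary points: one must check that a hop whose endpoints are exactly one round apart covers \emph{precisely} one boundary point, and that in the sufficiency construction the chained intervals genuinely overlap rather than merely abut (which is why ``covered'' is essential). The secondary point needing a careful — though routine — argument is that coveredness rules out round jumps, so that $e_k^*$ really is the final event of $E$; the trivial single-event case ($k=0$) should be dispatched separately.
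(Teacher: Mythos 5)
Your proof is correct and follows essentially the same route as the paper's: the same key observation that a message hop advances the $\NTR$ round of its endpoint by at most one, and the same sufficiency cover built from the hops $(e_i',e_i^*)$ ending at the last event of each round. Your necessity direction --- counting the $k$ half-integer boundary points $\mathrm{index}(e_i^*)+1/2$ and showing each hop covers at most one --- is a sharper, more rigorous rendering of the paper's pigeonhole claim that $k'<k$ hops would leave an entire round uncovered, but it is the same underlying argument.
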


\begin{corollary}
    A finite covered execution has $k$ $\CR$ rounds iff it has $\lceil k \rceil$ $\NTR$ rounds.
\end{corollary}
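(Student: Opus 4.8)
The plan is to obtain the statement simply by composing the two results that immediately precede it, namely Corollary~\ref{col:CRtoHops} and Theorem~\ref{th:ntrCover}. Both of these characterize the quantity of interest through a single common bridging notion --- the minimum number of message hops needed to cover the execution --- so once that bridge is in place the corollary is essentially a one-line deduction.

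Concretely, I would fix a finite covered execution $E$ and introduce three quantities, all well defined for such an $E$: the number $k$ of $\CR$ rounds (a nonnegative real), the least integer $h$ such that $h$ message hops suffice to cover $E$ (equivalently, the number that is ``necessary and sufficient''), and the number $n$ of $\NTR$ rounds (a nonnegative integer). The first step is to invoke Corollary~\ref{col:CRtoHops}, which gives $h = \lceil k \rceil$: exactly $\lceil k \rceil$ message hops are necessary and sufficient to cover $E$. The second step is to invoke Theorem~\ref{th:ntrCover}, whose biconditional form yields $n = h$, since the number of $\NTR$ rounds equals the number of message hops that are necessary and sufficient to cover $E$. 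Chaining the two equalities gives $n = h = \lceil k \rceil$.

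Finally, I would observe that each metric assigns a uniquely determined value to a given execution: $E$ has a single well-defined $\CR$-round count and a single well-defined $\NTR$-round count. Hence the equality $n = \lceil k \rceil$ is exactly the claimed biconditional --- ``$E$ has $k$ $\CR$ rounds'' holds precisely when ``$E$ has $\lceil k \rceil$ $\NTR$ rounds'' holds --- and the proof is complete.

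I do not anticipate any real obstacle here; the only point worth stating carefully is that the minimum covering-hop count $h$ is the common currency of Corollary~\ref{col:CRtoHops} and Theorem~\ref{th:ntrCover}, and that uniqueness of each metric's value on $E$ turns the derived equality into the desired ``iff''. (If one wanted a fully self-contained argument avoiding the hop-count bridge, one could instead re-run the round-assignment argument of Theorem~\ref{th:evEquiv} with $e_i = e_0$ and $e_j$ the last event, but routing through the already-proved corollary and theorem is cleaner.)
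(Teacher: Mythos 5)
Your proof is correct and follows exactly the route the paper intends: the corollary is stated immediately after Theorem~\ref{th:ntrCover} precisely so that it follows by chaining Corollary~\ref{col:CRtoHops} ($\lceil k\rceil$ hops necessary and sufficient) with Theorem~\ref{th:ntrCover} ($\NTR$ rounds equal the minimal covering hop count), using the message-hop count as the common bridge. Your added remark about the uniqueness of each metric's value on a given execution is a reasonable way to tidy up the ``iff'' phrasing, which the paper leaves implicit.
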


\subsection{Latency between arbitrary events}
\label{app:arbEvents}
We generalize Definition~\ref{def:seqCover} to account for the time between any two events in an execution.

\begin{definition}[Event cover]
Let $E$ be a finite execution, $e_i$ and $e_j$ ($j>i$) events in $E$ and $M$ a set of message hops from $E$.
We say that $M$ \emph{covers} $(e_i,e_j)$ if $\interval(e_i \ldots e_j) \subseteq \interval(M)$.
Analogously, we say that $(e_i,e_j)$ can be covered by $k$ message hops if $|M| = k$.
\end{definition}

\begin{theorem}
\label{th:genCRtoHops}
    Let $E$ be a covered execution and $e_i$ and $e_j$ be events in $E$. There are $k$ $\CR$ rounds in between $e_i$ and $e_j$ iff $\lceil k \rceil$ message hops are necessary and sufficient to cover them.
\end{theorem}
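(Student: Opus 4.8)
The plan is to mimic the structure of the proof of Theorem~\ref{th:messageCover} (the global-execution version), adapting the construction to the sub-interval between two arbitrary events $e_i$ and $e_j$. First I would observe that, by definition of the generalized $\CR$ metric (Definition~\ref{def:genCR}), there is $k = \sup_{\overline{E}\in\cT(E)}(t_j-t_i)/\delta_{\overline{E}}$. For the \emph{necessity} direction, I would take a near-optimal time assignment $\overline{E}$ with $(t_j-t_i)/\delta_{\overline{E}} = k-\epsilon$ for arbitrarily small $\epsilon>0$; after normalizing so that $\delta_{\overline{E}}=1$, the real-time interval $[t_i,t_j]$ has length $k-\epsilon$, and since each message hop spans at most one unit of time, at least $\lceil k-\epsilon\rceil = \lceil k\rceil$ hops are needed to cover $[t_i,t_j]$, hence to cover $(e_i,e_j)$ in the sense of the event-cover definition. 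The subtlety here, compared to the execution-level theorem, is that the endpoints are interior events rather than the first and last events of $E$: I would need to note that a message hop $(e_a,e_b)$ contributes to covering $\interval(e_i\ldots e_j)$ only through $\interval(e_a,e_b)\cap\interval(e_i,e_j)$, and that in the time assignment this intersection corresponds to a time sub-interval of length at most $\delta_{\overline{E}}$, so the counting argument still goes through.

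For the \emph{sufficiency} direction, I would run the same greedy chain construction as in Theorem~\ref{th:messageCover}, but anchored at $e_i$ instead of $e_0$: set $e_0^* := e_i$, and inductively pick $p_\ell = (e_\ell', e_\ell^*)$ where $e_\ell^*$ is the last event receiving a message originating in the window $e_{\ell-2}^*\ldots e_{\ell-1}^*$ and $e_\ell'$ the first event sending such a message, stopping at the first pair whose second component is at or past $e_j$. Because $E$ is covered, this chain never stalls, and it produces a set $M$ of some number $k'$ of hops covering $\interval(e_i\ldots e_j)$. I would then exhibit a time assignment realizing close to $k'$ $\CR$ rounds between $e_i$ and $e_j$, using exactly the stretching/shrinking technique of Theorem~\ref{th:messageCover}: assign $t_i := 0$, push $e_i^*$ (the end of the first block) to time $1$, and within each block $E_\ell$ of events strictly between consecutive $e_{\ell-1}^*$ and $e_\ell^*$, compress the intermediate events backwards by a tiny $\epsilon$ so that all in-block message delays stay $\le 1$ while the block's right endpoint sits essentially at integer time $\ell$. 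The same arithmetic — inequalities \eqref{eq:constraint}, \eqref{eq:firstCond}, \eqref{eq:secondCond}, \eqref{eq:voila} — shows one can pick $\epsilon$ small enough that $\lceil t_j - t_i\rceil = k'$, so the number of $\CR$ rounds between $e_i$ and $e_j$, call it $k$, satisfies $\lceil k\rceil = k'$ and $k \le k'$; combined with the necessity bound this pins down $\lceil k\rceil$ as exactly the minimum number of covering hops.

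The main obstacle I anticipate is handling the two endpoints cleanly. In the whole-execution theorem the chain automatically starts at $e_0$ and ends at the last event; here I must be careful that (i) the greedy chain's first hop $p_1$ may start \emph{before} $e_i$ (its sending event $e_1'$ could be any earlier event that reaches past $e_i$), yet it still contributes a unit-length time span inside $[t_i,t_j]$, and (ii) the last hop $p_{k'}$ may overshoot $e_j$, so I only need its left portion; in the time assignment I would simply not insist on compressing events beyond $e_j$. A second, more cosmetic point: the events of $E$ lying outside $[e_i, e_j]$ must still be given \emph{some} consistent times (monotone, tending to infinity), but since the algorithms use no clocks these can be chosen freely without affecting $\delta_{\overline{E}}$ or the count — this is the same ``shrink/stretch'' freedom invoked in the earlier proofs. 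Modulo these bookkeeping adjustments, the argument is a direct transcription of Theorem~\ref{th:messageCover}, and Theorem~\ref{th:evEquiv} then follows by combining this with Theorem~\ref{th:ntrCover}'s event-level analogue.
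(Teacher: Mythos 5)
Your proposal takes exactly the route the paper itself indicates: the paper omits this proof, saying only that it is ``similar to that of Theorem~\ref{th:messageCover}'' (viewing the covered sub-sequence between the two events as a covered execution), and your adaptation of the greedy hop chain plus the stretching/shrinking time assignment to the sub-interval between $e_i$ and $e_j$ --- including the care taken with hops that start before $e_i$ or overshoot $e_j$ --- is a faithful filling-in of that omitted argument. The one slight imprecision is the remark that times of events outside the interval can be chosen ``freely without affecting $\delta_{\overline{E}}$'': messages crossing the boundary do constrain those times, but your greedy block construction (each $e_\ell^*$ being the \emph{last} receiver of a message from the previous block) already absorbs such messages, so this is cosmetic rather than a gap.
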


The proof of Theorem~\ref{th:genCRtoHops} is similar to that of Theorem~\ref{th:messageCover}
and is omitted (we can consider a covered sub-sequence of $E$ with $k$ rounds as a covered execution).

\begin{theorem}
    \label{th:eqNTRevent}
    Let $E$ be a covered execution and $e$ and $e'$ be events in $E$. If there are $k$ rounds in between $e$ and $e'$ according to $\NTR$ (Definition~\ref{def:genNTR}) then $k$ message hops are necessary and sufficient to cover $e$ and $e'$.
\end{theorem}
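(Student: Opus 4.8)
The plan is to mirror the structure of Theorem~\ref{th:ntrCover}, but restricted to the sub-interval $\interval(e\ldots e')$ rather than the whole execution. The key observation is that Definition~\ref{def:genNTR} is exactly Definition~\ref{def:nonTimeRounds} applied to the ``relabelled'' execution in which all events up to and including $e$ are collapsed into a single round-$0$ prefix (with $e_0^* := e$). So the proof should have two directions, as in Theorem~\ref{th:ntrCover}: (sufficiency) from a round assignment with $k$ rounds between $e$ and $e'$, exhibit a set $M$ of $k$ message hops covering $(e,e')$; (necessity) show no set of fewer than $k$ hops can cover $(e,e')$.

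For sufficiency, I would build $M$ by the same greedy recipe: let $e_0^* := e$, and for $r = 1,\ldots,k$ take the pair $p_r = (e_r', e_r^*)$, where $e_r^*$ is the last event of round $r$ (the witness $e_{r}^*$ from Definition~\ref{def:genNTR}) and $e_r'$ is the earliest event from which $e_r^*$ receives a message; by the round-assignment rule $e_r'$ must lie in round $r-1$, hence in $e\ldots e'$ or possibly before $e$ but the relevant interval endpoints still get covered. Concatenating $\interval(p_1),\ldots,\interval(p_k)$ covers $[\,\text{index of }e,\ \text{index of }e'\,]$ because consecutive hops overlap at the $e_r^*$'s and the last one ends at $e'$ (which is the last event of round $k$, or the round assigned to $e'$ if $e'$ is not itself a witness — in that case one truncates the final hop at $e'$, which only helps). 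This gives $|M| = k$ hops covering $(e,e')$, exactly as in the first half of Theorem~\ref{th:ntrCover}.

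For necessity, I would reuse the argument that a single message hop $(e_l,e_m)$ can raise the $\NTR$ round number by at most one (since $e_m$ receives a message sent in $e_l$, so $e_m$'s round is at most one more than $e_l$'s — this is precisely the property stated in the proof of Theorem~\ref{th:ntrCover} and is the defining behaviour of Definition~\ref{def:genNTR}). Hence a set of $k' < k$ hops covering the interval $[\,e,\,e'\,]$ could account for an increase of at most $k'$ in round number along the covered interval, contradicting the fact that $e'$ sits in round $k$ while $e$ sits in round $0$. This yields $k$ hops are necessary, completing both halves.

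The main obstacle — really the only subtlety — is bookkeeping at the two endpoints: the starting event $e$ is not the global initial event $e_0$, so one must be careful that the first hop $p_1$ has its left endpoint at (or before) $e$ and that ``round $0$'' genuinely behaves like the prefix $C_0e_1\ldots e_i$ collapsed to a point, and symmetrically that $e'$ need not be a witness event $e_r^*$, so the last hop may have to be truncated at $e'$. Both issues are handled by the explicit $e_0^* := e_i$ clause in Definition~\ref{def:genNTR} and by noting that truncating a hop's interval to end at $e'$ preserves coverage of $\interval(e\ldots e')$; I would remark that, modulo these endpoint adjustments, the argument is verbatim that of Theorem~\ref{th:ntrCover}, and (combined with Theorem~\ref{th:genCRtoHops}) it immediately yields Theorem~\ref{th:evEquiv}.
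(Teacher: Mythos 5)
Your proposal is correct and follows essentially the same route as the paper's proof: the sufficiency direction uses the identical greedy construction (pair the last event $e_r^*$ of each round with the earliest sender $e_r'$ it receives from, which necessarily sits in round $r-1$), and your necessity argument—each hop can raise the $\NTR$ round by at most one, so $k'<k$ hops cannot bridge $k$ round increments—is the same pigeonhole the paper states in the contrapositive form that every round $0,\ldots,k-1$ must contain the sender of some hop in the covering set. Your endpoint remarks (that $e_1'$ may precede $e$ and that $e'$ need not be a witness $e_r^*$) are harmless and in fact already absorbed by the subset form of the event-cover definition, so no truncation is needed.
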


\begin{proof}
    Let $e$ be assigned round $0$ (as well as all previous events) and $e'$ round $k$.
    Take $e_1^*$, the last event of round $1$, and the earliest event $e_1'$ from which $e_1^*$ received a message.
    Since $e_1^*$ receives a message from round $0$, $e_1'$ must be assigned round $0$.

    Inductively, take $e_i^*$, the last event of round $i$, and the earliest event $e_i'$ from which $e_i^*$ receives a message.
    Since $e_i^*$ receives a message from round $i-1$ (by definition), $e_i'$ must be assigned round $i-1$.

    Consider the set $M = \{(e_1',e_1^*),\ldots,(e_k',e_k^*)\}$, $M$ clearly covers $e$ and $e'$ (sufficiency).

    Now consider a set $M'$ with $k'$ message hops such that $M'$ covers $e$ and $e'$.
    Since $M'$ covers the two events, there must be a message hop whose first event (the sender event) is in round $0$.
    This is true for any round up to $k-1$: suppose that there is a round $i$ where no message hop in $M'$ has the first event in round $i$, then since $e,\ldots,e'$ is covered, there exists a message originated from a previous round $j<i$ that is received in a round $l>i$. But then $l \leq i + 1$ by definition of the metric, a contradiction.
    Thus, $M'$ includes at least one message hop for each round from $0$ to $k-1$, so $k' \geq k$ (necessity).
\end{proof}

\begin{corollary}
    Let $E$ be a covered execution and $e_i$ and $e_j$ be events in $E$.
    There are $k$ $\CR$ rounds in between $e_i$ and $e_j$ iff there are $\lceil k \rceil$ rounds in between them according to $\NTR$.
\end{corollary}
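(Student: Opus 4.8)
The final statement to prove is the Corollary: $\CR$ and $\NTR$ give matching round counts (up to ceiling) between arbitrary events $e_i$ and $e_j$ in a covered execution.

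\textbf{Plan.} The proof is just a chaining of two results already established immediately above it in Appendix~\ref{app:arbEvents}: Theorem~\ref{th:genCRtoHops} (generalized $\CR$ between $e_i$ and $e_j$ equals, up to ceiling, the minimum number of message hops needed to cover the pair) and Theorem~\ref{th:eqNTRevent} ($\NTR$ between $e$ and $e'$ equals the minimum number of message hops covering the pair). So the proof is: let $k$ be the number of $\CR$ rounds between $e_i$ and $e_j$; by Theorem~\ref{th:genCRtoHops}, $\lceil k\rceil$ message hops are necessary and sufficient to cover $(e_i,e_j)$; by Theorem~\ref{th:eqNTRevent}, the number of $\NTR$ rounds between $e_i$ and $e_j$ is exactly that minimum number of message hops, hence equals $\lceil k\rceil$. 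Conversely, if there are $m$ $\NTR$ rounds between them, Theorem~\ref{th:eqNTRevent} gives that $m$ message hops are necessary and sufficient; then Theorem~\ref{th:genCRtoHops} forces the $\CR$ count $k$ to satisfy $\lceil k\rceil = m$. That closes the iff.

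\textbf{Main obstacle.} There is essentially no obstacle in the Corollary itself — it is a one-line composition. The real content lives in the two theorems it invokes, and in particular in Theorem~\ref{th:genCRtoHops}, whose proof is deferred by analogy to Theorem~\ref{th:messageCover}. If I were writing that proof from scratch, the delicate point would be the time-assignment construction: one must exhibit, for the sub-execution between $e_i$ and $e_j$, a timed execution whose duration (divided by its max delay) is within $\epsilon$ of the hop count — stretching each ``segment'' between consecutive starred events to length just under $1$ by choosing the $\epsilon_i$ small enough, exactly as in the $\epsilon < \min(1/(2n_{\max}), 1/((k'-1)n_{\max}))$ bound of Theorem~\ref{th:messageCover}. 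One subtlety specific to the arbitrary-events case, versus covering the whole execution, is that messages sent before $e_i$ but received after $e_i$ do not count toward the cover of $(e_i,e_j)$, matching the convention in Definition~\ref{def:genNTR} that all events up to $e_i$ get round $0$; one must check the hop-covering argument respects this (it does, since $\interval(e_i\ldots e_j)$ only requires covering the interval $[i,j]$). For the Corollary as stated, I would simply write the two-paragraph chaining argument and cite Theorems~\ref{th:genCRtoHops} and~\ref{th:eqNTRevent}.

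\begin{proof}
    Let $k$ be the number of $\CR$ rounds between $e_i$ and $e_j$ according to Definition~\ref{def:genCR}. By Theorem~\ref{th:genCRtoHops}, $\lceil k \rceil$ message hops are necessary and sufficient to cover $(e_i,e_j)$. By Theorem~\ref{th:eqNTRevent}, the number of rounds between $e_i$ and $e_j$ according to $\NTR$ (Definition~\ref{def:genNTR}) equals the minimum number of message hops needed to cover $(e_i,e_j)$, which is $\lceil k \rceil$.

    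Conversely, suppose there are $m$ rounds between $e_i$ and $e_j$ according to $\NTR$. By Theorem~\ref{th:eqNTRevent}, $m$ message hops are necessary and sufficient to cover $(e_i,e_j)$. If $k$ is the number of $\CR$ rounds between them, Theorem~\ref{th:genCRtoHops} gives that $\lceil k \rceil$ message hops are necessary and sufficient to cover $(e_i,e_j)$, so $\lceil k \rceil = m$. Hence there are $k$ $\CR$ rounds between $e_i$ and $e_j$ iff there are $\lceil k \rceil$ rounds between them according to $\NTR$.
\end{proof}
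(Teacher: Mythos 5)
Your proof is correct and is exactly the argument the paper intends: the corollary is stated immediately after Theorems~\ref{th:genCRtoHops} and~\ref{th:eqNTRevent} precisely so that it follows by chaining them through the common quantity ``minimum number of message hops covering $(e_i,e_j)$'', which is what you do. The paper gives no explicit proof, so your two-paragraph composition (including the observation that the one-directional statement of Theorem~\ref{th:eqNTRevent} suffices because the minimal cover size is a well-defined function of the pair of events) is a faithful filling-in of the intended argument.
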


Finally, we prove Theorem~\ref{th:IRAtoNTR}, relating $\IRA$ to $\NTR$.

\subsection{Proof of Theorem~\ref{th:IRAtoNTR}}
\label{app:thIRA}

Let $E$ be a finite covered execution and suppose that all events of $E$ are assigned rounds according to $\IRA$ after all iterations of the algorithm.
It holds that:

\begin{enumerate}
    \item Round $0$ is composed only of $e_0$ (the initial event).
    \item The final event of round $i+1$ is the last event to receive a message from round $i$.
\end{enumerate}

\begin{proof}
    1. The case where $E$ has a single event is immediate, next, we consider executions with more than one event.
    From the algorithm, $e_0^*=e_0$ ($e_0^*$ does not change).
    Since $E$ is covered, there is at least one event which receives a message from $e_0$.
    Let $e'$ be the last such event.
    When the algorithm arrives at the iteration for $e'$, since the oldest message is from round $0$ (from $e_0$), all events after $e_0^*$ until $e'$ are assigned round $1$ (line~\ref{line:defRound}).
    Since no event can receive round $0$ in later iterations, $e_0$ is the only event remaining with round $0$ assigned.

    2. As shown above, there is a single event in round $0$.
    Let $e'$ be the last event to receive a message from $e_0$, in its iteration $e'$ then receives round $1$.
    The events following $e'$ (assuming $e'$ is not the last event) might momentarily be assigned to round $1$ (if they do not receive any message, line~\ref{line:momentRound}), but since the execution is covered, there must be an event after $e'$ that receives a message from $e_1...e'$.
    Let $e''$ be the last such event, in its iteration, $e''$ is assigned round $2$, and all events after $e'$ (which is the last event $e_1^*$ is assigned to, line~\ref{line:lastStar}) also receive round $2$.
    No later iteration can assign round $1$ to those events since no other event receives a message from round $0$, thus $e'$ is the last event to remain with round $1$ assigned.

    Now assume that the final event $e_i^*$ of round $i$ is the last to receive a message from round $i-1$, and that there is an event assigned to round $i+1$.
    Suppose that $e^*$, the last event to receive a message from round $i$, is not the final event of round $i+1$.
    Since $e^*$ receives a message from round $i$ but not from an event before round $i$, it has to be assigned round $i+1$ and $e_{i+1}^*$ receives $e^*$ in line~\ref{line:lastStar} of the algorithm.
    It follows that the final event of round $i+1$ comes after $e^*$ and receives no message from $e_0,\ldots,e_i^*$.
    Because the execution is covered, there must be at least one event after the final event of round $i+1$ that receives a message from round $i+1$.
    Once more, consider the last such event, so all events after $e_{i+1}^*$ until this event are assigned round $i+2$, leaving $e^*$ as the final event of round $i+1$.
\end{proof}

\section{One-Shot Lattice Agreement}
\label{app:oneShot}
In the \emph{One-Shot Lattice Agreement} problem, every process starts with the proposal of an initial value and terminate when it learns a value, such that $\GValidity$, $\GConsistency$ and $\GLiveness$ are satisfied (Section~\ref{sec:definitions}).
In this section, we analyze time complexity of one-shot $\LA$ protocols, as the abstraction can be used as a building block for implementing $\ASO$~\cite{la95,garg2020amortized}.

In every protocol execution, all processes start proposing a value simultaneously, i.e., in the initial event.
We measure the time for all correct processes to learn a value in the fault-free and worst-case latency.
In fault-free executions, all processes are correct and every message sent in the execution must arrive.
On the other hand, in the worst-case, there is a set of correct processes $P$ and a set of potentially faulty processes $F$, where $P$ has $f+1$ processes and $F$ has $f$ processes.
All messages exchanged within $P$ arrive, but this is not the case for exchanges within $F$ or between $P$ and $F$.

We show that: 1) the protocol presented in~\cite{faleiro2012podc} has a constant latency in fault-free runs, as opposed to the $O(n)$ complexity claimed in the paper.
2) With the conventional model of reliable channels assumed in this paper, the protocol of~\cite{garg2020amortized} has $\Omega(f)$ time complexity in the worst-case, as opposed to $O(\sqrt{f})$ when assuming their model.
3)~\cite{imbs2018set}'s protocol has $\Omega(f)$ time complexity in the worst-case, which is not analyzed in their paper.

\subsection{One-Shot Lattice Agreement by Faleiro et al.~\cite{faleiro2012podc}}

Figure~\ref{fig:faleiro-osla} shows the one-shot $\LA$ description which was extracted from~\cite{faleiro2012podc}.
The protocol describes the roles of proposers and acceptors, but we assume that all processes perform both roles.

A proposer proceeds in rounds, were each round consists in sending a proposed value to every acceptor and waiting for the reply from a majority of them.
If all replies are acknowledgments, the process can learn the current proposed value.
On the other hand, if there is a \textit{NACK} with an unseen value, the proposer joins it with the previously proposed value and re-sends them.

An acceptor stores the join of every proposed value it receives.
When a proposal is received such that it contains all the stored values, the acceptor replies with an acknowledgment, otherwise it sends the stored values back to the proposer in a \textit{NACK} message.

\begin{figure}[!htp]
    \centering
    \includegraphics[width=\textwidth, page=4, trim = {5em 17em 13em 5em}, clip]{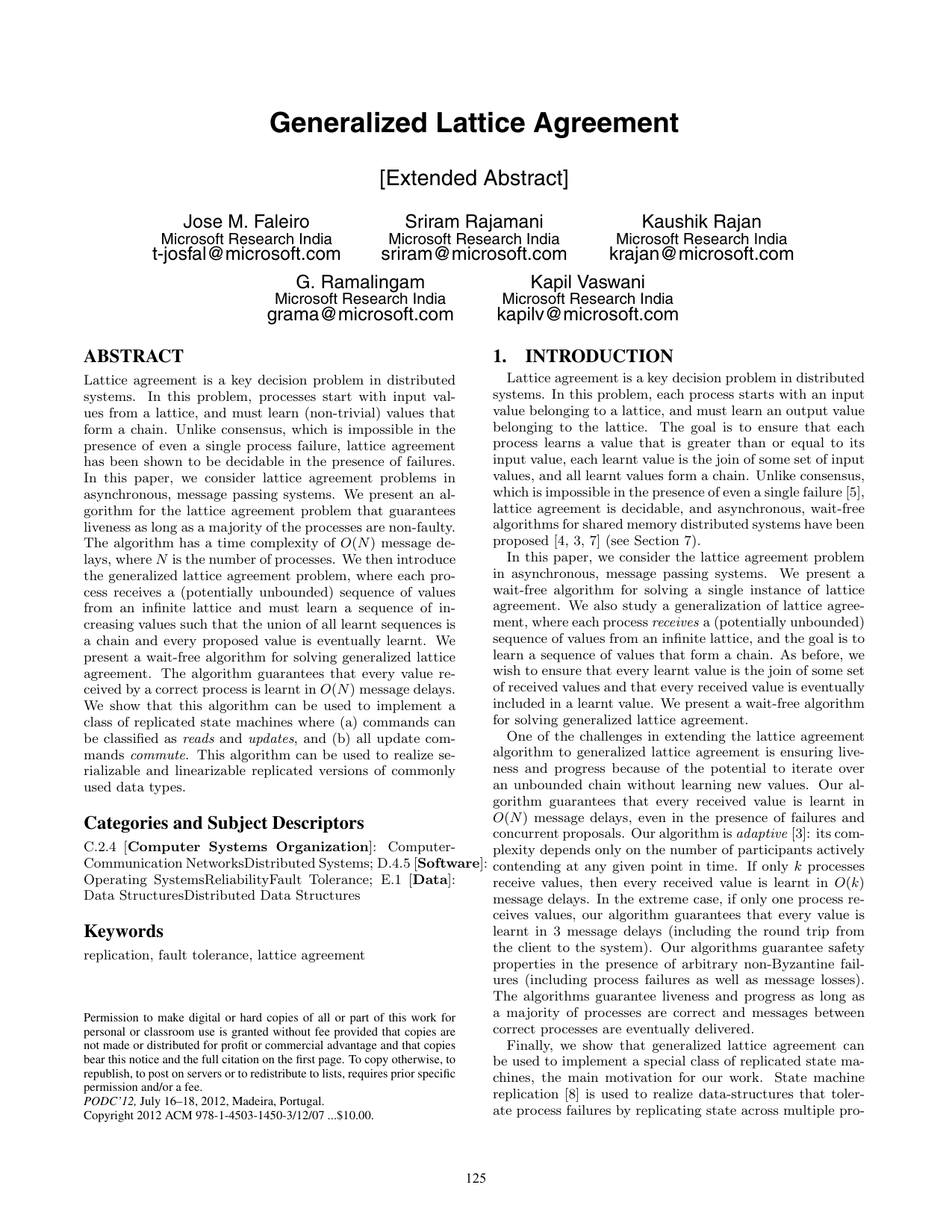}
    \caption{One-shot LA algorithm as presented in~\cite{faleiro2012podc}.}
    \label{fig:faleiro-osla}
\end{figure}

\begin{theorem}
\label{th:faleiroOneShot}
    The one-shot $\LA$ protocol of~\cite{faleiro2012podc} takes at most $6$ rounds in fault-free runs.

    \begin{proof}
        All processes start their proposal in the initial event, which is in round 0.
        Regardless of the order of messages, in the last event of round 1, every process will have received everyone's first proposal and their local $\acceptedValue$ is a join of all initial values.
        So every reply made in round $2$ onward will contain all values.

        Consider a process $p$, every process receives $p$'s proposal in round $1$ and reply, so that $p$ refines its proposal (and propose again) in round $2$ at most.
        If $p$ re-proposes in round $1$, suppose that $Q$ is the set of processes from which $p$ receives the replies (for this refined proposal), then either: no reply from $Q$ is made in round $2$ (only round $1$), or some reply is from round $2$.

        In the first case, since all replies come from round $1$, the refinement and new proposal must happen in either round $1$ (in which case we come back to the situation above) or round $2$.
        In the second, $p$ receives all values and re-propose in at most round $3$,
        and since the proposal contains all values, $p$ learns a value by at most round $5$.

        Now the only remaining case is when $p$ initiates a new proposal in round $2$ with some value missing.
        In this case all replies will be a join of all values, and by at most round $4$, $p$ will refine the proposal, learning a value by at most round $6$.
    \end{proof}
\end{theorem}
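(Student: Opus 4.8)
The plan is to trace through the protocol of Faleiro et al.\ (Figure~\ref{fig:faleiro-osla}) in the good-case, where all $n$ processes are correct, all messages arrive, and every process proposes its initial value in the common round~$0$. The key preliminary observation, which I would establish first, is a \emph{stabilization} fact: by the end of round~$1$ every process has received everyone's initial proposal, so each acceptor's stored $\acceptedValue$ is already the join $\bigsqcup$ of all initial values. Consequently, \emph{any reply sent in round~$2$ or later is an acknowledgment of a proposal only if that proposal already contains all values}, and more importantly every \textit{NACK} sent from round~$2$ onward carries the full join. This means no process can be forced to refine its proposal more than a bounded number of times.

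Next I would fix an arbitrary process $p$ and bound the round in which it learns. Since $p$'s initial proposal reaches every acceptor by round~$1$, the replies to it are sent by round~$2$ at the latest; so $p$ either learns in round~$2$ (if all replies were ACKs) or issues a refined proposal by round~$2$. The case analysis then splits on \emph{when} $p$ issues its last refined proposal and \emph{which round} the majority replies to that proposal come from. If $p$ re-proposes already in round~$1$ and the majority replies for it are all from round~$1$, the refined (re-)proposal happens in round~$1$ or~$2$, reducing to a later case. If some reply to the round-1 refined proposal is from round~$2$, it carries all values, so $p$ re-proposes with the full join by round~$3$, and since a full-join proposal gets only ACKs, $p$ learns by round~$5$. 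The remaining case is when $p$ first re-proposes in round~$2$ possibly still missing a value: all replies to it (sent round~$3$) are full joins, so $p$ refines to the full join by round~$4$ and learns by round~$6$. Taking the maximum over all these branches gives the bound of~$6$ rounds.

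I would organize the write-up exactly along the branches already sketched in the statement's proof: (i)~the stabilization lemma for round~$1$; (ii)~$p$'s first proposal answered by round~$2$; (iii)~sub-case $p$ re-proposes in round~$1$, split further on whether the majority replies are confined to round~$1$ or include a round-$2$ reply; (iv)~sub-case $p$ re-proposes in round~$2$ with a missing value. Throughout, the recurring quantitative step is: \emph{a proposal sent by round $r$ is answered (by a majority) by round $r+1$, and a proposal containing the full join is necessarily answered by ACKs}, so it yields a learn event one more round later.

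The main obstacle, I expect, is arguing rigorously that the recursion on ``$p$ re-proposes'' actually terminates within the claimed number of steps rather than merely being bounded by something like $O(n)$ (which is what the original paper claimed). The crux is that after round~$1$ every \textit{NACK} delivers \emph{all} missing values at once, so $p$'s proposal can fail to contain all values at most once more after round~$1$; pinning down precisely that ``at most once more'' across the interleaving of message orders --- in particular ruling out a pathological schedule where $p$ keeps getting replies from round~$1$ events that still reference stale acceptor states --- is the delicate point, and it hinges on the FIFO assumption together with the fact that an acceptor's stored value is monotone and is the full join by the end of round~$1$.
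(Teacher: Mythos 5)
Your proposal is correct and follows essentially the same route as the paper's proof: the same round-1 stabilization observation (all acceptors hold the full join by the end of round 1, so every reply from round 2 onward carries all values), the same case split on when $p$ re-proposes and which round its majority replies come from, and the same resulting bounds of 5 and 6 rounds in the respective branches. The only nit is the phrase ``replies to it are sent by round 2 at the latest'' --- replies are \emph{sent} upon reception, i.e., by round 1, and \emph{received} by round 2, which is what your subsequent conclusion actually uses.
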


    \subsection{One-Shot Lattice Agreement by Garg et al.~\cite{garg2020amortized,ASOconstant}}
    \label{app:gargOneShot}

    Garg et al.~\cite{garg2020amortized,ASOconstant} assume a stronger underlying reliable channel for communication than in this paper.
    In their papers, the channel is responsible for delivering a message sent from one process to another, thus messages sent by faulty processes (to correct ones) are guaranteed to arrive in an infinite execution.
    In the following, we analyze their protocol under the more conventional assumption that messages from faulty processes may never be received.

    Figure~\ref{fig:garg-la} (extracted from~\cite{garg2020amortized}) shows a description of their one-shot $\LA$.
    Every process $i$ maintains a local \emph{view array}, where each position $j$ in the array contains the values $i$ received from $j$.
    In the start of the protocol, every process sends its initial value to everyone.
    Processes relay (execute block from lines $5$ to $7$) every new value they receive from other processes, and can learn a value once their local view satisfy a predicate called \emph{equivalence quorum}.
    Intuitively, the local view $V$ of process $i$ satisfies the predicate if there is a quorum in which $V[i] = V[j]$ for every process $j$ in the quorum.
    
    \begin{figure}[!htp]
        \centering
        \includegraphics[width=\textwidth, page=4, trim = {10.5em 35em 9em 27em}, clip]{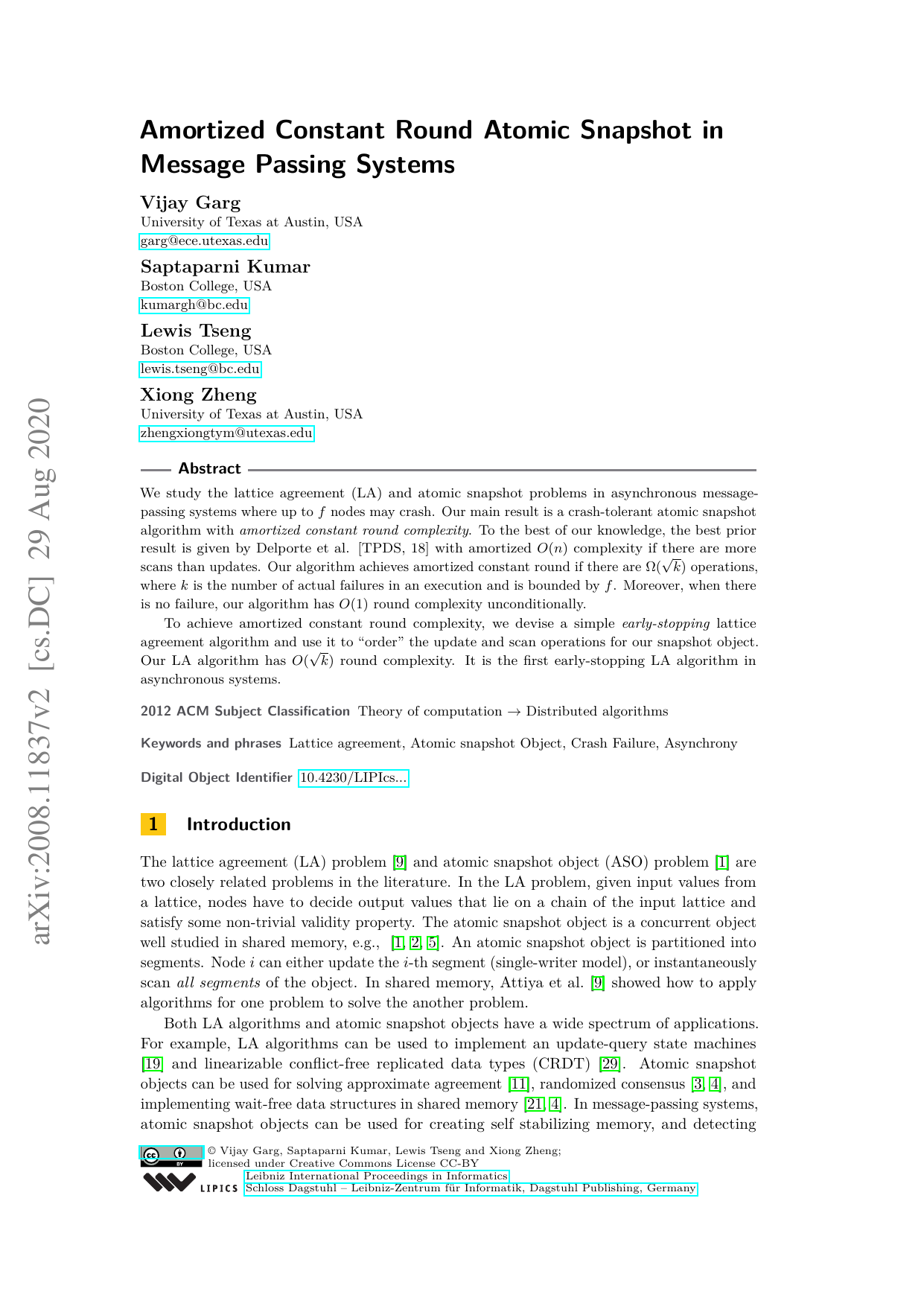}
        \caption{LA algorithm as presented in~\cite{garg2020amortized}.}
        \label{fig:garg-la}
    \end{figure}

    \begin{theorem}
    \label{th:goodGargLA}
        The one-shot $\LA$ protocol of~\cite{garg2020amortized} takes at most $2$ rounds in fault-free runs.

        \begin{proof}
            Every process sends their initial value in round $0$.
            By the end of round $1$, every process has already received and relayed all other values, so that by the end of round $2$, all the local views contains every value, resulting in all processes learning a value.
        \end{proof}
    \end{theorem}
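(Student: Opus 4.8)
The plan is to show directly that every correct process learns a value by the end of round $2$, by tracking when values appear in local view arrays. First I would observe that in the good case all processes are correct and every message is eventually delivered, so round $0$ is exactly the initial event in which every process $i$ sends its initial value $v_i$ to everyone; by Definition~\ref{def:nonTimeRounds} (or equivalently $\IRA$), round $1$ ends with the last event that delivers one of those initial messages. Hence by the end of round $1$, for every pair $(i,j)$ process $i$ has received $v_j$ and, upon first receiving it, has relayed it (lines $5$--$7$): so every process has sent, to everyone, a message carrying the up-to-date entry $V[j]=v_j$ for each $j$.

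Next I would argue that round $2$ ends with the last event delivering one of these round-$1$ relay messages. At that point each correct process has received, from every other process $j$, the relayed value for every position, so its local view array $V$ satisfies $V[i]=V[j]$ for all $j$ (each entry is the single initial value $v_\ell$ for position $\ell$, identical across all processes since in this one-shot setting each process contributes exactly one value and there is no contention-driven divergence). Therefore the equivalence-quorum predicate is satisfied — indeed with the full set of $n$ processes forming the quorum — and every correct process learns a value no later than the final event of round $2$. Combined with the round-assignment machinery, this gives the claimed bound of $2$ rounds.

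The one subtlety to handle carefully is the exact bookkeeping of which relay messages force the transition from round $1$ to round $2$: I would make explicit that the "last event of round $1$" receives some initial-value message, that this event (and all correct processes by it) has relayed all values it has, and that the relays triggered at or before this event are themselves delivered within round $2$. This is essentially the same two-hop argument as in Theorem~\ref{th:messageCover}, so no new difficulty arises; the main (minor) obstacle is simply being precise that the view arrays become globally identical — which holds because each position is written only by its owner with a single value — so that the equivalence predicate is genuinely met by round $2$ and not merely approached.
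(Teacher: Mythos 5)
Your proposal is correct and follows essentially the same two-hop argument as the paper: initial values are sent in round $0$, received and relayed by the end of round $1$, and the relayed copies arrive by the end of round $2$, at which point the equivalence-quorum predicate holds for all correct processes. One small correction to your bookkeeping: in the protocol of~\cite{garg2020amortized}, $V[j]$ is the \emph{set} of values process $i$ has received from $j$ (which grows as $j$ relays), not the single initial value of $j$; the entries become equal because every $V[j]$ ends up containing all the initial values, which is exactly what the paper's phrase that ``all the local views contain every value'' captures.
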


    \begin{theorem}
        The one-shot $\LA$ protocol of~\cite{garg2020amortized} has $\Omega(f)$ worst-case latency.

        \begin{proof}
            We proceed to build an execution that takes at least $f/2$ rounds to complete.
            Assume w.l.o.g. that the number of faulty processes is even.
            Split $F$ into two groups $A = \{l_{1},\ldots,l_{f/2}\}$ and $B = \{l_{f/2 + 1},\ldots,l_{f}\}$ with $f/2$ processes each. In round $0$, every process sends its initial value to everyone.

            [Round $1$] At the beginning of the round, the value $(x_1,l_1)$ from $l_1$ is received and relayed by every process in $B$, as well as by a single correct process $l_c$.
            All remaining values $(x_i,l_i)$ from processes in $A$ are received by a single process $l_{f/2+1} \in B$, which relays them.
            Processes in $A$ crash just after $l_{f/2+1}$ receives their values, and no other process receives any message from them.
            At the end of the round, initial values from every non-crashed process (including those in $B$) are received and relayed by every non-crashed process.

            [Round $2$] At the beginning, the first of the remaining values $(x_2,l_2)$ that $l_{f/2+1}$ relayed is received (and relayed) by every non-crashed process in $B$, as well as by $l_c$.
            A single process $l_{f/2+2}$ receives all other $f/2 - 2$ values from $l_{f/2+1}$ and relay them,
            then $l_{f/2+1}$ crashes and no other process receives messages from it.
            Finally, every non-crashed process receives the values relayed by other non-crashed processes in the previous round.
            By the end of round $2$, any non-crashed process have in its view $V[j]$ all initial values sent by non-crashed processes, but for a single correct process $l_c$ and all non-crashed processes in $B$, their position in the view also contains $(x_1,l_1)$.
            Therefore, no equivalence quorum exists in any local view.

            [Round $i+1$] At the beginning, the first of the remaining values $(x_{i+1},l_{i+1})$ that $l_{f/2+i}$ relayed is received (and relayed) by every non-crashed process in $B$, as well as by $l_c$.
            A single process $l_{f/2+i+1}$ receives all other $f/2-(i+1)$ values from $l_{f/2+i}$ and relay them, then $l_{f/2+i}$ crashes and no other process receives messages from it.
            At the end of the round, every non-crashed process receives the values relayed by other non-crashed processes in the previous round, but messages from $l_c$ are received before any other.
            Any non-crashed process have in its view $V[j]$ all initial values sent by non-crashed processes, with addition of $(x_1,l_1),\ldots,(x_{i-1},l_{i-1})$, but for a single correct process $l_c$ and all remaining non-crashed processes in $B$, their position in the view also contains $(x_{i},l_{i})$.
            Note that, since messages from $l_c$ are received first, non-crashed processes receive and relay $(x_i,l_i)$ before forming an equivalence quorum for previous values, and no process is able to learn a value this round.

            We can use the above method to delay the execution by $f/2$ rounds.
        \end{proof}
    \end{theorem}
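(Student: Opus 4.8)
\emph{Proof plan.} The plan is to construct an explicit adversarial schedule --- consistent with asynchrony, FIFO channels, and the weaker channel assumption of this paper (only messages exchanged among the $f+1$ correct processes are guaranteed to be delivered) --- in which some correct process does not satisfy the equivalence-quorum predicate, and hence does not learn, for $\Omega(f)$ rounds. The underlying intuition is that under the weak channel model the $f$ faulty processes act as a reservoir of ``fresh'' values that the adversary drains into the set of correct processes at a rate of one value per round; each newly introduced value perturbs the local views just enough to destroy any equivalence quorum that was about to form, so emptying the reservoir costs $\Theta(f)$ rounds.

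Concretely, I would (1) partition the faulty set $F$ into two halves $A = \{l_1,\ldots,l_{f/2}\}$ and $B = \{l_{f/2+1},\ldots,l_f\}$, and let every process broadcast its value in round $0$; (2) in round $1$, have every process of $A$ deliver its value only to the single ``funnel'' process $l_{f/2+1}\in B$ (with the very first of these, $(x_1,l_1)$, additionally reaching one fixed correct process $l_c$), then crash all of $A$ --- legal since messages from faulty senders need not be delivered; (3) describe the generic round-$i$ step: the relay $l_{f/2+i}$ releases the next buffered value $(x_i,l_i)$, whose delivery is scheduled so that $l_c$ receives and re-relays it to all correct processes \emph{ahead of} the messages that would otherwise let them complete an equivalence quorum over the values known at the end of round $i-1$, after which $l_{f/2+i}$ crashes and passes the remaining buffered values to $l_{f/2+i+1}$; (4) prove by induction on $i$ an invariant on the contents of every correct process's view at the end of round $i$ --- roughly, that the entry for the just-released originator $l_i$ is ``ahead'' at $l_c$ and at the surviving $B$-processes compared with everyone else, so that no majority of mutually-matching view positions exists; (5) conclude that no correct process can learn before round $\Theta(f/2)$, giving the $\Omega(f)$ bound.

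The main obstacle is step (4): showing that the equivalence-quorum predicate genuinely fails at \emph{every} correct process at the end of \emph{every} round, even though all correct processes keep exchanging messages and those messages are guaranteed to be delivered. This requires careful bookkeeping of which $(\mathrm{value},\mathrm{originator})$ pairs occupy which position of which local view, together with a FIFO argument that the relay of the new value always overtakes the messages whose delivery would close a quorum. A secondary but necessary check is the legality of the entire schedule: each crash point of an $A$- or $B$-process must be a valid step, and no message between two correct processes may be dropped --- only reordered and delayed --- so the construction must only ever reorder such messages, never omit them.
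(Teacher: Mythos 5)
Your construction is essentially the paper's own: the same split of $F$ into a crashed half $A$ and a chain of ``funnel'' relays in $B$ that release one fresh value per round before crashing, the same single correct process $l_c$ kept one value ahead of the other correct processes, and the same $f/2$-round conclusion. The obstacle you flag in step (4) is discharged in the paper by a simple count: at the end of each round exactly $l_c$ and the surviving $B$-processes hold the newly released value, so the $f$ correct processes other than $l_c$ agree only among themselves --- one short of the $f+1$ needed for an equivalence quorum --- while $l_c$'s view matches no correct majority either.
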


    \subsection{One-Shot Lattice Agreement by Imbs et al.~\cite{imbs2018set}}

    The protocol displayed in Figure~\ref{fig:imbs-osla} (extracted from~\cite{imbs2018set}) solves the problem of Set-Constrained Delivery Broadcast (SCD-Broadcast). It can easily be adapted to solve one-shot $\LA$ by adding to the condition of line $17$ that the initial value must be in the output.

    The authors use a FIFO broadcast primitive for forwarding messages, so in the following proofs we will assume message channels to be FIFO.
    We say that a process \emph{relays} a value when it executes line $11$ of the algorithm (it sends a new received value to everyone).
    The fundamental blocks of the protocol include:

    \begin{itemize}
        \item Each process has a logical clock which ticks every time a new value is received and relayed (including its own initial value).
        The current clock value is attached to the relaying message (called a \emph{forward} message).

        \item Each process stores a set of value views: an array of logical clock values, one position for each process.

        \item The following predicate needs to hold in order to output a set o values $O$: Let $A$ be the set of all received values and $V$ be the set of values received by a quorum.
        An output is a non-empty set $O \subseteq V$ satisfying: $\forall w \in O, \forall v \in A-O:$ there is a quorum in which each individual clock value for $w$ is smaller than the corresponding value for $v$.
    \end{itemize}

    \begin{figure}[!htp]
        \centering
        \includegraphics[page=6, trim = {13em 44.5em 13em 8em}, clip]{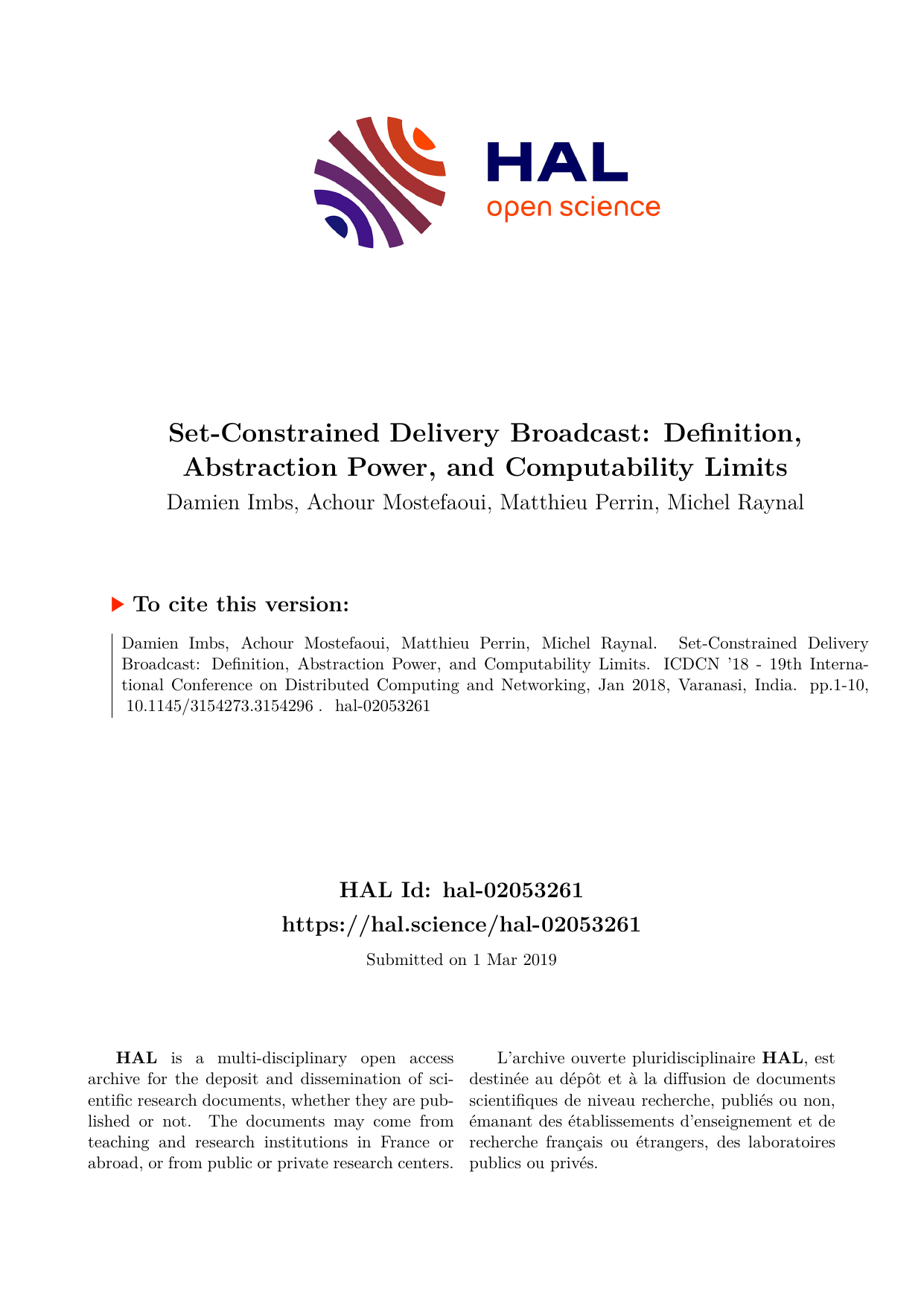}
        
        \caption{SCD algorithm as presented in~\cite{imbs2018set}.}
        \label{fig:imbs-osla}
    \end{figure}

    Not that each process starts sending its initial value to itself before relaying it to everyone.
    For simplicity, we consider these two actions to be in a single event (the initial event), where the first message sent to itself is ignored.
    
    \begin{theorem}
        The SCD-Broadcast protocol in~\cite{imbs2018set} takes at most $2$ rounds in the fault-free runs

        \begin{proof}
            In the first event (round $0$), every process forwards its own initial value to everyone.
            At the end of round $1$, all processes have already received all initial values and relayed them.
            At the end of round $2$, regardless of the order, all processes received all values from everyone.
            As a consequence $A-V = \emptyset$ in their local view, so all processes can output $V$.
        \end{proof}
    \end{theorem}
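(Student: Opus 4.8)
The plan is to track, round by round, which proposed values each process has received, using the $\NTR$/$\IRA$ round assignment of Section~\ref{sec:ourProtocol} (and Definition~\ref{def:nonTimeRounds}), and then to show that after two rounds every process holds every proposed value as received from a full quorum, which makes the output predicate trivially satisfiable. Recall that in the one-shot setting all processes propose in the single initial event, so round $0$ is the singleton $\{e_0\}$, and in $e_0$ every process sends a \emph{forward} message carrying its own initial value (its logical clock being set to $1$).

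First I would argue about round $1$. Every message sent in $e_0$ is a forward of an initial value, and by the definition of the metric all of them are delivered by the end of round $1$. Hence, by the end of round $1$, each process $i$ has received the initial forward of every process $j$; since each such value is new to $i$ (except $i$'s own, which it already forwarded in $e_0$), $i$ executes the relay step for it. Consequently, every relay of every proposed value is \emph{sent} no later than round $1$. Pushing this one round further: a relay is sent in round $0$ or round $1$, so it is delivered by the end of round $2$. Therefore, after round $2$, for each process $i$ and each value $v$, process $i$ has received $v$ from its originator (by round $1$) and from every relaying process (by round $2$); in particular $i$ has received $v$ from all $n$ processes, hence from a quorum. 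Thus in $i$'s local state the set $V$ of quorum-received values equals the set $A$ of all received values, so $A - V = \emptyset$. The output predicate --- a nonempty $O \subseteq V$ such that every $w \in O$ clock-dominates at some quorum every $v \in A - O$ --- is then satisfied vacuously by $O = V = A$, which moreover contains $i$'s own initial value, as the one-shot $\LA$ variant additionally requires. Hence every process outputs by the end of round $2$.

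The main thing to be careful about is not the message counting but the interaction between the asynchronous round metric and the FIFO/relay mechanism: I must justify that a relay can only be \emph{sent} in round $0$ or round $1$ (it is triggered by the first reception of a value, and every initial forward arrives by round $1$), so that its delivery is charged to round $2$ at the latest, and that the logical-clock bookkeeping and the FIFO broadcast order do not delay any of these receptions in the good case, where by assumption all messages are eventually delivered. A minor point is to check that the ``$+1$'' from the one-shot-$\LA$ adaptation of line~$17$ (requiring the initial value to belong to the output) is harmless here, since the output set exhibited is the entire value set $A$.
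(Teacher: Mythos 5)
Your proof is correct and follows essentially the same route as the paper's: initial forwards arrive by the end of round $1$, triggering relays that arrive by the end of round $2$, at which point every process has received every value from all $n$ processes, so $A-V=\emptyset$ and the output predicate is vacuously satisfied. Your additional care about when relays are \emph{sent} and about the one-shot adaptation of line~$17$ only makes explicit what the paper's terser argument leaves implicit.
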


    \begin{theorem}
        The SCD-Broadcast protocol in~\cite{imbs2018set} has $\Omega(f)$ worst-case latency.

        \begin{proof}
            We proceed to build an execution that takes at least $f/4$ rounds to complete.
            When we say that a process \emph{crashes} at some point in the execution, the process no longer takes any more steps and no further messages are received from it unless explicitly stated.

            Assume w.l.o.g. that the number of faulty nodes is even.
            Split $F$ into four groups with $f/4$ processes each: $A$ and $B$, $C$ and $C'$.
            In addition, split $P$ into two groups $D$ and $D'$ with $f/2$ and $f/2+1$ processes respectively.
            In round $0$, every process FIFO broadcasts its initial value to everyone.

            [Round 1] At the beginning of the round, a single process $f_1 \in C$ receives and relay every value $v_1,\ldots,v_{f/4}$ (in this order) from processes in $A$.
            All processes in $A$ then crash.
            Similarly, a single process $f_1' \in C'$ receives and relay every value $v_1',\ldots,v_{f/4}'$ (in this order) from processes in $B$, which then crash.

            Subsequently, processes in $D$ and all remaining non-crashed processes in $C$ receive $v_1$ from $f_1$.
            Moreover, processes in $D'$ and all remaining non-crashed processes in $C'$ receive $v_1'$ from $f_1'$.
            Note that no process in $C \cup D$ received $v_1'$ and no process in $C' \cup D'$ received $v_1$.
            Both $f_1$ and $f_1'$ then crash.

            At the end of the round, every initial value from non-crashed processes (sent in round $0$) is received and relayed by non-crashed processes.

            [Round $i$ ($i \geq 2$)] At the beginning of the round, single process $f_i \in C$ receives and relays $v_i,\ldots,v_{f/4}$ from $f_{i-1}$ (resp. $f_i' \in C'$ receives and relays $v_i',\ldots,v_{f/4'}$ from $f_{i-1}$).

            Subsequently, processes in $D$ and all remaining non-crashed processes in $C$ receive $v_i$ from $f_i$ (but not $v_i'$).
            Processes in $D'$ and all remaining non-crashed processes in $C'$ receive $v_i'$ from $f_i'$ (but not $v_i$).
            Both $f_i$ and $f_i'$ then crash.
            Finally, every remaining value sent in the previous round by non-crashed processes are received (and relayed if applicable).

            \myparagraph{Output conditions} We use $C \cup D$ (resp. $C' \cup D'$) to refer to processes in $C \cup D$.
            By construction $|C \cup D| \leq |C' \cup D'| < f+1$.
            When $C \cup D$ receives $v_1$ in round $1$, it gives a clock value of $2$ to $v_1$ (similarly with $C' \cup D'$ and $v_1'$).
            In the end of the round, $C \cup D$ receives other initial values from non-crashed processes, but not $v_1'$, so $v_1'$ is attributed a higher clock value later.
            This ensures that $v_1'$ cannot be in the output without $v_1$, since $C \cup D$ never receives a quorum of forward messages for $v_1$ with clock value smaller than that of $v_1'$.
            But all the forward messages received later for $v_1'$ from $C' \cup D'$ have their clock values smaller for $v_1'$ than for $v_1$, so $v_1$ also cannot be in the output without $v_1'$.

            In addition, $C \cup D$ receives $v_i$ before receiving $v_{i-1}'$, attributing a smaller clock value to $v_i$.
            By the end of round $i$, $C \cup D$ has no quorum for which clock values are smaller for $v_{i-1}'$ than for $v_i$, and thus cannot output $v_{i-1}'$.
            This creates a chain of dependencies where $v_{i-1}$ cannot be in the output without $v_{i-1}'$ (and vice-versa), $v_{i-1}'$ cannot be in the output without $v_{i}$,
            and $v_i$ cannot be in the output because not enough forward messages for $v_i$ are received in round $i$.
            Therefore, in the end of round $i$, $C \cup D$ (and $C' \cup D'$) is unable to output a value.

            The execution described above can be extended for $f/4$ rounds.
        \end{proof}
    \end{theorem}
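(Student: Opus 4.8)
The plan is to exhibit one bad-case execution in which no correct process satisfies the SCD output predicate before round $\Omega(f)$, by using the $f$ faulty processes as message funnels that leak previously unseen values to the two halves of the correct set in a staggered, asymmetric fashion. First I would partition the faulty processes into four blocks $A,B,C,C'$ of size $\approx f/4$ and the $f+1$ correct processes into two blocks $D$ and $D'$ of size $\approx f/2$; I call $C\cup D$ the ``left side'' and $C'\cup D'$ the ``right side''. The sizes are chosen so that each side has strictly fewer than $f+1$ processes, i.e.\ strictly fewer than a quorum; hence a value delivered to only one side is received into the set $A$ of all received values but never by a full quorum, so it is not in the set $V$ of quorum-received values and cannot belong to any legal output $O$.

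Next I would lay out the rounds. In round $0$ every process FIFO-broadcasts its initial value. In round $1$, I funnel all initial values of $A$, in a fixed order $v_1,\dots,v_{f/4}$, into a single relay $f_1\in C$ and then crash all of $A$; symmetrically the values $v_1',\dots,v_{f/4}'$ of $B$ into $f_1'\in C'$. Then I deliver exactly $v_1$ (and none of the primed values) to every non-crashed left-side process and exactly $v_1'$ to every non-crashed right-side process, crash $f_1$ and $f_1'$, and finally let every remaining non-crashed initial value be received and relayed everywhere. Inductively, in round $i\ge 2$ I use the relay $f_{i-1}\in C$ — which in round $i-1$ was handed the tail $v_i,\dots,v_{f/4}$ — to funnel those values into a fresh relay $f_i\in C$ (and symmetrically $f_{i-1}'\to f_i'$), I deliver $v_i$ to the left side and $v_i'$ to the right side \emph{before} that side relays the leftovers of round $i-1$, then crash $f_i,f_i'$ and let the stragglers settle. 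Each round consumes exactly one fresh value per side and there are $\approx f/4$ of them, so the construction extends for $\Omega(f)$ rounds.

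The core of the argument, and the main obstacle, is showing that at the end of each round $i$ no side can output. I would track the logical-clock stamps attached by relays: because $v_i$ reaches the left side strictly before that side relays the round-$(i-1)$ messages, the componentwise clock vector recorded there for $v_i$ is smaller than the one for $v_{i-1}'$, so no left-side quorum certifies that $v_{i-1}'$ dominates $v_i$; hence $v_{i-1}'\notin O$ for any left-side output, and symmetrically $v_{i-1}\notin O$ on the right side. The cross-side comparison makes $v_{i-1}$ and $v_{i-1}'$ mutually inseparable — on each side, the value that side did see has clock vectors undercutting those of the value it did not see — so $v_{i-1}$ cannot go into $O$ without $v_{i-1}'$ and vice versa; meanwhile $v_i$ itself cannot be in $O$ because only one side has seen it, so it is not quorum-certified. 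This forces $O=\emptyset$, i.e.\ no correct process has learned by round $i$. The delicate part is making the FIFO order of the funneled messages and the intra-round delivery order line up so that all of these clock inequalities hold simultaneously for every $i$, and checking that no alternative output set (e.g.\ one including some $v_j$ with small $j$) escapes the dependency chain.
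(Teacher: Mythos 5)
Your construction is essentially the paper's own: the same four-way split of the faulty set into $A,B,C,C'$ of size $f/4$, the same two-way split of the correct set into $D,D'$, the same single-relay funnel chain $f_1\to f_2\to\cdots$ with staggered crashes, and the same clock-stamp dependency cycle ($v_{i-1}$ and $v_{i-1}'$ mutually inseparable, $v_{i-1}'$ requiring the not-yet-quorum-certified $v_i$) yielding $f/4$ rounds. The ``delicate part'' you flag at the end — verifying that the FIFO and intra-round delivery orders make all the clock inequalities hold simultaneously and that no alternative output set escapes — is left at the same level of detail in the paper's proof, so nothing is missing relative to it.
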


\section{Atomic Snapshot Operations}
\label{app:ASO}

    The papers~\cite{faleiro2012podc} and~\cite{imbs2018set} have a long-lived form of the algorithms in Appendix~\ref{app:oneShot}, for which one can use to implement $\AS$.
    In the following, we show that an $\ASO$ operation using~\cite{faleiro2012podc} has constant amortized time complexity, and thus conjecture that it has $O(k)$ time complexity in the worst-case.
    On the other hand,~\cite{imbs2018set}'s $\ASO$ operation latency is $O(n)$ even in fault-free runs.

    \subsection{Atomic Snapshot by Faleiro et al.~\cite{faleiro2012podc}}

    The Generalized Lattice Agreement ($\GLA$) described in Figure~\ref{fig:faleiro-la} splits the roles of the processes into proposers, acceptors and learners.
    For our purpose, we assume that every process performs the three roles.
    In addition, we add Algorithm~\ref{alg:bridgeLA} on top of the $\GLA$ protocol to match the interface used in Algorithm~\ref{alg:GLAtoAS}.

    \begin{figure}[!htp]
    \centering
    \includegraphics[width=\textwidth, page=6, trim = {5em 20em 13em 5em}, clip]{related_work/generalized-podc2012.pdf}
    \caption{LA algorithm as presented in~\cite{faleiro2012podc}.}
    \label{fig:faleiro-la}
\end{figure}

\begin{algorithm}
\begin{smartalgorithmic}[1]

\DistributedObjects
    \State $\GLA$ instance (Figure~\ref{fig:faleiro-la})
\EndDistributedObjects

\Operation{\Propose}{v}
    \State ReceiveValue(v)
    \State wait until $v \sqsubseteq $ LearntValue()
    \State \Return LearntValue()
\EndOperation
\end{smartalgorithmic}
\caption{Bridge protocol for Generalized Lattice Agreement~\cite{faleiro2012podc}.}
\label{alg:bridgeLA}
\end{algorithm}

    \begin{theorem}
        Consider the $\ASO$ protocol built from the composition of Algorithms~\ref{alg:GLAtoAS} and~\ref{alg:bridgeLA}.
        An operation takes at most $16$ rounds to complete if, during its interval, no correct process receives a message from a faulty one.

        \begin{proof}
            A message sent by a correct process is received by every correct process, and if a message sent in round $r$ is received, it must be received in at most round $r+1$ (from the definition of the metric).
            Since no message from faulty processes is received in the interval of the operation, we consider only events performed by correct ones.

            First, we show that once a process sends a proposal (line $24$ in Figure~\ref{fig:faleiro-la}) for a value $v$, all learners learn a value containing $v$ in at most $8$ rounds.

            Let $e_P$ be the event where process $i$ first sends a proposal for $v$, and let $0$ be the round assigned to it.
            By the end of round $1$, every (correct) process will have received $v$ and joined it in \textit{acceptedValue}, so that every \textbf{NACK} reply will now include $v$.
            As a consequence, every value learned from a proposal (or refinement) made after round $1$ must contain $v$.

            Suppose that some process already learned a value containing $v$ by the end of round $2$, then it received \textbf{ACK}s for this value from a majority of processes (which are correct).
            Every learner (thus, every process) receives the same \textbf{ACK}s within one round at most and is able to learn the same value.

            Now, if no process has already learned a value containing $v$, consider the InternalReceive($v$) message which is sent before the proposal.
            By the end of round $1$, every process has received the message and added $v$ to its buffer, and since no process had learned $v$ by the end of round $2$, every process must be proposing (i.e. \textit{status} = \textit{active}).

            Let $V$ be set of all active proposals in the end of round $2$, then by the end of round $3$ every acceptor will have received every value in $V$ and added it to \textit{acceptedValue}.
            So every reply made in round $4$ onward will contain all current values.
            If a process refines its proposal in round $5$, then it must have received at least one reply containing all values for the previous proposal, so by at most round $6$ all acceptors would reply with $ACK$ and all learners would learn a value by at most round $7$.

            Now consider the case where a process $j$ refines its proposal in round $4$, it may happen that the refined proposal still misses a value, in which case $j$ refines again in round $6$ (the latest) and this next proposal is guaranteed to include all values.
            Thus, all processes acknowledge the proposal by at most round $7$ and all learners are able to learn a value by round $8$.

            Let $e_C$ be the application call event received at a process $i$, $e_R$ its return event, and $v$ the value received for the operation.
            If $i$ is already active, it first buffers $v$ and waits until the current active proposal finishes before sending a proposal for $v$.
            Consider the worst case where $e_C$ happens just after $i$ started a new active proposal.
            As previously shown, it takes at most $8$ rounds until $i$ can propose a new value from \textit{bufferedValues} again, and once it proposes $v$ it can take another $8$ rounds at most to learn a value with it.
            In total, from the call event to the return event, there can be at most $16$ rounds.
        \end{proof}
    \end{theorem}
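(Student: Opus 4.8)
The plan is to reduce the whole operation to two applications of a single "propagation" lemma: once any correct process sends a proposal for a value $v$ in Faleiro et al.'s GLA (line $24$ of Figure~\ref{fig:faleiro-la}), every learner learns a value that includes $v$ within $8$ rounds, under the standing assumption that no correct process hears from a faulty one during the interval. Since that assumption lets me discard all events of faulty processes --- their messages are never received, and a message sent by a correct process in round $r$ is received by every correct process by round $r+1$ under the metric --- the analysis runs entirely among correct processes.

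First I would establish the $8$-round propagation lemma. Fix the proposal event of $v$ by process $i$ as round $0$. By the end of round $1$ every correct process has received $v$ --- either directly from the proposal, or from the \textit{InternalReceive}($v$) message that $i$ sends just before proposing --- and has joined it into \textit{acceptedValue}, so every \textbf{NACK} emitted from round $2$ onward carries $v$; hence any value learned from a proposal or refinement issued after round $1$ includes $v$. I then split on whether some process has already learned a value including $v$ by the end of round $2$. If so, that process holds \textbf{ACK}s for it from a majority of (correct) acceptors, and those same \textbf{ACK}s reach every learner within one further round, so all learn by round $3$. Otherwise, by the \textit{InternalReceive} message every correct process has $v$ buffered by the end of round $1$, and since nobody has learned by the end of round $2$ every correct process is then \textit{active}; letting $V$ be the set of active proposals at that instant, by the end of round $3$ every acceptor has all of $V$ in \textit{acceptedValue}, so every reply from round $4$ on carries every circulating value. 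A proposer that refines in round $5$ must have seen a reply containing all values for its prior proposal, hence re-proposes a "full" value, collects a majority of \textbf{ACK}s by round $6$, and all learners learn by round $7$; a proposer that refines in round $4$ may still be missing a value, refines again by round $6$ with a guaranteed-full proposal, is acknowledged by round $7$, and all learners learn by round $8$.

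Then I would close the bound. Let $e_C$ be the application call carrying $v$ at process $i$ and $e_R$ its return. If $i$ is passive at $e_C$ it proposes (a join of its buffer that includes) $v$ promptly and the lemma gives completion within $8$ rounds. The worst case is that $i$ is already \textit{active} at $e_C$: then $v$ is appended to \textit{bufferedValues} and proposed only once the current active proposal completes. Taking the adversarial alignment where $e_C$ lands right after $i$ started that proposal (round $0$), the lemma bounds its completion --- hence $i$'s return to passive --- by round $8$; $i$ then proposes a value that includes $v$, and a second application of the lemma bounds the first learn of a value including $v$, and therefore $e_R$, by round $16$.

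The main obstacle is the internal bookkeeping of the propagation lemma: pinning down exactly when "all acceptors hold all currently circulating values" stabilizes, and arguing that at most two more refinements then suffice, while correctly tracking the \textit{active}/\textit{passive} transitions and the \textit{InternalReceive} pre-send so that a value arriving during an in-progress proposal is provably proposed within the next $8$ rounds. One must also be careful that a learner decides purely by observing a majority of \textbf{ACK}s, so each "one extra round to every learner" step is justified by the reliable (and FIFO) channels among correct processes rather than by any action of the original proposer.
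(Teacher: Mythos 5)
Your proposal is correct and follows essentially the same route as the paper's own proof: the same $8$-round propagation lemma (fixing the proposal event as round $0$, arguing that all \textbf{NACK}s carry $v$ from round $2$ on, splitting on whether someone has learned by round $2$, and tracking the one-or-two refinement cases to rounds $7$ and $8$), applied twice to cover the buffered worst case and yield $16$ rounds. No substantive differences to report.
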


    \begin{corollary}
    Algorithms~\ref{alg:GLAtoAS} and~\ref{alg:bridgeLA} together have an amortized time complexity of $16$ rounds.
\end{corollary}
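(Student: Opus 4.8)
The plan is to adapt the argument that gave the $8$-round amortized bound for Algorithm~\ref{alg:longLA}: combine the preceding theorem — any operation whose interval contains no message received by a correct process from a faulty one completes within $16$ rounds — with the fact that, in an infinite execution, faulty processes can disturb only finitely many operations. Recall that the amortized time complexity~\cite{garg2020amortized} is the average latency over a growing number of (completed) operations in a fixed infinite execution, and that, since each process behaves sequentially, an operation's interval is a contiguous block of events.

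First I would fix an infinite execution with faulty set $F$, $|F|\le f$. By definition a faulty process takes only finitely many steps, hence sends only finitely many messages, and each message is received at most once; therefore there are only finitely many events in which a correct process receives a message from a process of $F$ --- call these the \emph{disturbed events}. Because every process is sequential, at any single event at most one operation per process is pending, so each disturbed event belongs to the interval of at most $n$ operations. Hence the set of \emph{disturbed operations} (those whose interval contains a disturbed event) is finite; let $c$ denote its size, a quantity depending only on the execution.

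Then I would split the ledger. Every non-disturbed operation meets the hypothesis of the previous theorem, so it completes in at most $16$ rounds. Every disturbed operation that we must count is invoked by a correct process (operations of faulty processes, and any operation that never returns, do not contribute to the amortized average), so by {\GLiveness} of $\LA$, and thus of the $\ASO$ built on it, it returns at some event $e_R$ of finite index; since running \IRA{} from the call event over the finitely many events up to $e_R$ assigns a finite round to $e_R$, each disturbed operation has a finite latency, and the sum of these $c$ latencies is a finite value $D$. Consequently, over any prefix of the execution containing $N\ge c$ completed operations, the total latency is at most $16(N-c)+D\le 16N+D$, so the average is at most $16+D/N\to 16$ as $N\to\infty$, which is the claimed amortized complexity.

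The main obstacle is the bookkeeping that makes ``finitely many disturbed operations, each of finite latency'' rigorous: one needs sequentiality to bound by $n$ the operations affected by a single disturbed event, {\GLiveness} to know that disturbed operations actually return, and a small check that \IRA{} assigns a finite round to the return event even when the execution contains holes (a silent stretch merely freezes the round counter, so a finite event index still yields a finite round). A secondary subtlety, already implicit in the notion of amortized complexity, is that incomplete operations --- including those of crashed processes --- are excluded from the average, which is exactly what keeps $D$ finite.
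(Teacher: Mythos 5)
Your proof is correct and follows exactly the argument the paper intends: the paper states this corollary without proof, relying on the preceding $16$-round theorem together with the observation (made when amortized complexity is introduced) that faulty processes take only finitely many steps and hence can disturb only finitely many operations. Your write-up simply makes that implicit amortization bookkeeping explicit, so it matches the paper's approach.
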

    
    \subsection{Atomic Snapshot by Imbs et al.~\cite{imbs2018set}}

    Imbs et al.~\cite{imbs2018set} use operations of the SCD-Broadcast protocol to implement atomic snapshot.
    As such, in the proof for Theorem~\ref{th:goodImbs} we build an execution that takes $\Omega(n)$ rounds for a process to output a value in the SCD-Broadcast protocol, implying that the same time complexity for a snapshot operation.

    Figure~\ref{fig:imbs-as} (extracted from~\cite{imbs2018set}) shows the algorithm for $\MW\MR$ $\ASO$ using SCD-Broadcast.
    The main difference to the $\SW\MR$ implementation is the addition of line $3$, which includes a ``read'' phase before updating the array and thus requires two SCD-Broadcast operations instead of one.
    As we only consider $\SW\MR$ $\ASO$ implementations, we assume that the only operations in the executions are snapshots (which is unchanged and requires a single SCD-Broadcast operation).

    \begin{figure}[!htp]
        \centering
        \includegraphics[page=7, trim = {11em 53em 11em 8em}, clip]{related_work/SCD.pdf}
        \caption{AS algorithm as presented in~\cite{imbs2018set}.}
        \label{fig:imbs-as}
    \end{figure}

    \begin{theorem}
    \label{th:goodImbs}
        A snapshot operation in~\cite{imbs2018set}'s protocol can take $\Omega(n)$ rounds in fault-free runs.

        \begin{proof}
            First consider an execution of SCD-Broadcast with an \emph{even} number of processes.
            We proceed to build an execution where an operation takes $n$ rounds to complete.
            We split the system into two groups $A$ and $B$ with $n/2$ processes each.
            Note that neither $A$ nor $B$ alone form a quorum.
            We also say $A$ or $B$ to refer to all processes in $A$ or $B$.
            In the execution below, every time a process in $A$ (resp. B) replies a value (sends a forward message for it), all the processes in $A$ receive it immediately after (similar for B).

            [Round $0$] A single process $a_0 \in A$ sends a forward message with $v_0^A$ to everyone.

            [Round $1$] At the beginning of the round, $A$ receives $v_0^A$ and forwards the value.
            Subsequently, a process $b_0 \in B$ sends a new forward message for value $v_0^B$, which is received and relayed right away by $B$ (before $v_0^A$).
            At the end of the round, $B$ then receives $v_0^A$ from $A$ and relays it, but although there is a quorum for $v_0^A$, no quorum has each clock assignment for $v_0^A$ smaller then that of $v_0^B$, thus $B$ cannot output $v_0^A$ without $v_0^B$.
            Since there is no quorum of replies for $v_0^B$, $B$ cannot output.

            [Round $2\cdot i$] At the beginning of the round, a new process $a_i \in A$ sends forward with $v_i^A$, received right away (before $v_{i-1}^B$ from $B$) by $A$, which relays it.
            Subsequently, $A$ receives $v_{i-1}^B$ and relays it.
            $A$ is unable to output $v_{i-1}^A$ without $v_{i-1}^B$ since $B$ assigned a smaller clock value to $v_{i-1}^B$, and is unable to output $v_{i-1}^B$ without $v_i^A$ since it assigned a smaller clock value to $v_i^A$, and there is no quorum of replies received for $v_i^A$.

            [Round $2\cdot i + 1$] At the beginning, a process $b_i \in B$ sends forward with $v_i^B$, received right away by $B$ (before $v_i^A$) which relays it.
            Subsequently, $B$ receives $v_i^A$ and the reply for $v_{i-1}^B$ from $A$ in this order.
            But $B$ cannot output $v_{i-1}^B$ without $v_i^A$, since there is no quorum assigning a smaller clock value to $v_{i-1}^B$ than to $v_i^A$.
            But $v_i^A$ cannot be in the output without $v_i^{B}$ either, for which $B$ does not have a quorum of replies.
            $B$ is therefore unable to output.

            Using the steps above we can delay the execution up to $n$ rounds.
            When the number of processes is \emph{odd}, we split the system into $3$ groups: $A$, $B$ and $C$, and proceed in a similar fashion as above for $A$ and $B$, but a new process from $C$ now has initiate a new value in the beginning of every turn in order to delay the execution.
            This construction can delay the execution up to $n/3$ rounds. The execution proceeds as following:

            [Round $0$] A single process $a_0 \in A$ sends a forward message with $v_0^A$ to everyone.
            A single process $c_0 \in C$ sends a forward message with $v_0^C$ to everyone.

            [Round $1$] At the beginning of the round, $A$ receives $v_0^A$ and forwards the value, $C$ receives $v_0^C$ and relays it.
            Subsequently, a process $b_0 \in B$ sends a new forward message for value $v_0^B$, which is received and relayed right away by $B$ (before $v_0^A$ or $v_0^C$).
            Also, another process $c_1 \in C$ sends a forward message with $v_1^A$ to everyone, which $C$ receives and forward immediately after.
            At the end of the round, $B$ receives $v_0^A$ from $A$ and $v_0^C$ from $C$ and relays them, but although there is a quorum for $v_0^A$ and $v_0^C$, no quorum has each clock assignment for $v_0^A$ or $v_0^C$ smaller then that of $v_0^B$, thus $B$ cannot output $v_0^A$ and $v_0^C$ without $v_0^B$.
            Since there is no quorum of replies for $v_0^B$, $B$ cannot output.
            Similarly, $A$ receives $v_0^C$ and $C$ receives $v_0^A$ but they cannot output.

            [Round $2\cdot i$] At the beginning of the round, a new process $a_i \in A$ sends forward with $v_i^A$, received right away (before $v_{i-1}^B$ from $B$ and $v_{2\cdot i - 1}^C$ from $C$) by $A$, which relays it.
            Similarly, a process $c_{2\cdot i} \in C$ sends forward with $v_{2\cdot i}^C$ before $C$ receives $v_{i-1}^B$ from $B$.

            Subsequently, $A$ receives $v_{i-1}^B$ and $v_{2\cdot i - 1}^C$ and relays them.
            $A$ is unable to output $v_{i-1}^A$ without $v_{i-1}^B$ or $v_{2\cdot i-1}^C$ since $B$ assigned a smaller clock value to $v_{i-1}^B$ and $C$ assigned a smaller clock value to $v_{2\cdot i-1}^C$, and is unable to output $v_{i-1}^B$ and $v_{2\cdot i - 1}^C$ without $v_i^A$ since it assigned a smaller clock value to $v_i^A$, and there is no quorum of replies received for $v_i^A$.
            Moreover, $C$ receives $v_{i-1}^B$ from $B$.
            $C$ cannot output $v_{2\cdot i - 2}^C$ without either $v_{i-1}^A$ or $v_{i-1}^B$ because $A$ assigned a smaller value to $v_{i-1}^A$ and $B$ assigned a smaller value to $v_{i-1}^B$.
            But both cannot be output without $v_{2\cdot i -1}^C$, which was assigned a smaller value, and there is no quorum for $v_{2\cdot i -1}^C$ at $C$.

            [Round $2\cdot i + 1$] At the beginning, a process $b_i \in B$ sends a forward message with $v_i^B$, received right away by $B$ (before $v_i^A$ or $v_{2\cdot i}^C$) which relays it.
            A process $c_{2\cdot i + 1}$ sends forward with $v_{2\cdot i + 1}$, before receiving $v_iA$.

            Subsequently, $B$ receives $v_i^A$ and $v_{2\cdot i}^C$ as well as the replies for $v_{i-1}^B$ from $A$ and $C$ in this order.
            But $B$ cannot output $v_{i-1}^B$ without $v_i^A$ or $v_{2\cdot i}^C$, since $A$ and $C$ assigned smaller clock values to $v_i^A$ and $v_{2\cdot i}^C$ respectively.
            But neither $v_i^A$ nor $v_{2\cdot i}^C$ can be in the output without $v_i^{B}$, for which $B$ does not have a quorum of replies.
            Now, $C$ receives $v_i^A$ and both replies for $v_{2\cdot i-1}$ from $B$ and $A$ in this order.
            But $A$ assigned a smaller value to $v_i^A$ and $B$ also to $v_{i-1}^B$,
            and neither can be output without $v_{\cdot i}$, for which $C$ has no quorum.
        \end{proof}
    \end{theorem}

\begin{corollary}
    A snapshot operation in~\cite{imbs2018set}'s protocol can take $\Omega(n)$ rounds in the worst-case.
\end{corollary}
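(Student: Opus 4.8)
The plan is to observe that the $\Omega(n)$ lower bound for the bad-case is inherited directly from the good-case result of Theorem~\ref{th:goodImbs}, because every good-case execution is, by definition, also an admissible bad-case execution. In the bad-case model we are given a set $P$ of at least $f+1$ correct processes and a set $F$ of at most $f$ \emph{potentially} faulty processes; the adversary is permitted, but not required, to crash processes in $F$ or to withhold messages to and from $F$. Consequently, an execution in which no process ever crashes and every message is eventually delivered is a legitimate bad-case execution, and the good-case lower bound transfers to it verbatim.

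Concretely, I would fix an arbitrary partition of the $n$ processes into $P$ and $F$ with $|F|\le f$, and then replay the execution constructed in the proof of Theorem~\ref{th:goodImbs} unchanged: that execution never relies on any process failing (all $n$ processes keep taking steps and all messages arrive), so it satisfies the bad-case assumptions as well. The snapshot operation that the adversary delays for $\Omega(n)$ rounds in that construction is therefore also delayed for $\Omega(n)$ rounds under the bad-case model, which is exactly the claim.

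The only point that needs a moment's care — and it is not really an obstacle — is to confirm that the paper's notion of \emph{bad-case latency} does not implicitly require failures to actually occur; the phrasing ``potentially faulty processes'' in the preambles of Appendices~\ref{app:oneShot} and~\ref{app:ASO} makes clear that it does not. If one nonetheless wanted a version of the bound in which the members of $F$ genuinely crash, it would suffice to have each process in $F$ execute its steps exactly as in the good-case construction and then crash only after the $\Omega(n)$-round snapshot has already been delayed; this late crash changes nothing, since no subsequent message from those processes is needed to complete the delayed operation. Hence the stronger statement holds as well, and the corollary follows.
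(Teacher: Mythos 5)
Your proposal is correct and matches the paper's (implicit) reasoning: the corollary is stated without proof precisely because the good-case execution of Theorem~\ref{th:goodImbs}, in which all processes happen to be correct and all messages are delivered, is also an admissible bad-case execution, so the $\Omega(n)$ bound transfers immediately. Your additional remark about optionally crashing the processes in $F$ after the delay is a harmless strengthening and does not change the argument.
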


\end{document}